\newtheorem{Def}{Definition}
\newtheorem{problem}{Problem}
\newtheorem{corollary}{Corollary}
\newtheorem{lemma}{Lemma}
\newtheorem{conjecture}{Conjecture}
\newtheorem{remark}{Remark}
\newtheorem{theorem}{Theorem}
\begin{document}


\title{Optimizing Cost Hamiltonian Compilation for Max-Cut QAOA on Unweighted Graphs Using Global Controls and Qubit Bit Flips}


\author{Saber Dinpazhouh}
\email{sd118@rice.edu}
\affiliation{Computational and Applied Mathematics and Operations Research, Rice University, Houston, Texas, USA}

\author{Illya V. Hicks}
\email{ivhicks@rice.edu}
\affiliation{Computational and Applied Mathematics and Operations Research, Rice University, Houston, Texas, USA}


\date{\today}
\begin{abstract}
We study a cost Hamiltonian compilation problem for the quantum approximate optimization algorithm (QAOA) applied to the Max-Cut problem, focusing on trapped-ion quantum computers. Instead of standard compilation with CNOT and \(R_z\) gates, we employ global coupling operations and single-qubit bit flips. Prior work by Rajakumar et al.\ established that such a compilation is always possible.

Minimizing operational error requires short operation sequences. The problem reduces to a low-rank semi-discrete decomposition of the graph’s adjacency matrix, where the minimum achievable rank, the \textit{graph coupling number} \(gc(G)\), represents the number of global control layers. Rajakumar et al.\ introduced the \textit{Union of Stars} construction, proving \(gc(G) \leq 3n - 2\) for unweighted graphs with \(n\) vertices, and gave an \(O(m)\)-rank construction for weighted graphs.

We concentrate on unweighted graphs. We derive structural properties of the compilation problem and show the Union of Stars method is order-optimal by proving a lower bound of \(gc(G) \geq n - 1\) for a family of graphs. We also improve the general upper bound to \(2.5n + 2\). For particular graph families---cliques, perfect matchings, paths, and cycles---we provide sharper bounds. Further, we reveal a link between the problem and Hadamard matrix theory. Finally, we introduce a compact mixed-integer programming (MIP) formulation that outperforms the previously studied exponential-size MIP.

\keywords{quantum approximate optimization algorithm, Max-Cut problem, trapped-ion quantum computing, graph coupling number, mixed-integer programming, Hadamard matrices}
\end{abstract}


\maketitle

\section{Introduction \label{intro}}

Two primary limitations of current quantum computers are their limited number of qubits and the inherent noise in their operations. Although quantum hardware continues to improve in both qubit count and gate fidelity, the field remains far from realizing fully fault-tolerant quantum error correction~\cite{rajakumar2022generating}. John Preskill referred to this stage as the \emph{noisy intermediate-scale quantum} (NISQ) era, highlighting these limitations~\cite{Preskill2018quantumcomputingin}. 

Despite these constraints, several promising applications have been proposed for NISQ-era hardware, including quantum chemistry, the variational quantum eigensolver (VQE), and quantum optimization~\cite{Preskill2018quantumcomputingin}.

In the domain of combinatorial optimization, the quantum approximate optimization algorithm (QAOA) has emerged as a hybrid quantum-classical method capable of generating high-quality solutions to quadratic unconstrained binary optimization (QUBO) problems~\cite{farhi2014quantum}. Importantly, a broad class of combinatorial problems can be reformulated as QUBO instances~\cite{du2022solving,anthony2017quadratic,glover2022quantum}. These include classical problems such as Max-Cut, graph coloring, and set packing, as well as more general formulations like 0/1 integer programming, quadratic assignment, and constraint satisfaction problems~\cite{kochenberger2006unified}. As a result, QAOA is widely viewed as a general-purpose framework for tackling combinatorial optimization problems.

Among these, the Max-Cut problem has received particular attention as a benchmark for quantum optimization~\cite{moondra2024promise}. Given a weighted graph \( G = (V, E, z) \), Max-Cut seeks a partition of the vertex set \( V \) into two subsets \( S \) and \( V \setminus S \) that maximizes the total weight of edges crossing the partition. Max-Cut is one of Karp’s original 21 NP-hard problems~\cite{karp2009reducibility}, and is therefore believed to lack an efficient algorithm unless \( \text{P} = \text{NP} \).

To approximate Max-Cut using QAOA, each vertex is mapped to a qubit, and the computational basis states of the qubits encode the two vertex subsets~\cite{farhi2020quantum,bravyi2020obstacles}. The cut value is embedded in the energy of a cost Hamiltonian consisting of two-qubit terms for each edge \( e \in E \), where the optimal cut corresponds to the ground state of this Hamiltonian~\cite{rajakumar2022generating}.

Max-Cut QAOA was recently demonstrated on trapped-ion hardware using systems of tens of qubits~\cite{pagano2020quantum}. However, in these demonstrations, the problem graph matched the native hardware coupling graph~\cite{rajakumar2022generating}. In practice, this alignment is uncommon across most NISQ platforms, where arbitrary problem graphs must be compiled into the hardware's limited connectivity. This compilation step often results in significant circuit overhead, which in turn degrades the algorithm’s performance due to increased noise.

To implement Max-Cut QAOA for arbitrary graphs, the required cost Hamiltonian must be compiled into the available native gate set. The standard approach for this compilation involves using two controlled-NOT (CNOT) gates and one \( R_z \) rotation per edge of the input graph. Rajakumar et al.~\cite{rajakumar2022generating} proposed an alternative compilation method suitable for quantum spin systems, particularly trapped-ion platforms. Their method constructs arbitrary two-qubit interactions using a sequence of global entangling operations — namely, Mølmer–Sørensen (MS) gates — and individual qubit bit flips.

The goal of this alternative approach is to generate circuits with fewer layers, thereby reducing cumulative noise and improving QAOA performance. Figure~\ref{fig:standard-compile} and Figure~\ref{fig:new-compile} illustrate the standard and MS-based compilations, respectively, for a star graph with four vertices. It is worth noting that multiple MS-based compilation sequences can realize the same cost Hamiltonian. Among all such sequences, our focus is on minimizing the number of MS gates, which we treat as a proxy for circuit depth and noise accumulation.

While our focus is on trapped-ion systems, the compilation strategy is general and applies to other quantum architectures with long-range interactions, such as nuclear magnetic resonance (NMR)~\cite{vandersypen2001experimental}, Rydberg atom arrays~\cite{levine2018high,graham2019rydberg,henriet2020quantum}, and superconducting qubits coupled via a common bus resonator~\cite{xu2018emulating,onodera2020quantum}. Beyond improving Max-Cut QAOA implementations, such compilations may also facilitate more efficient quantum simulations~\cite{rajakumar2022generating}.

We investigate this compilation problem from both theoretical and computational perspectives. Focusing on unweighted graphs, we develop new bounds and algorithms that improve upon previous results.

The remainder of the paper is organized as follows. Section~\ref{sec:prelim} introduces preliminaries and establishes notation. Section~\ref{sec:problemdes} presents both the physical and mathematical formulations of the problem. The mathematical formulation is self-contained and may be read independently of the physical motivation.

Section ~\ref{sec:methods} contains our main results. Subsection~\ref{sec:meanrow} provides structural insights into the decomposition matrix \( \mathbf{P} \), while Subsection~\ref{sec:charac} characterizes graphs with small \( gc \) values. Subsection~\ref{sec:fundam} presents several foundational results.

In Subsection~\ref{sec:lower}, we introduce a novel lower-bound technique based on linear algebra, which is then applied in Subsection~\ref{sec:lower_specific} to specific families of graphs. Subsection~\ref{sec:upperspec} develops tight upper bounds for various graph classes. Notably, we show that for perfect matching graphs, the gap between our upper and lower bounds is exactly one. This result leads to an intriguing connection to the Hadamard matrix conjecture, discussed in Subsection~\ref{sec:hadamard}, where we show that resolving this gap is equivalent to resolving the conjecture.

In Subsection~\ref{sec:generupper}, we introduce a new compilation strategy—called the \emph{Union of Double-Stars}—that improves the general upper bound on \( gc(G) \) from \( 3n - 2 \) to \( 2.5n + 2 \).

Finally, in Subsection~\ref{sec:comp}, we propose a compact mixed-integer programming formulation (CMIPGC) and compare it numerically with the exponential-size MIP formulation of~\cite{rajakumar2022generating} in Section~\ref{sec:exp}. Our formulation uses theoretical upper bounds as warm starts and incorporates lower bound as a cutting plane, thereby tightly integrating theory and computation.

\begin{figure}
    \centering
    \includegraphics[width=\linewidth]{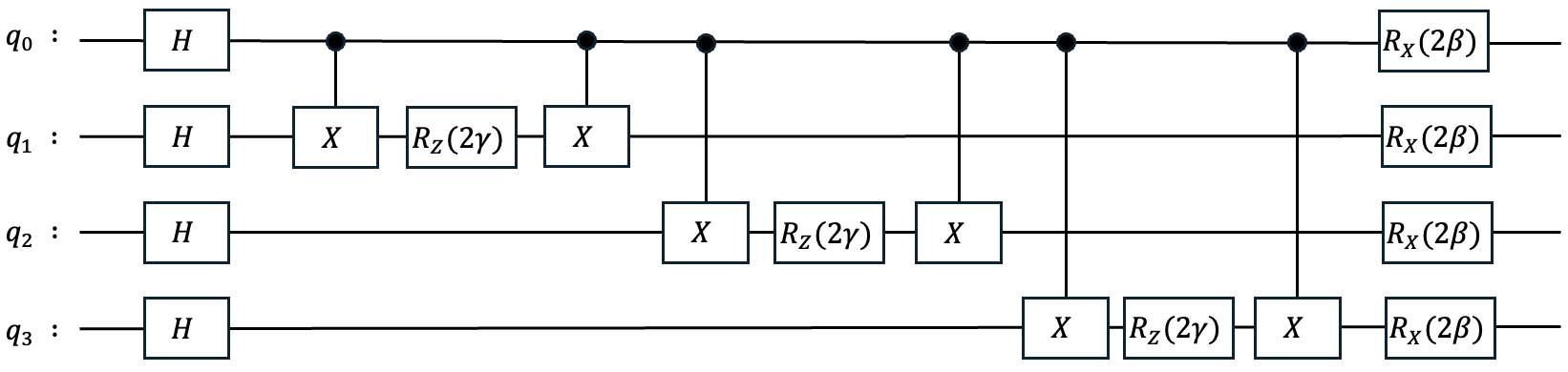}
    \caption{Standard QAOA Max-Cut compilation for a star graph on 4 vertices, where one central vertex is connected to the remaining three.}
    \label{fig:standard-compile}
\end{figure}

\begin{figure}
    \centering
    \includegraphics[width=\linewidth]{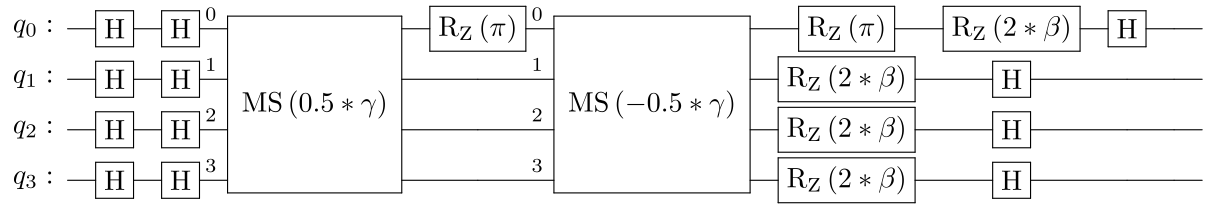}
    \caption{Example of an MS-based compilation for a star graph on 4 vertices, adapted from~\cite{rajakumar2022generating}.}
    \label{fig:new-compile}
\end{figure}

\section{Preliminaries \label{sec:prelim}}

We use \( \mathbb{N} \) to denote the set of natural numbers \( \{1, 2, 3, \dots\} \), and write \( [n] \) for the set \( \{1, 2, \dots, n\} \). Given a subset \( S \subseteq [n] \), we define the vector \( r_S \in \{-1, 1\}^n \) by setting
\[
(r_S)_i = \begin{cases}
-1 & \text{if } i \in S, \\
\phantom{-}1 & \text{otherwise}.
\end{cases}
\]
For instance, \( r_{\{1,2,3\}} = [-1, -1, -1, 1, 1]^\top \) for \( n = 5 \), and \( r_{\{1,4,6\}} = [-1, 1, 1, -1, 1, -1, 1]^\top \) for \( n = 7 \). We assume throughout that indexing begins at 1.

\medskip

Matrices are denoted with bold capital letters (e.g., \( \mathbf{A} \), \( \mathbf{P} \)). The all-ones vector is denoted \( \mathds{1} \), and the all-ones matrix by \( \mathbf{1} \). For a given matrix \( \mathbf{B} \), we write \( \mathbf{B}_i \) to refer to its \( i \)-th column and \( b_i \) for its \( i \)-th row.

\medskip

 Throughout this paper, we assume familiarity with basic notions from graph theory, including vertex degree, adjacency, complement graph, and complete graph. A graph \( G = (V, E) \) is called \emph{simple} if it contains no loops or multiple edges. For any subset \( S \subseteq V \), the \emph{induced subgraph} \( G[S] \) consists of the vertices in \( S \) and all edges in \( E \) with both endpoints in \( S \). A \emph{complete graph} (or \emph{clique}) on vertex set \( V \) is denoted \( K_V \), or simply \( K_{|V|} \). Given \( S \subseteq V \), we write \( \delta(S) \subseteq E \) for the set of \emph{cut edges}—those with exactly one endpoint in \( S \)—and \( \gamma(S) \subseteq E \) for the set of edges with both endpoints in \( S \).

A graph is \emph{bipartite} if its vertex set can be partitioned into disjoint subsets \( A \) and \( B \) such that every edge connects a vertex in \( A \) to one in \( B \). A \emph{biclique}, or \emph{complete bipartite graph}, is denoted \( K_{A,B} \) or \( K_{|A|, |B|} \). A \emph{star graph} is a biclique in which one part has size one, e.g., \( K_{1,4} \). A \emph{perfect matching} is a graph on an even number of vertices in which every vertex has degree one; we denote it by \( PM_q \) for a matching with \( q \) edges.

The \emph{path graph} \( P_n \) consists of vertices \( v_1, \dots, v_n \) with edges \( \{\{v_i, v_{i+1}\} : i \in [n-1]\} \). The \emph{cycle graph} \( C_n \) augments this path by including the edge \( \{v_n, v_1\} \). A disjoint union of cliques on vertex sets \( S_1, \dots, S_r \) is denoted \( K_{S_1} \cup \dots \cup K_{S_r} \), or equivalently \( K_{|S_1|} \cup \dots \cup K_{|S_r|} \).

\medskip
For standard graph theory terminology, we refer the reader to Bondy and Murty~\cite{bondy2008graph}. For linear algebraic notation and conventions, see Strang~\cite{strang2023linear}.

\section{Problem Description \label{sec:problemdes}}

\subsection{Physical Description \label{sec:physdes}}

Many quantum systems exhibit intrinsic (native) couplings between qubits. These spin–spin couplings are physically modeled by the zero-field Ising Hamiltonian:
\begin{equation}
    H_{\text{Ising}} = \sum_{i=1}^{n - 1} \sum_{j=i + 1}^{n} J_{i, j} \sigma_i^z \sigma_j^z,
    \label{eq:nativecoupling}
\end{equation}
where \( \sigma_j^z \) denotes the Pauli-Z operator acting on the \( j \)-th ion (and no other qubit). The term \( \sigma_i^z \sigma_j^z \) represents a ZZ coupling between qubits \( i \) and \( j \), and the coupling strengths \( J_{i, j} \in \mathbb{R} \) depend on the specific quantum hardware~\cite{rajakumar2022generating}.

\begin{Def}
A \textbf{coupling graph} \( G = (V, E, z) \), with \( |V| = n \), is a weighted graph abstracting a ZZ coupling operation on an \( n \)-qubit system, where each vertex \( v \in V \) represents a qubit, and the weight function \( z \) encodes the strength of ZZ couplings between qubits~\cite{rajakumar2022generating}.
\end{Def}

Under this abstraction, a general ZZ coupling operator
\[
C = \sum_{i=1}^{n - 1} \sum_{j=i + 1}^{n} a_{i,j} \sigma_i^z \sigma_j^z
\]
has a coupling graph whose adjacency matrix is symmetric, with entries \( a_{i,j} = a_{j,i} \in \mathbb{R} \). For example, the native coupling graph has an adjacency matrix \( \mathbf{J} \) as defined in Equation~\eqref{eq:nativecoupling}.

Rajakumar et al.~\cite{rajakumar2022generating} showed that an arbitrary ZZ coupling operator \( C \) can be implemented using a sequence of just two types of operations:
\begin{enumerate}
    \item \textbf{Global coupling} using the Mølmer–Sørensen interaction, which implements a ZZ coupling over all qubit pairs with a uniform strength \( W \in \mathbb{R} \), i.e.,
    \[
    MS = \sum_{i=1}^{n - 1} \sum_{j=i + 1}^{n} W \sigma_i^z \sigma_j^z.
    \]
    \item \textbf{Single-qubit bit flips}, implemented by applying the operator \( X_j = -i \sigma_j^x \), where \( \sigma_j^x \) is the Pauli-X operator on qubit \( j \).
\end{enumerate}

These operations interact through the identity:
\[
X_j^\dagger \sigma_i^z \sigma_j^z X_j = -\sigma_i^z \sigma_j^z.
\]
Thus, surrounding a global coupling operation with bit flips on qubit \( j \) effectively flips the sign of all ZZ terms involving qubit \( j \). By combining multiple such bit flips, one can control the sign pattern of couplings in the complete graph \( K_n \).

Let us denote by \( \mathbf{P} \in \{\pm 1\}^{k \times n} \) the matrix whose \( (p, j) \)-th entry is \(-1\) if a bit flip is applied to qubit \( j \) in the \( p \)-th global coupling operation, and \( +1 \) otherwise. Then, applying \( k \) global couplings, each with weight \( W_p \), results in the effective coupling operator:
\[
C = \sum_{p=1}^{k} \sum_{i=1}^{n - 1} \sum_{j=i + 1}^{n} W_p \mathbf{P}_{p,i} \mathbf{P}_{p,j} \sigma_i^z \sigma_j^z.
\]
Our goal is to synthesize a ZZ coupling operator whose coupling graph matches a given target graph with adjacency matrix \( \mathbf{A} \)~\cite{rajakumar2022generating}. This requires:
\[
\sum_{p=1}^{k} \sum_{i=1}^{n - 1} \sum_{j=i + 1}^{n} W_p \mathbf{P}_{p,i} \mathbf{P}_{p,j} \sigma_i^z \sigma_j^z = \sum_{i=1}^{n - 1} \sum_{j=i + 1}^{n} \mathbf{A}_{i,j} \sigma_i^z \sigma_j^z.
\]
Equating the coefficients of the ZZ terms on both sides yields \( \binom{n}{2} \) constraints:
\begin{equation}
    \mathbf{A}_{i,j} = \sum_{p = 1}^{k} W_p \mathbf{P}_{p,i} \mathbf{P}_{p,j}, \quad \forall\, 1 \le i < j \le n.
    \label{eq:probdesc}
\end{equation}

The overarching goal is to minimize the number of Mølmer–Sørensen gates, \( k \), among all circuits consisting solely of global \( ZZ \) couplings and local bit flips, such that the resulting operator implements the desired coupling graph under ideal (noiseless) conditions. A representative example of such a circuit is shown in Figure~\ref{fig:new-compile}.

\subsection{Mathematical Description \label{sec:mathdes}}

We now formalize the problem introduced in Equation~\eqref{eq:probdesc}. Let \( G = (V, E) \) be a simple, undirected, and unweighted graph with \( |V| = n \), and let \( \mathbf{A} \in \{0, 1\}^{n \times n} \) denote its adjacency matrix. Since \( G \) is undirected and loopless, \( \mathbf{A} \) is symmetric with zero diagonal entries.

We seek a matrix \( \mathbf{P} \in \{\pm 1\}^{k \times n} \) and a diagonal matrix \( \mathbf{W} \in \mathbb{R}^{k \times k} \) such that
\[
\mathbf{A} = \mathbf{P}^\top \mathbf{W} \mathbf{P} \odot \mathbf{J},
\]
where \( \odot \) denotes the Hadamard (element-wise) product and \( \mathbf{J} = \mathbf{1} - \mathbf{I} \) is the adjacency matrix of the complete graph \( K_n \).

The objective is to minimize \( k \), the number of Mølmer–Sørensen layers, or equivalently, the number of nonzero weights in \( \mathbf{W} \). We now formally state the Graph Coupling Problem.

\begin{problem}[Graph Coupling Problem]
\label{prob:gc}
\hfill\\
\textbf{Input:} A simple, undirected, unweighted graph \( G = (V, E) \) with adjacency matrix \( \mathbf{A} \in \{0, 1\}^{n \times n} \).

\[
\begin{aligned}
    \min_{\mathbf{P}, \mathbf{W}} \quad & \|\mathbf{W}\|_0 \\
    \text{subject to} \quad & \mathbf{A} = \mathbf{P}^\top \mathbf{W} \mathbf{P} \odot \mathbf{J}, \\
    & \mathbf{W} = \operatorname{diag}(w) \in \mathbb{R}^{k \times k}, \\
    & \mathbf{P} \in \{\pm 1\}^{k \times n}, \\
    & \mathbf{J} = \mathbf{1} - \mathbf{I}.
\end{aligned}
\]
\end{problem}

Rajakumar et al.~\cite{rajakumar2022generating} denoted the minimum value of this problem by \( gc(G) \), referring to it as the \emph{graph coupling number} of \( G \).

\section{Methods \label{sec:methods}}
In this section, we examine the theoretical foundations of the Graph Coupling Problem and subsequently introduce a compact mixed-integer programming formulation. Our discussion begins by considering two equivalent reformulations of the problem.

\subsection{Two Equivalent Reformulations of the Graph Coupling Problem \label{sec:meanrow}}

The Graph Coupling Problem was originally posed as a semi-discrete matrix decomposition. Here, we present two alternative but equivalent formulations that provide additional insight into the structure of the problem.

\medskip
\noindent
\textbf{Spin Biclique Decomposition.}  
The feasibility condition
\[
\mathbf{A} = \mathbf{P}^\top \mathbf{W} \mathbf{P} \odot \mathbf{J}
\]
can be expanded as
\begin{equation}
\mathbf{A} = \sum_{i=1}^{k} \mathbf{W}_{ii} \, p_i^\top p_i \odot \mathbf{J},
\label{eq:sb-decomp}
\end{equation}

where \( p_i \in \{\pm 1\}^n \) is the \( i \)-th row of \( \mathbf{P} \). We show that each matrix \( p_i^\top p_i \odot \mathbf{J} \) is the adjacency matrix of a special class of graphs, which we call \emph{spin bicliques}. Thus, the Graph Coupling Problem is equivalent to expressing the input graph as a weighted sum of a minimum number of spin bicliques.

\begin{Def}
Let \( G = (V, E, z) \) be a complete undirected graph with weight function \( z : E \to \{-1, 1\} \). We say \( G \) is a \textbf{spin biclique} if there exists a cut \( \delta(S) \), for some \( S \subseteq V \), such that
\[
z(e) = \begin{cases}
-1 & \text{if } e \in \delta(S), \\
\phantom{-}1 & \text{otherwise}.
\end{cases}
\]
We denote such a graph by \( SB_{|S|, |V \setminus S|} \), or by \( SB_{S, V \setminus S} \) when the bipartition is to be made explicit.
\end{Def}

We now formalize the connection between spin bicliques and vectors in \( \{\pm 1\}^n \).

\begin{lemma}
\label{thm:rowsb}
Let \( p \in \{\pm 1\}^n \). Then:
\begin{enumerate}
    \item \( p p^\top \odot \mathbf{J} \) is the adjacency matrix of a spin biclique.
    \item If \( \mathbf{A} \) is the adjacency matrix of a spin biclique, then there exist exactly two vectors for \( p \) satisfying \( \mathbf{A} = p p^\top \odot \mathbf{J} \), and they are negatives of each other.
\end{enumerate}
\end{lemma}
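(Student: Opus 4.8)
The plan is to unwind the definition of a spin biclique and read off both statements from the behaviour of the rank-one sign matrix $p p^\top$. First I would fix $p \in \{\pm 1\}^n$ and set $S = \{ i : p_i = -1 \}$. Then for $i \neq j$, the entry $(p p^\top \odot \mathbf{J})_{i,j} = p_i p_j$ equals $-1$ exactly when one of $p_i, p_j$ is $-1$ and the other is $+1$, i.e.\ exactly when the edge $\{i,j\}$ has exactly one endpoint in $S$, and equals $+1$ otherwise; the diagonal is killed by $\mathbf{J}$. This is precisely the weight pattern in the definition of $SB_{S, V \setminus S}$, so $p p^\top \odot \mathbf{J}$ is the adjacency matrix of the spin biclique associated with the cut $\delta(S)$, proving part (1).

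For part (2), suppose $\mathbf{A}$ is the adjacency matrix of a spin biclique, so $\mathbf{A}$ arises from some cut $\delta(S)$. By part (1), $q := r_S$ (in the paper's notation, the $\{\pm 1\}$ vector with $-1$ exactly on $S$) satisfies $\mathbf{A} = q q^\top \odot \mathbf{J}$, and so does $-q = r_{V \setminus S}$, since $(-q)(-q)^\top = q q^\top$; these two are distinct because $n \geq 1$. It remains to show there are no others. Suppose $p \in \{\pm 1\}^n$ also satisfies $p p^\top \odot \mathbf{J} = q q^\top \odot \mathbf{J}$. Comparing off-diagonal entries gives $p_i p_j = q_i q_j$ for all $i \neq j$. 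Fix index $1$: either $p_1 = q_1$, in which case $p_j = p_1 p_j / p_1 \cdot$ — more cleanly, $p_j = p_1 (p_1 p_j) = p_1 (q_1 q_j) = (p_1 q_1) q_j = q_j$ for all $j \neq 1$, hence $p = q$; or $p_1 = -q_1$, and the same computation gives $p_j = -q_j$ for all $j \neq 1$, hence $p = -q$. So the only two solutions are $q$ and $-q$, which completes part (2).

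I do not anticipate a genuine obstacle here; the statement is essentially a rigidity property of rank-one $\pm 1$ matrices (a sign vector is determined up to global sign by its pairwise products), and the only mild care needed is the bookkeeping that the Hadamard product with $\mathbf{J}$ discards exactly the diagonal and nothing else, so that no information about $p_i^2 = 1$ is available and the global sign ambiguity genuinely cannot be resolved. One could phrase the uniqueness argument slightly more symmetrically by noting that the matrix $(p_i p_j)_{i,j}$ together with $1$'s on the diagonal is exactly $p p^\top$, whose image is the line $\mathbb{R} p$, so $p$ is determined up to scalar, and the scalar must be $\pm 1$ to stay in $\{\pm 1\}^n$; but the elementary fix-an-index argument above is the cleanest to write out and avoids having to re-insert the diagonal.
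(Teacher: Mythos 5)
Your proof is correct and follows essentially the same direct-verification approach as the paper, just working entrywise with $S=\{i: p_i=-1\}$ rather than with the sorted block form. In fact your fix-an-index argument for uniqueness (deducing $p=\pm q$ from $p_ip_j=q_iq_j$ for all $i\neq j$) actually supplies the detail that the paper's proof only asserts with ``no other sign pattern yields the same adjacency matrix.''
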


\begin{proof}
1. Let \( p = [\mathds{1}_q \;\; (-\mathds{1})_{n-q}]^\top \). Then
\[
p p^\top \odot \mathbf{J} =
\begin{bmatrix}
\mathds{1}_q \\
-\mathds{1}_{n-q}
\end{bmatrix}
\begin{bmatrix}
\mathds{1}_q^\top & -\mathds{1}_{n-q}^\top
\end{bmatrix}
\odot \mathbf{J} =
\begin{bmatrix}
\mathbf{1} & \mathbf{-1} \\
\mathbf{-1} & \mathbf{1}
\end{bmatrix}
\odot \mathbf{J},
\]
which is the adjacency matrix of \( SB_{q, n - q} \).

2. Conversely, let \( \mathbf{A} \) be the adjacency matrix of a spin biclique defined by a cut \( S \subseteq V \), and assume the vertices in \( S \) appear first. Then
\[
\mathbf{A} = 
\begin{bmatrix}
\mathbf{1} & \mathbf{-1} \\
\mathbf{-1} & \mathbf{1}
\end{bmatrix}
\odot \mathbf{J} = p p^\top \odot \mathbf{J},
\]
for \( \mathbf{p} = [\mathds{1}_{|S|} \;\; -\mathds{1}_{n - |S|}] \). The vector \( -p \) produces the same result, and no other sign pattern yields the same adjacency matrix.
\end{proof}

\begin{remark}
Lemma~\ref{thm:rowsb} provides a bijection (up to sign) between vectors in \( \{\pm 1\}^n \) and adjacency matrices of spin bicliques via the map \( p \mapsto p p^\top \odot \mathbf{J} \).
\end{remark}

This correspondence allows us to view the rows of \( \mathbf{P} \) as encoding spin bicliques. Consequently, the Graph Coupling Problem reduces to selecting a minimal-weighted subset of such bicliques that sum to the input graph.

\medskip
\noindent
\textbf{Column-Wise Dot Product View.}
An alternative reformulation arises by expressing the feasibility condition in terms of the columns of \( \mathbf{P} \). The expression
\[
\mathbf{A} = \mathbf{P}^\top \mathbf{W} \mathbf{P} \odot \mathbf{J}
\]
is equivalent to

\begin{equation}
\mathbf{A}_{i,j} = (\mathbf{P}_i \odot \mathbf{P}_j)^\top w, \quad \forall\, i < j,
\label{eq:column-wise-dot}
\end{equation}

where \( \mathbf{P}_i \) is the \( i \)-th column of \( \mathbf{P} \), and \( w \) is the vector of diagonal entries of \( \mathbf{W} \).

\begin{remark}
Let \( (\mathbf{P}, \mathbf{W}) \) be a feasible solution for a given graph \(G\). Suppose the disjoint sets of vertices \( A, B \subseteq V(G) \) have identical columns \( \mathbf{a}, \mathbf{b} \in \{\pm 1\}^k \), respectively, in \( \mathbf{P} \). Then every edge \( \{x, y\} \) with \( x \in A, y \in B \) has weight \( w^\top (a \odot b) \), and every edge within \( A \) has weight \( w^\top (a \odot a) = w^\top \mathds{1} = \operatorname{tr}(\mathbf{W}) \). This reveals that vertices sharing a column class behave uniformly with respect to all other classes—a structural property that we will exploit in our novel construction, Union of Double-Stars, in Section~\ref{sec:generupper}.
\label{cor:colclass}
\end{remark}

\subsection{Characterization of Graphs with Small \texorpdfstring{$gc$}{gc} \label{sec:charac}}

Next, we characterize all graphs with \( gc \) equal to 1, 2, and 3.

We begin with the case \( gc(G) = 1 \), showing that only complete or empty graphs satisfy this condition.

\begin{theorem}
Let \( G \) be a simple unweighted graph. Then
\[
gc(G) = 1 \iff \exists\, n \in \mathbb{N} \text{ such that } G = K_n \text{ or } G = \overline{K_n}.
\]
\label{thm:gc=1}
\end{theorem}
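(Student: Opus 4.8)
The plan is to reduce the condition $gc(G)=1$ to a statement about a \emph{single} spin biclique and then argue by inspecting the entries of the resulting matrix. First I would note that $\|\mathbf{W}\|_0 = 1$ means exactly one diagonal entry of $\mathbf{W}$ is nonzero; after deleting the zero-weight rows of $\mathbf{P}$, this is equivalent to the existence of a vector $p \in \{\pm 1\}^n$ and a scalar $w \in \mathbb{R}\setminus\{0\}$ with $\mathbf{A} = w\, p p^\top \odot \mathbf{J}$, while $\|\mathbf{W}\|_0 = 0$ corresponds exactly to $\mathbf{A}=\mathbf{0}$. By Lemma~\ref{thm:rowsb}, $p p^\top \odot \mathbf{J}$ is the adjacency matrix of a spin biclique $SB_{q,n-q}$, where $q=|\{i: p_i=1\}|$: its off-diagonal entry in position $(a,b)$ is $+1$ when $a$ and $b$ lie in the same part and $-1$ otherwise, with zero diagonal.

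For the $(\Leftarrow)$ direction I would simply exhibit the decomposition. If $G=K_n$ then $\mathbf{A}=\mathbf{J} = 1\cdot(\mathds{1}\,\mathds{1}^\top \odot \mathbf{J})$, so $gc(K_n)\le 1$, and $gc(K_n)\ge 1$ since $\mathbf{A}\neq\mathbf{0}$ for $n\ge 2$ (the case $n=1$ being degenerate, with $K_1=\overline{K_1}$). If $G=\overline{K_n}$ then $\mathbf{A}=\mathbf{0}$, realized by a single zero-weight layer, so $gc(\overline{K_n})\le 1$; here I would also be explicit about the convention placing $\overline{K_n}$ in this family (the paper's circuits contain at least one global layer, so $gc(\overline{K_n})=1$; in any case $gc(\overline{K_n})\le 1$).

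For the $(\Rightarrow)$ direction, suppose $\mathbf{A}=w\,p p^\top \odot\mathbf{J}$ with $w\neq 0$ (the sub-case $\mathbf{A}=\mathbf{0}$ is immediate and gives $G=\overline{K_n}$). Every off-diagonal entry of $\mathbf{A}$ equals $+w$ or $-w$. Since $\mathbf{A}\in\{0,1\}^{n\times n}$, if both a $+w$ and a $-w$ occur off the diagonal we would need $\{w,-w\}\subseteq\{0,1\}$, forcing $w=0$, a contradiction; hence all off-diagonal entries of $p p^\top\odot\mathbf{J}$ carry a single sign. If they are all $+1$---which happens exactly when one part of the biclique is empty, i.e.\ $q\in\{0,n\}$---then $\mathbf{A}=w\mathbf{J}$ with $w\in\{0,1\}$, forcing $w=1$ and $\mathbf{A}=\mathbf{J}$, so $G=K_n$. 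If they are all $-1$, then no two vertices share a part, which is impossible for $n\ge 3$; for $n\le 2$ a direct check shows the only $\{0,1\}$-valued outcomes are the adjacency matrices of $K_n$ and $\overline{K_n}$. Thus in every case $G\in\{K_n,\overline{K_n}\}$.

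I do not expect a genuine obstacle here: the essential point is the observation that a single real-weighted spin biclique is $\{0,1\}$-valued only when it is degenerate (one empty part), in which case it is a scalar multiple of $\mathbf{J}$. The only care required is the bookkeeping of the empty-part and small-$n$ corner cases, and being explicit about the convention under which $\overline{K_n}$ belongs to this family.
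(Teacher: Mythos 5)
Your proof is correct and follows essentially the same route as the paper's: reduce $gc(G)=1$ to a single real-weighted spin biclique and observe that its $\{0,1\}$-valuedness forces either the zero matrix or $w\mathbf{J}$ with $w=1$. You simply spell out the sign/corner-case bookkeeping (both signs occurring, the all-$(-1)$ case for small $n$, the empty-graph convention) that the paper leaves implicit.
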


\begin{proof}
The condition \( gc(G) = 1 \) implies that \( G \) is a scalar multiple of a single spin biclique.

Since \( G \) is unweighted, it cannot contain edges with weight \(-1\). Therefore, the only possibilities are:
\begin{enumerate}
    \item The spin biclique has weight 0, yielding the empty graph \( \overline{K_n} \).
    \item The spin biclique is \( SB_{n,0} \) with weight 1, yielding the complete graph \( K_n \).
\end{enumerate}
\end{proof}

Next, we characterize graphs with \( gc(G) = 2 \) and \( gc(G) = 3 \). The proofs of the following two theorems are deferred to Appendix~\ref{sec:appen-proofs}.

\begin{theorem}
Let \( G \) be a simple unweighted graph. Then
\[
gc(G) = 2 \iff \exists\, a, b \in \mathbb{N},\, (a, b) \ne (1, 1) \text{ such that } G = K_a \cup K_b \text{ or } G = K_{a,b}.
\]
\label{thm:gc=2}
\end{theorem}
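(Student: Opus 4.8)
The plan is to prove both directions of the equivalence using the spin biclique reformulation from Equation~\eqref{eq:sb-decomp}. For the easy ($\Leftarrow$) direction, I would first exhibit explicit 2-biclique decompositions: for $G = K_a \cup K_b$ with the $a$ vertices first, take $p_1 = \mathds{1}_n$ (contributing the all-ones complete graph with some weight $w_1$) and $p_2 = [\mathds{1}_a\;\; -\mathds{1}_b]^\top$ (contributing $SB_{a,b}$ with weight $w_2$); then edges inside either clique get weight $w_1 + w_2$ while cross edges get $w_1 - w_2$, so $w_1 = w_2 = 1/2$ works. For $G = K_{a,b}$, the same two vectors with $w_1 = 1/2$, $w_2 = -1/2$ give cross edges weight $1$ and internal edges weight $0$. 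I must also argue $gc(G) \neq 1$ for these graphs, which follows from Theorem~\ref{thm:gc=1}: once $(a,b)\neq(1,1)$, neither $K_a \cup K_b$ nor $K_{a,b}$ is complete or empty (one should check the degenerate cases, e.g. $K_a \cup K_0 = K_a$ is excluded by requiring $b \geq 1$, or is folded into the characterization by allowing it — I would state the convention carefully, likely requiring $a,b \geq 1$ so that $K_{1,1} = K_2$ is the only overlap and is excluded).

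For the hard ($\Rightarrow$) direction, suppose $gc(G) = 2$, so $\mathbf{A} = w_1\, p_1^\top p_1 \odot \mathbf{J} + w_2\, p_2^\top p_2 \odot \mathbf{J}$ with both weights nonzero. The key observation is that the pair of entries $((p_1)_i, (p_2)_i) \in \{\pm1\}^2$ assigns each vertex $i$ to one of at most four \emph{column classes}. Two vertices in the same class are "twins" — by Remark~\ref{cor:colclass} the edge between them (if the formula were applied) would have weight $w_1 + w_2$, and any two vertices $i,j$ have edge-weight equal to $w_1 (p_1)_i (p_1)_j + w_2 (p_2)_i (p_2)_j$, which depends only on the classes of $i$ and $j$. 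So the graph is a "blow-up" of a weighted graph on at most $4$ class-vertices, and I need to enumerate which sign/weight patterns yield a $\{0,1\}$-valued symmetric matrix. Writing $\alpha = w_1 + w_2$ (same-class weight) and the three other possible cross-class values $w_1 - w_2$, $-w_1 + w_2$, $-w_1 - w_2$, the requirement that all realized edge weights lie in $\{0,1\}$ forces a very restricted solution set: essentially $\alpha \in \{0,1\}$ and the cross values in $\{0,1\}$, which (after discarding the all-zero and trivial cases handled by $gc=1$) pins down $\{w_1, w_2\} = \{1/2, 1/2\}$ or $\{1/2, -1/2\}$ up to relabeling, matching the two constructions above.

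The main obstacle — and the step deserving the most care — is the case analysis in the $(\Rightarrow)$ direction: showing that the only way for $w_1 p_1^\top p_1 \odot \mathbf{J} + w_2 p_2^\top p_2 \odot \mathbf{J}$ to be a genuine $0/1$ adjacency matrix of a graph with $gc = 2$ (not $1$) is via the two families. One has to rule out, for instance, that a vertex class could be \emph{empty} (which would reduce to $gc \leq 1$, contradiction), handle the possibility that $p_1 = \pm p_2$ (again reducing to $gc = 1$), and verify that exactly two nonempty classes give $K_a \cup K_b$ or $K_{a,b}$ while three or four nonempty classes are impossible (a third class would create two distinct cross-weights plus the internal weight, forcing three distinct values into $\{0,1\}$, impossible unless weights collapse). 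I would organize this by: (i) WLOG $p_1 = \mathds{1}_n$ (since scaling/sign-flipping a column class corresponds to negating a row of $\mathbf{P}$, or more directly, replacing $p_i \mapsto r \odot p_i$ for a fixed sign vector $r$ is a symmetry because it permutes the columns consistently — actually the cleanest normalization is to flip signs so that the first row $p_1$ is all ones, using the bijection in the Remark after Lemma~\ref{thm:rowsb}); (ii) then $p_2$ partitions $[n]$ into $S = \{i : (p_2)_i = -1\}$ and its complement; (iii) internal edges have weight $w_1 + w_2$, cross edges $w_1 - w_2$; (iv) since the graph is unweighted these must each be $0$ or $1$ and not both $0$, giving the system $(w_1+w_2, w_1-w_2) \in \{(0,1),(1,0),(1,1)\}$; (v) $(1,1)$ forces $w_2 = 0$, excluded; $(1,0)$ gives $K_n$ (which is $gc=1$, excluded unless we view it as $K_n \cup K_0$, but then $S = \emptyset$, also $gc = 1$); $(0,1)$ gives $w_1 = -w_2 = 1/2$ and $\mathbf{A}$ is $K_{S, V\setminus S}$ when both parts nonempty; and for $w_1 = w_2 = 1/2$ (the $(1,0)$-with-$w_2 \neq 0$... wait, recheck) — I would lay out the table cleanly so that $K_a \cup K_b$ emerges from the $(w_1+w_2, w_1 - w_2) = (1,0)$ branch with $w_1 = w_2 = 1/2$ and $K_{a,b}$ from $(0,1)$ with $w_1 = 1/2, w_2 = -1/2$, and degenerate splits fall back to $gc \leq 1$, completing the proof.
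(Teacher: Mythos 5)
Your ($\Leftarrow$) direction and your overall strategy (two spin bicliques induce at most four column classes; constrain the realized edge weights to lie in $\{0,1\}$) match the paper's proof. However, the ($\Rightarrow$) direction as you finally organize it has a genuine gap: the normalization ``WLOG $p_1 = \mathds{1}_n$'' is not valid. The only row symmetry available is $p_i \mapsto -p_i$ (Lemma~\ref{thm:rowsb}), which lets you fix the \emph{first entry} of each row but cannot turn an arbitrary row into the all-ones vector. The substitution $p_i \mapsto r \odot p_i$ for a fixed sign vector $r$ that you invoke is \emph{not} a symmetry of the problem: it replaces the target $\mathbf{A}$ by $\operatorname{diag}(r)\,\mathbf{A}\,\operatorname{diag}(r)$, i.e.\ it flips the sign of every edge crossing the cut $\delta(\{j : r_j = -1\})$, so the result is no longer the adjacency matrix of $G$ unless $r$ is constant on components. (Indeed, the paper lists ``an all-ones row always appears in some optimal $\mathbf{P}$'' as an open conjecture in its conclusion, so it cannot be assumed.) Consequently your steps (ii)--(v) only treat the situation with two column classes, whereas a priori $p_1$ and $p_2$ can be two generic $\pm1$ vectors producing three or four nonempty classes.

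Your middle paragraph does gesture at the 3- and 4-class cases, but the dismissal ``a third class forces three distinct values into $\{0,1\}$, impossible unless weights collapse'' is too quick: when a class is a \emph{singleton}, the internal weight $w_1+w_2$ is never realized as an edge weight and imposes no constraint. This is exactly where nontrivial configurations survive --- e.g.\ three singleton classes can give $K_1 \cup K_2$, and four singleton classes give $K_2 \cup K_2$ with $w_1 = w_2 = -\tfrac12$, both legitimately of the form $K_a \cup K_b$ with $gc = 2$. The paper's proof in Appendix~\ref{sec:appen-proofs} carries out precisely this region-by-region analysis, distinguishing in each case whether regions are singletons or larger. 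To repair your argument, drop the WLOG, enumerate the nonempty-class patterns (one; two adjacent; two antipodal; three; four), and in each pattern split on whether some class has size at least two; the singleton subcases must be checked explicitly rather than folded into the weight-collapse argument. (Also, your branch table in (v) momentarily assigns $(w_1+w_2, w_1-w_2) = (1,0)$ to $K_n$ before correcting itself --- internal weight $1$ and cross weight $0$ is $K_a \cup K_b$; worth cleaning up.)
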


\begin{theorem}
There exists no simple unweighted graph \( G \) with \( gc(G) = 3 \).
\label{thm:gc!=3}
\end{theorem}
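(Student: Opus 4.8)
The plan is to prove the equivalent statement that \emph{every} simple unweighted graph $G$ admitting a decomposition $\mathbf A=\mathbf P^\top\mathbf W\mathbf P\odot\mathbf J$ with $\mathbf P\in\{\pm1\}^{3\times n}$ already satisfies $gc(G)\le 2$; by Theorems \ref{thm:gc=1} and \ref{thm:gc=2} this contradicts $gc(G)=3$. Fix such a decomposition, with rows $p_1,p_2,p_3$ and weights $w_1,w_2,w_3$; if some $w_i=0$ we are done, so assume all $w_i\ne0$. Assign to each vertex $v$ its \emph{type} $\tau(v)=\big((p_1)_v,(p_2)_v,(p_3)_v\big)\in\{\pm1\}^3$, so $V$ splits into at most eight classes. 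By the column-wise form \eqref{eq:column-wise-dot}, for $u\ne v$ we have $\mathbf A_{uv}=\langle\tau(u)\odot\tau(v),w\rangle\in\{0,1\}$. In particular, if two vertices share a type then $\mathbf A_{uv}=w_1+w_2+w_3=:a$, so $a\in\{0,1\}$ whenever some class has at least two vertices; the remaining case in which all classes are singletons has $n\le8$ and can be handled directly, since the same dot-product constraints still apply to the edges and non-edges that are actually present.

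Next I would normalize. Replacing $p_i$ by $-p_i$ leaves each $p_i^\top p_i\odot\mathbf J$ unchanged, acts on types by flipping the $i$-th coordinate, and (since it flips that coordinate in \emph{both} factors of $\tau(u)\odot\tau(v)$) leaves every realized value $\langle\tau(u)\odot\tau(v),w\rangle$ invariant; combined with permutations of the three rows this yields an action of the order-$48$ hyperoctahedral group under which both $G$ and the realizability constraints are invariant. Using it I may assume the type $(1,1,1)$ is used, and I then identify a used type with the set $S\subseteq\{1,2,3\}$ of coordinates where it equals $-1$. Writing $W(S)=\sum_{i\in S}w_i$, the edge condition between two distinct used types $S',S''$ becomes
\[
a-2W(S'\,\triangle\,S'')\in\{0,1\},\qquad a=W(\{1,2,3\})\in\{0,1\},
\]
and $G$ is the corresponding blow-up of a graph on the used classes: each class is a clique if $a=1$ and an independent set if $a=0$, while two classes $S',S''$ are completely joined or completely non-adjacent according to whether $a-2W(S'\triangle S'')$ equals $1$ or $0$. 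Thus the whole instance is encoded by the family $\mathcal S\subseteq 2^{\{1,2,3\}}$ of used subsets together with the bit $a$.

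The heart of the argument is then a finite case analysis over the (few) orbits of $\mathcal S$ under the hyperoctahedral group and over $a\in\{0,1\}$. In each case one of two things happens: either (i) one reads off directly that $G$ is complete, empty, a disjoint union of two cliques, or a complete bipartite graph, so $gc(G)\le2$ and we are done; or (ii) the requirements $W(S'\triangle S'')\in\{(a-1)/2,\ a/2\}$ over the pairs $S',S''\in\mathcal S$ together with $W(\{1,2,3\})=a\in\{0,1\}$ form an over-determined linear system in $(w_1,w_2,w_3)$, and one checks it has no solution with all $w_i\ne0$. The main obstacle is exactly case (ii) for the graphs that \emph{look} admissible at the level of the rank-three/eight-classes structure but are excluded from Theorem \ref{thm:gc=2}: disjoint unions of three or more cliques (e.g.\ $K_2\cup K_1\cup K_1$, or perfect matchings) and small joins such as the paw graph $K_1+(K_2\cup K_1)$. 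For these one must show the induced linear system forces either some $w_i=0$ or $a\notin\{0,1\}$; the identity $W(S'\triangle S'')=W(S')+W(S'')-2W(S'\cap S'')$ and the symmetry reduction are what keep this bookkeeping finite and tractable, and the fact that the target family $\{K_n,\overline{K_n},K_a\cup K_b,K_{a,b}\}$ is closed under complementation can be used to organize the $a=0$ and $a=1$ branches in parallel. This argument is the natural three-layer analogue of the two-layer reasoning behind Theorem \ref{thm:gc=2}.
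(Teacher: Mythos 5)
Your proposal takes essentially the same route as the paper's proof in Appendix~\ref{sec:appen-proofs}: both partition the vertices into at most eight type classes (the corners of a cube indexed by $\{\pm1\}^3$), normalize by sign flips and row permutations, and reduce to a finite case analysis over which classes are occupied, checking in each case whether the linear constraints on $(w_1,w_2,w_3)$ admit an all-nonzero solution --- your $W(S'\,\triangle\,S'')$ bookkeeping is exactly the paper's Manhattan-distance bookkeeping on the cube. The one caveat is that your write-up defers the actual enumeration (the configurations the paper explicitly rules out as $(1,1,2)$, $(1,2,3)$, $(2,2,2)$, the two forbidden four-corner patterns, and the all-singleton cases for $4\le n\le 8$), and that enumeration is where essentially all the substance of the proof resides, so as written the argument is a correct plan rather than a complete proof.
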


Analyzing graphs with small graph coupling numbers suggests that a graph and its complement may share the same graph coupling number. Although we were unable to prove this conjecture in full generality, we establish a slightly weaker result in the following section.

\subsection{Fundamental Theoretical Results \label{sec:fundam}}

This section establishes key relationships between the graph coupling number \( gc \) of a graph and those of its spanning subgraphs, its complement, and its induced subgraphs.

We begin by generalizing Corollary~2 of~\cite{rajakumar2022generating}, which concerns the graph coupling number of edge-disjoint spanning subgraphs.

\begin{theorem}
Let \( G_1 = (V, E_1) \) and \( G_2 = (V, E_2) \) be simple undirected graphs on the same vertex set \( V \), with disjoint edge sets \( E_1 \cap E_2 = \emptyset \). Let \( G = (V, E_1 \cup E_2) \) denote their union. Then:
\[
\left| gc(G_1) - gc(G_2) \right| \le gc(G) \le gc(G_1) + gc(G_2).
\]
\label{thm:union}
\end{theorem}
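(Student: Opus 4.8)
The upper bound is the easier half. The plan is to take optimal decompositions of $G_1$ and $G_2$ separately: write $\mathbf{A}_{G_1} = \mathbf{P}_1^\top \mathbf{W}_1 \mathbf{P}_1 \odot \mathbf{J}$ with $k_1 = gc(G_1)$ layers, and likewise $\mathbf{A}_{G_2} = \mathbf{P}_2^\top \mathbf{W}_2 \mathbf{P}_2 \odot \mathbf{J}$ with $k_2 = gc(G_2)$ layers. Since $E_1 \cap E_2 = \emptyset$ and $G$ is unweighted, the adjacency matrices satisfy $\mathbf{A}_G = \mathbf{A}_{G_1} + \mathbf{A}_{G_2}$ (the off-diagonal entries never collide, so no entry exceeds $1$). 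Stacking the two layer-matrices vertically, $\mathbf{P} = \left[\begin{smallmatrix}\mathbf{P}_1\\ \mathbf{P}_2\end{smallmatrix}\right]$ and $\mathbf{W} = \operatorname{diag}(\mathbf{W}_1, \mathbf{W}_2)$, gives a feasible solution for $G$ with $k_1 + k_2$ layers, using the Spin Biclique Decomposition view of Equation~\eqref{eq:sb-decomp}: the decomposition of $G$ is just the concatenation of the two lists of weighted spin bicliques. Hence $gc(G) \le gc(G_1) + gc(G_2)$.

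For the lower bound $\left| gc(G_1) - gc(G_2)\right| \le gc(G)$, by symmetry it suffices to show $gc(G_1) \le gc(G) + gc(G_2)$, i.e.\ $gc(G_1) - gc(G_2) \le gc(G)$. Here I would use the fact that $G_1 = (V, E_1)$ is the edge-disjoint union of $G = (V, E_1 \cup E_2)$ and $\overline{G_2}$ restricted to... no --- more carefully, observe $\mathbf{A}_{G_1} = \mathbf{A}_G - \mathbf{A}_{G_2}$. Take optimal decompositions of $G$ (with $gc(G)$ spin bicliques and weights $w_p$) and of $G_2$ (with $gc(G_2)$ spin bicliques and weights $w'_q$). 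Then $\mathbf{A}_{G_1} = \sum_p w_p\, p_p^\top p_p \odot \mathbf{J} \;-\; \sum_q w'_q\, (p'_q)^\top p'_q \odot \mathbf{J}$, which is a weighted sum of $gc(G) + gc(G_2)$ spin bicliques (negating a weight is allowed since $\mathbf{W}$ ranges over all of $\mathbb{R}^{k\times k}$, and the same vector $p$ still defines the same spin biclique). This exhibits a feasible solution for $G_1$ of size $gc(G) + gc(G_2)$, so $gc(G_1) \le gc(G) + gc(G_2)$. Swapping the roles of $G_1$ and $G_2$ gives the other inequality, and together they yield $\left| gc(G_1) - gc(G_2)\right| \le gc(G)$.

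The one subtlety I would be careful about --- the step I expect to be the main (minor) obstacle --- is verifying that the weighted-sum-of-spin-bicliques formulation genuinely allows arbitrary real weights of either sign, so that subtracting a decomposition is legitimate, and that concatenating decompositions does not accidentally violate the zero-diagonal or symmetry constraints. Both follow immediately from Problem~\ref{prob:gc} ($\mathbf{W} = \operatorname{diag}(w)$ with $w \in \mathbb{R}^k$, unconstrained in sign) and from the fact that every matrix of the form $p^\top p \odot \mathbf{J}$ is symmetric with zero diagonal (Lemma~\ref{thm:rowsb}), so any signed combination of them is too. Once this bookkeeping is in place, the argument reduces to the two concatenation/subtraction constructions above. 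This indeed generalizes Corollary~2 of~\cite{rajakumar2022generating}, which is the special case giving only the upper bound.
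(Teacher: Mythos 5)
Your proposal is correct and matches the paper's argument: the lower bound is obtained exactly as in the paper, by stacking an optimal decomposition of \( G \) with the negated optimal decomposition of \( G_2 \) to produce a feasible solution for \( G_1 \) of size \( gc(G) + gc(G_2) \), then symmetrizing. The only cosmetic difference is that you spell out the concatenation construction for the upper bound, whereas the paper simply cites Rajakumar et al.\ for that half.
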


\begin{proof}
The upper bound \( gc(G) \le gc(G_1) + gc(G_2) \) was shown in~\cite{rajakumar2022generating}. To prove the lower bound, let \( \mathbf{A}_1 \), \( \mathbf{A}_2 \), and \( \mathbf{A} \) denote the adjacency matrices of \( G_1 \), \( G_2 \), and \( G \), respectively, so that \( \mathbf{A}_1 = \mathbf{A} - \mathbf{A}_2 \).

Let \( (\mathbf{P}, \mathbf{W}) \) be an optimal solution for \( G \), and let \( (\mathbf{P}_2, \mathbf{W}_2) \) be an optimal solution for \( G_2 \). Define:
\[
\mathbf{P}' = \begin{bmatrix} \mathbf{P} \\ \mathbf{P}_2 \end{bmatrix}, \qquad
\mathbf{W}' = \begin{bmatrix} \mathbf{W} & \mathbf{0} \\ \mathbf{0} & -\mathbf{W}_2 \end{bmatrix}.
\]

Then \( (\mathbf{P}', \mathbf{W}') \) is a feasible solution for \( G_1 \), since:
\[
\mathbf{A}_1 = \mathbf{A} - \mathbf{A}_2 = \mathbf{P}^\top \mathbf{W} \mathbf{P} \odot \mathbf{J} - \mathbf{P}_2^\top \mathbf{W}_2 \mathbf{P}_2 \odot \mathbf{J} = \mathbf{P}'^\top \mathbf{W}' \mathbf{P}' \odot \mathbf{J}.
\]

Since \( (\mathbf{P}', \mathbf{W}') \) may not be optimal, we conclude:
\[
gc(G_1) \le gc(G) + gc(G_2) \quad \Rightarrow \quad gc(G_1) - gc(G_2) \le gc(G).
\]
By symmetry, interchanging the roles of \( G_1 \) and \( G_2 \) yields the reverse inequality:
\[
gc(G_2) - gc(G_1) \le gc(G),
\]
so the result follows:
\[
\left| gc(G_1) - gc(G_2) \right| \le gc(G).
\]
\end{proof}

As a consequence, the graph coupling number of a graph and that of its complement differ by at most one.

\begin{corollary}
Let \( H \) be a simple undirected graph, and let \( \overline{H} \) denote its complement. Then:
\[
\left| gc(H) - gc(\overline{H}) \right| \le 1.
\]
\end{corollary}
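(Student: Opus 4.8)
The plan is to derive the corollary directly from Theorem~\ref{thm:union} by choosing the right pair of edge-disjoint spanning subgraphs. Given a simple graph \( H = (V, E) \), let \( K_n \) be the complete graph on \( V = [n] \), and observe that the edge set of \( K_n \) decomposes as the disjoint union \( E(H) \sqcup E(\overline{H}) \), since by definition \( \overline{H} \) has exactly the non-edges of \( H \) as its edges. Thus \( H \) and \( \overline{H} \) are edge-disjoint spanning subgraphs of \( V \) whose union is \( K_n \), and Theorem~\ref{thm:union} applied with \( G_1 = H \), \( G_2 = \overline{H} \), \( G = K_n \) yields
\[
\left| gc(H) - gc(\overline{H}) \right| \le gc(K_n).
\]

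It then remains only to note that \( gc(K_n) = 1 \). This is immediate from Theorem~\ref{thm:gc=1}: the complete graph \( K_n \) is of the stated form, so its graph coupling number equals \(1\) (equivalently, \( \mathbf{A}(K_n) = \mathbf{J} = \mathds{1}\mathds{1}^\top \odot \mathbf{J} \), realized by \( \mathbf{P} = \mathds{1}^\top \), \( \mathbf{W} = [1] \), and it cannot be \(0\) since \( H \) has at least one edge for \( n \ge 2 \); the degenerate cases \( n \le 1 \) are trivial as both \( H \) and \( \overline{H} \) are empty). Substituting \( gc(K_n) = 1 \) into the displayed inequality gives \( \left| gc(H) - gc(\overline{H}) \right| \le 1 \), as claimed.

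There is essentially no obstacle here: the entire content has been front-loaded into Theorem~\ref{thm:union}, and the corollary is the special case obtained by completing \( H \) with its complement. The only things to be careful about are (i) confirming that the edge sets of \( H \) and \( \overline{H} \) are genuinely disjoint and cover all of \( K_n \) (true by the definition of the complement of a simple graph, with no loops to worry about), and (ii) invoking \( gc(K_n) = 1 \) rather than a weaker bound, since any value larger than \(1\) would not give the sharp constant. Both are routine, so the proof is just the two-line chain above.
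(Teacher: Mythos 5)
Your proof is correct and follows exactly the paper's own argument: apply Theorem~\ref{thm:union} with \( G_1 = H \), \( G_2 = \overline{H} \), \( G = K_n \), and then use \( gc(K_n) = 1 \) from Theorem~\ref{thm:gc=1}. The extra checks on edge-disjointness and degenerate cases are fine but not needed beyond what the paper states.
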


\begin{proof}
Apply Theorem~\ref{thm:union} with \( G_1 = H \), \( G_2 = \overline{H} \), and \( G = K_n \), the complete graph on \( n = |V(H)| \) vertices. By Theorem~\ref{thm:gc=1}, \( gc(K_n) = 1 \), so:
\[
\left| gc(H) - gc(\overline{H}) \right| \le gc(K_n) = 1.
\]
\end{proof}

We next consider the behavior of the graph coupling number under vertex restriction.

\begin{theorem}
Let \( S \subseteq V(G) \), and let \( G' = G[S] \) be the subgraph induced on \( S \). Then:
\[
gc(G') \le gc(G).
\]
\label{thm:induced}
\end{theorem}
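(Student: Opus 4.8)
The plan is to show that any feasible solution for $G$ restricts to a feasible solution for the induced subgraph $G' = G[S]$ of no larger size. Let $(\mathbf{P}, \mathbf{W})$ be an optimal solution for $G$, so that $\mathbf{P} \in \{\pm 1\}^{k \times n}$ with $k = gc(G)$, and $\mathbf{A} = \mathbf{P}^\top \mathbf{W} \mathbf{P} \odot \mathbf{J}_n$ where $\mathbf{A}$ is the adjacency matrix of $G$. Without loss of generality, reorder the vertices so that $S = \{1, \dots, s\}$ occupies the first $s$ columns of $\mathbf{P}$. Let $\mathbf{P}'$ be the $k \times s$ submatrix of $\mathbf{P}$ consisting of those columns, i.e., the columns of $\mathbf{P}$ indexed by $S$. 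I claim $(\mathbf{P}', \mathbf{W})$ is feasible for $G'$.

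The verification is a direct index check using the column-wise dot product reformulation, Equation~\eqref{eq:column-wise-dot}. For any $i < j$ with $i, j \in S$, the adjacency matrix entry of the induced subgraph satisfies $\mathbf{A}'_{i,j} = \mathbf{A}_{i,j}$, since $G'$ keeps exactly those edges of $G$ with both endpoints in $S$. On the other hand, the $i$-th column of $\mathbf{P}'$ equals the $i$-th column of $\mathbf{P}$ (by our choice of ordering), so $(\mathbf{P}'_i \odot \mathbf{P}'_j)^\top w = (\mathbf{P}_i \odot \mathbf{P}_j)^\top w = \mathbf{A}_{i,j} = \mathbf{A}'_{i,j}$. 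Since the diagonal of $\mathbf{J}$ is irrelevant (loops are excluded in both $G$ and $G'$), this establishes $\mathbf{A}' = \mathbf{P}'^\top \mathbf{W} \mathbf{P}' \odot \mathbf{J}_s$. Hence $(\mathbf{P}', \mathbf{W})$ is feasible.

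Finally, the objective value is preserved exactly: $\mathbf{W}$ is unchanged, so $\|\mathbf{W}\|_0$ is the same for the restricted solution. Therefore $gc(G') \le \|\mathbf{W}\|_0 = gc(G)$, which is the desired inequality. (One could equivalently phrase the feasibility check in the matrix form $\mathbf{A} = \mathbf{P}^\top \mathbf{W} \mathbf{P} \odot \mathbf{J}$ by observing that deleting the columns of $\mathbf{P}$ outside $S$ corresponds to deleting the matching rows and columns of $\mathbf{P}^\top \mathbf{W} \mathbf{P}$, and likewise of $\mathbf{J}_n$, yielding the principal submatrix indexed by $S$ on both sides.)

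I do not anticipate a genuine obstacle here; the only care needed is bookkeeping — making explicit that column deletion in $\mathbf{P}$ corresponds precisely to taking the principal submatrix of $\mathbf{P}^\top \mathbf{W} \mathbf{P}$ indexed by $S$, and that the Hadamard product with $\mathbf{J}$ commutes with this restriction. This is immediate from the definitions but should be stated cleanly rather than waved through.
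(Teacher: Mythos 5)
Your proposal is correct and follows essentially the same argument as the paper: restrict $\mathbf{P}$ to the columns indexed by $S$, keep $\mathbf{W}$, and verify feasibility for $G[S]$ via the column-wise dot product reformulation of Equation~\eqref{eq:column-wise-dot}. No further comment is needed.
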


\begin{proof}
Let \( (\mathbf{P}, \mathbf{W}) \) be an optimal solution for \( G \), so that \( \mathbf{A} = \mathbf{P}^\top \mathbf{W} \mathbf{P} \odot \mathbf{J} \) is the adjacency matrix of \( G \). From Equation~\ref{eq:column-wise-dot}, for all \( i, j \in S \), \( i < j \), we have:
\[
\mathbf{A}_{i,j} = w^\top (\mathbf{P}_i \odot \mathbf{P}_j),
\]
where \( w \in \mathbb{R}^k \) is the vector of diagonal entries of \( \mathbf{W} \), and \( \mathbf{P}_i \) denotes the \( i \)-th column of \( \mathbf{P} \).

Let \( \mathbf{P}' \in \{\pm 1\}^{k \times |S|} \) denote the submatrix of \( \mathbf{P} \) restricted to columns indexed by vertices in \( S \), and let \( \mathbf{A}' \) be the adjacency matrix of \( G' \). Then:
\[
\mathbf{A}'_{i,j} = w^\top (\mathbf{P}'_i \odot \mathbf{P}'_j) \quad \forall\, i < j.
\]
Hence:
\[
\mathbf{A}' = \mathbf{P}'^\top \mathbf{W} \mathbf{P}' \odot \mathbf{J},
\]
so \( (\mathbf{P}', \mathbf{W}) \) is a feasible solution for \( G' \), which implies \( gc(G') \le gc(G) \).
\end{proof}

\subsection{Lower Bound \label{sec:lower}}

To the best of our knowledge, no prior work has established a lower bound on the graph coupling number \( gc \) for unweighted graphs. In what follows, we propose such a bound, based on classical results concerning the rank of matrix products.

We begin by presenting an equivalent formulation of the feasibility condition in the Graph Coupling Problem.

\begin{theorem}
Let \( \mathbf{P} \in \{-1, +1\}^{k \times n} \) and let \( \mathbf{W} \in \mathbb{R}^{k \times k} \) be a diagonal matrix. Then:
\[
\mathbf{A} = \mathbf{P}^\top \mathbf{W} \mathbf{P} \odot \mathbf{J} \iff \mathbf{A} + \operatorname{tr}(\mathbf{W}) \mathbf{I} = \mathbf{P}^\top \mathbf{W} \mathbf{P}.
\]
\label{thm:hadel}
\end{theorem}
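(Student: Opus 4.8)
The plan is to reduce the equivalence to a single elementary observation: for any $\mathbf{P}\in\{\pm 1\}^{k\times n}$ and any diagonal $\mathbf{W}=\operatorname{diag}(w)$, the matrix $\mathbf{P}^\top\mathbf{W}\mathbf{P}$ has \emph{constant diagonal} equal to $\operatorname{tr}(\mathbf{W})$. Indeed, its $(i,i)$ entry is $\sum_{p=1}^{k}\mathbf{P}_{p,i}\,W_{pp}\,\mathbf{P}_{p,i}=\sum_{p=1}^{k}W_{pp}\,\mathbf{P}_{p,i}^2=\sum_{p=1}^{k}W_{pp}=\operatorname{tr}(\mathbf{W})$, where we used $\mathbf{P}_{p,i}\in\{\pm 1\}\Rightarrow\mathbf{P}_{p,i}^2=1$. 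Equivalently, using the column-wise view of Equation~\eqref{eq:column-wise-dot}, the $(i,i)$ entry is $w^\top(\mathbf{P}_i\odot\mathbf{P}_i)=w^\top\mathds{1}=\operatorname{tr}(\mathbf{W})$.

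Next I would use the decomposition $\mathbf{J}=\mathbf{1}-\mathbf{I}$ to write, for the matrix $\mathbf{M}:=\mathbf{P}^\top\mathbf{W}\mathbf{P}$,
\[
\mathbf{M}\odot\mathbf{J}=\mathbf{M}\odot(\mathbf{1}-\mathbf{I})=\mathbf{M}-\mathbf{M}\odot\mathbf{I}.
\]
Since $\mathbf{M}\odot\mathbf{I}$ is exactly the diagonal part of $\mathbf{M}$, the observation above gives $\mathbf{M}\odot\mathbf{I}=\operatorname{tr}(\mathbf{W})\mathbf{I}$, hence
\[
\mathbf{P}^\top\mathbf{W}\mathbf{P}\odot\mathbf{J}=\mathbf{P}^\top\mathbf{W}\mathbf{P}-\operatorname{tr}(\mathbf{W})\,\mathbf{I}.
\]

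Finally, substituting this identity into the left-hand condition: $\mathbf{A}=\mathbf{P}^\top\mathbf{W}\mathbf{P}\odot\mathbf{J}$ holds if and only if $\mathbf{A}=\mathbf{P}^\top\mathbf{W}\mathbf{P}-\operatorname{tr}(\mathbf{W})\,\mathbf{I}$, which by adding $\operatorname{tr}(\mathbf{W})\mathbf{I}$ to both sides is equivalent to $\mathbf{A}+\operatorname{tr}(\mathbf{W})\,\mathbf{I}=\mathbf{P}^\top\mathbf{W}\mathbf{P}$. This is a chain of genuine equivalences (a rearrangement of an equation between matrices), so no side conditions on $\mathbf{A}$ are needed; in particular the diagonal-zero property of an adjacency matrix is consistent with, but not required for, either side. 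There is no real obstacle here — the only thing to be careful about is the bookkeeping that $\mathbf{P}_{p,i}^2=1$ entrywise is precisely what collapses the diagonal of $\mathbf{P}^\top\mathbf{W}\mathbf{P}$ to the scalar $\operatorname{tr}(\mathbf{W})$, which is the crux of the whole statement.
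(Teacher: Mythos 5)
Your proof is correct and rests on the same key observation as the paper's, namely that every diagonal entry of \( \mathbf{P}^\top \mathbf{W} \mathbf{P} \) equals \( \operatorname{tr}(\mathbf{W}) \) because \( \mathbf{P}_{p,i}^2 = 1 \). Packaging this as the single identity \( \mathbf{P}^\top\mathbf{W}\mathbf{P}\odot\mathbf{J}=\mathbf{P}^\top\mathbf{W}\mathbf{P}-\operatorname{tr}(\mathbf{W})\mathbf{I} \) and rearranging is a slightly tidier presentation than the paper's two separate implications, but it is essentially the same argument.
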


\begin{proof}
\(\Longleftarrow:\) Immediate, since the Hadamard product with \( \mathbf{J} \) zeroes out the diagonal of \( \mathbf{P}^\top \mathbf{W} \mathbf{P} \).

\(\Longrightarrow:\) Assume \( \mathbf{A} = \mathbf{P}^\top \mathbf{W} \mathbf{P} \odot \mathbf{J} \). Then the \( (i,i) \)-th entry of \( \mathbf{P}^\top \mathbf{W} \mathbf{P} \) is
\[
w^\top (\mathbf{P}_i \odot \mathbf{P}_i) = w^\top \mathds{1} = \operatorname{tr}(\mathbf{W}),
\]
where \( w \) is the vector of diagonal entries of \( \mathbf{W} \). For \( i \neq j \), the \( (i,j) \)-th entry is equal to \( \mathbf{A}_{i,j} \), since \( \mathbf{J}_{i,j} = 1 \). Hence, the identity \( \mathbf{P}^\top \mathbf{W} \mathbf{P} = \mathbf{A} + \operatorname{tr}(\mathbf{W}) \mathbf{I} \) follows.
\end{proof}

Next, we invoke a classical result from linear algebra on the rank of a product of two matrices~\cite{strang2023linear}:
\[
\operatorname{rank}(\mathbf{A}\mathbf{B}) \le \min\{\operatorname{rank}(\mathbf{A}), \operatorname{rank}(\mathbf{B})\}.
\]
Which can be easily generalized to 3 matrices by applying the same inequality twice: 
\begin{equation}
    \operatorname{rank}(\mathbf{A} \mathbf{B} \mathbf{C}) \le \min\{\operatorname{rank}(\mathbf{A}), \operatorname{rank}(\mathbf{B}), \operatorname{rank}(\mathbf{C})\}.
    \label{eq:rank3}
\end{equation}

Applying the inequality~\eqref{eq:rank3} to the feasibility condition from Theorem~\ref{thm:hadel} yields:
\[
\operatorname{rank}(\mathbf{A} + \operatorname{tr}(\mathbf{W}) \mathbf{I}) = \operatorname{rank}(\mathbf{P}^\top \mathbf{W} \mathbf{P}) \le \operatorname{rank}(\mathbf{W}) = \|\mathbf{W}\|_0.
\]
 Note that the identity \(\operatorname{rank}(\mathbf{W}) = \|\mathbf{W}\|_0\) holds because the number of non-zero entries of a diagonal matrix is equal to its rank. This inequality provides a necessary condition for feasibility. If we can lower-bound the rank of \( \mathbf{A} + tr(\mathbf{W}) \mathbf{I} \), then we obtain a valid lower bound on the graph coupling number, since \( \|\mathbf{W}\|_0 \) is the objective.

We are now ready to state our main result on lower-bound.

\begin{theorem}
Let \( \mathbf{A} \) be the adjacency matrix of a simple undirected graph \( G \). Then:
\[
\min_{\alpha \in \mathbb{R}} \operatorname{rank}(\mathbf{A} + \alpha \mathbf{I}) \le gc(G).
\]
\label{thm:lower}
\end{theorem}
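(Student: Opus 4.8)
The plan is to combine Theorem~\ref{thm:hadel} with the rank inequality~\eqref{eq:rank3} that was just derived in the text. The key observation is that the quantity $\operatorname{tr}(\mathbf{W})$ appearing in Theorem~\ref{thm:hadel} is just \emph{some} real number, and it is exactly the scalar $\alpha$ that multiplies $\mathbf{I}$ in the statement we want to prove. So the strategy is: take an \emph{optimal} solution $(\mathbf{P}, \mathbf{W})$ of the Graph Coupling Problem, which by definition has $\|\mathbf{W}\|_0 = gc(G)$ and $\mathbf{W} \in \mathbb{R}^{k \times k}$ diagonal with $k = gc(G)$; set $\alpha^\star := \operatorname{tr}(\mathbf{W}) \in \mathbb{R}$; and then use the chain of inequalities to bound $\operatorname{rank}(\mathbf{A} + \alpha^\star \mathbf{I})$ from above by $gc(G)$.

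Concretely, the steps are as follows. First, invoke Theorem~\ref{thm:hadel} to rewrite the feasibility condition $\mathbf{A} = \mathbf{P}^\top \mathbf{W} \mathbf{P} \odot \mathbf{J}$ as the matrix identity $\mathbf{A} + \operatorname{tr}(\mathbf{W})\mathbf{I} = \mathbf{P}^\top \mathbf{W} \mathbf{P}$. Second, apply~\eqref{eq:rank3} to the right-hand side to get $\operatorname{rank}(\mathbf{P}^\top \mathbf{W} \mathbf{P}) \le \operatorname{rank}(\mathbf{W})$. Third, use the elementary fact that for a diagonal matrix the rank equals the number of nonzero diagonal entries, i.e.\ $\operatorname{rank}(\mathbf{W}) = \|\mathbf{W}\|_0 = gc(G)$ by optimality. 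Chaining these, $\operatorname{rank}(\mathbf{A} + \operatorname{tr}(\mathbf{W})\mathbf{I}) \le gc(G)$. Finally, since $\operatorname{tr}(\mathbf{W})$ is a particular real number, the minimum over all $\alpha \in \mathbb{R}$ of $\operatorname{rank}(\mathbf{A} + \alpha\mathbf{I})$ is no larger than this particular value, giving $\min_{\alpha \in \mathbb{R}} \operatorname{rank}(\mathbf{A} + \alpha\mathbf{I}) \le gc(G)$, as desired.

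This argument is essentially immediate given the scaffolding already laid out in the excerpt, so I do not anticipate a genuine obstacle; the only point requiring a moment's care is that the minimum on the left is over \emph{all} real $\alpha$, whereas the construction only exhibits one admissible $\alpha$ (namely $\operatorname{tr}(\mathbf{W})$ coming from an optimal $\mathbf{W}$) — but that is exactly why the statement is phrased as a minimization, and taking the infimum can only decrease the rank bound, so the inequality goes the right way. One might also note that a minimizer exists (the rank function takes finitely many integer values, so the minimum is attained), though this is not strictly needed for the inequality. It is worth remarking that one need not even restrict to an optimal solution: \emph{any} feasible $(\mathbf{P},\mathbf{W})$ yields $\min_\alpha \operatorname{rank}(\mathbf{A}+\alpha\mathbf{I}) \le \|\mathbf{W}\|_0$, and taking the feasible solution achieving the minimum objective gives the stated bound.
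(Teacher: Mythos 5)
Your proposal is correct and follows exactly the paper's argument: pick an optimal (or any feasible) pair $(\mathbf{P},\mathbf{W})$, use Theorem~\ref{thm:hadel} to write $\mathbf{A}+\operatorname{tr}(\mathbf{W})\mathbf{I}=\mathbf{P}^\top\mathbf{W}\mathbf{P}$, bound the rank by $\|\mathbf{W}\|_0$ via~\eqref{eq:rank3}, and observe that the minimum over $\alpha$ is at most the value at $\alpha=\operatorname{tr}(\mathbf{W})$. No differences worth noting.
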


\begin{proof}
For any feasible pair \( (\mathbf{P}, \mathbf{W}) \), we have:
\[
\min_{\alpha \in \mathbb{R}} \operatorname{rank}(\mathbf{A} + \alpha \mathbf{I}) \le \operatorname{rank}(\mathbf{A} + \operatorname{tr}(\mathbf{W}) \mathbf{I}) = \operatorname{rank}(\mathbf{P}^\top \mathbf{W} \mathbf{P}) \le \|\mathbf{W}\|_0.
\]
Taking \( \mathbf{W} \) to be an optimal solution yields the result.
\end{proof}

We observe that \( \operatorname{rank}(\mathbf{A} + \alpha \mathbf{I}) \) is closely related to the eigenvalue structure of \( \mathbf{A} \). This motivates the following reformulation:

\begin{corollary}
Let \( G \) be a simple graph with \( n \) vertices and adjacency matrix \( \mathbf{A} \). Then:
\[
n - \max_{\lambda \in \operatorname{spec}(\mathbf{A})} \operatorname{mult}(\lambda) \le gc(G),
\]
where \( \operatorname{mult}(\lambda) \) denotes the algebraic multiplicity of eigenvalue \( \lambda \).
\label{cor:geommult}
\end{corollary}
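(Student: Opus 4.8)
The plan is to combine Theorem~\ref{thm:lower} with an exact evaluation of the quantity \( \min_{\alpha \in \mathbb{R}} \operatorname{rank}(\mathbf{A} + \alpha \mathbf{I}) \). Since \( G \) is a simple undirected graph, its adjacency matrix \( \mathbf{A} \) is real symmetric, hence orthogonally diagonalizable with real eigenvalues; write its spectrum as \( \operatorname{spec}(\mathbf{A}) = \{\lambda_1, \dots, \lambda_t\} \) with algebraic multiplicities \( m_1, \dots, m_t \) summing to \( n \).

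First I would observe that for any \( \alpha \in \mathbb{R} \), the matrix \( \mathbf{A} + \alpha \mathbf{I} \) is again real symmetric with eigenvalues \( \lambda_s + \alpha \), each carried with the same multiplicity \( m_s \). Its rank equals \( n \) minus the dimension of its kernel, and its kernel is exactly the eigenspace of \( \mathbf{A} \) for the eigenvalue \( -\alpha \); by symmetry (diagonalizability) this eigenspace has dimension equal to the algebraic multiplicity of \( -\alpha \) as an eigenvalue of \( \mathbf{A} \), with the convention that this multiplicity is \( 0 \) when \( -\alpha \notin \operatorname{spec}(\mathbf{A}) \). Thus
\[
\operatorname{rank}(\mathbf{A} + \alpha \mathbf{I}) = n - \operatorname{mult}_{\mathbf{A}}(-\alpha),
\]
where \( \operatorname{mult}_{\mathbf{A}}(-\alpha) = 0 \) if \( -\alpha \) is not an eigenvalue.

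Next I would minimize over \( \alpha \): the right-hand side is minimized precisely by choosing \( -\alpha \) to be an eigenvalue of largest multiplicity, i.e. \( \alpha = -\lambda \) for \( \lambda \in \operatorname{spec}(\mathbf{A}) \) with \( \operatorname{mult}(\lambda) = \max_{\mu \in \operatorname{spec}(\mathbf{A})} \operatorname{mult}(\mu) \); for every other value of \( \alpha \) the rank is the non-smaller value \( n \) (or \( n - m_s < n \) for some other eigenvalue). Therefore
\[
\min_{\alpha \in \mathbb{R}} \operatorname{rank}(\mathbf{A} + \alpha \mathbf{I}) = n - \max_{\lambda \in \operatorname{spec}(\mathbf{A})} \operatorname{mult}(\lambda).
\]
Substituting this equality into the bound \( \min_{\alpha \in \mathbb{R}} \operatorname{rank}(\mathbf{A} + \alpha \mathbf{I}) \le gc(G) \) from Theorem~\ref{thm:lower} yields the claim.

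I do not anticipate a genuine obstacle here; the only point requiring care is the use of symmetry of \( \mathbf{A} \), which guarantees that the dimension of \( \ker(\mathbf{A} + \alpha \mathbf{I}) \) — a priori only the \emph{geometric} multiplicity of \( -\alpha \) — coincides with the \emph{algebraic} multiplicity appearing in the statement. This is exactly where diagonalizability of real symmetric matrices is invoked, and it is what makes the reformulation of Theorem~\ref{thm:lower} in terms of eigenvalue multiplicities legitimate.
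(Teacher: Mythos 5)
Your proposal is correct and follows essentially the same route as the paper: apply Theorem~\ref{thm:lower} and evaluate \( \min_{\alpha}\operatorname{rank}(\mathbf{A}+\alpha\mathbf{I}) = n - \max_{\lambda}\operatorname{mult}(\lambda) \), using the symmetry of \( \mathbf{A} \) to equate geometric and algebraic multiplicities. The point you flag as requiring care is precisely the one the paper's proof also makes explicit.
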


\begin{proof}
From Theorem~\ref{thm:lower}, we have:
\[
\min_{\alpha \in \mathbb{R}} \operatorname{rank}(\mathbf{A} + \alpha \mathbf{I}) \le gc(G).
\]
Consider the rank of \( \mathbf{A} + \alpha \mathbf{I} \) for different values of \( \alpha \):
\begin{itemize}
    \item If \( -\alpha \) is not an eigenvalue of \( \mathbf{A} \), then \( \mathbf{A} + \alpha \mathbf{I} \) is nonsingular and has full rank \( n \).
    \item If \( -\alpha \) is an eigenvalue of \( \mathbf{A} \), then the nullity of \( \mathbf{A} + \alpha \mathbf{I} \) equals the multiplicity of \( -\alpha \). Since \( \mathbf{A} \) is symmetric, it is diagonalizable, and thus its geometric and algebraic multiplicities coincide. Hence,
    \[
    \operatorname{rank}(\mathbf{A} + \alpha \mathbf{I}) = n - \operatorname{mult}(-\alpha).
    \]
\end{itemize}
Therefore,
\[
\min_{\alpha \in \mathbb{R}} \operatorname{rank}(\mathbf{A} + \alpha \mathbf{I}) = n - \max_{\lambda \in \operatorname{spec}(\mathbf{A})} \operatorname{mult}(\lambda),
\]
which completes the proof.
\end{proof}

\begin{remark}
The lower bounds in Theorem~\ref{thm:lower} and Corollary~\ref{cor:geommult} hold for both weighted and unweighted simple undirected graphs.
\end{remark}

\subsection{Lower Bound for Specific Graph Families \label{sec:lower_specific}}

We now apply Theorem~\ref{thm:lower} and Corollary~\ref{cor:geommult} to establish lower bounds on the graph coupling number \( gc \) for specific families of graphs. We begin with cliques.

\begin{theorem}
For all \( q \le \frac{n}{2} \), we have:
\[
gc(K_q \cup \overline{K_{n - q}}) \ge q.
\]
\end{theorem}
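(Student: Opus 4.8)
The plan is to apply Corollary~\ref{cor:geommult}, which gives $gc(G) \ge n - \max_{\lambda \in \operatorname{spec}(\mathbf{A})} \operatorname{mult}(\lambda)$, so it suffices to show that the largest eigenvalue multiplicity of $\mathbf{A}$, the adjacency matrix of $G = K_q \cup \overline{K_{n-q}}$, equals $n - q$ whenever $2 \le q \le n/2$. The case $q = 1$ is immediate, since then $G = \overline{K_n}$ and $gc(G) = 1 = q$ by Theorem~\ref{thm:gc=1}.

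First I would record that $\mathbf{A}$ is block-diagonal: it has a $q \times q$ block equal to $\mathbf{J}_q = \mathbf{1}_q - \mathbf{I}_q$ (the adjacency matrix of $K_q$) and an $(n-q)\times(n-q)$ zero block (the $n-q$ isolated vertices). Since the spectrum of a block-diagonal matrix is the multiset union of the spectra of its blocks, and $\mathbf{J}_q$ has eigenvalue $q-1$ with multiplicity $1$ (eigenvector $\mathds{1}_q$) and eigenvalue $-1$ with multiplicity $q-1$, while the zero block contributes $0$ with multiplicity $n-q$, the spectrum of $\mathbf{A}$ consists of $q-1$ (multiplicity $1$), $-1$ (multiplicity $q-1$), and $0$ (multiplicity $n-q$); these three values are distinct for $q \ge 2$.

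Next I would compare the three multiplicities $1$, $q-1$, $n-q$ using the hypothesis $q \le n/2$: this yields $n - q \ge n/2 \ge q \ge q - 1$ and $n - q \ge q \ge 2 > 1$, so $\max_{\lambda} \operatorname{mult}(\lambda) = n - q$. Substituting into Corollary~\ref{cor:geommult} gives $gc(G) \ge n - (n-q) = q$, as claimed.

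I do not anticipate a genuine obstacle; the only points requiring care are verifying that $0$ is not also an eigenvalue of the $K_q$ block (true since $q \ge 2$), so that the multiplicity of the zero eigenvalue is exactly $n - q$, and dispatching the degenerate case $q = 1$ separately. Equivalently, one could apply Theorem~\ref{thm:lower} directly: $\operatorname{rank}(\mathbf{A} + 0 \cdot \mathbf{I}) = \operatorname{rank}(\mathbf{J}_q) = q$ because all eigenvalues of $\mathbf{J}_q$ are nonzero, while for every $\alpha \ne 0$ the isolated-vertex block contributes $\alpha \mathbf{I}_{n-q}$ and hence $\operatorname{rank}(\mathbf{A} + \alpha \mathbf{I}) \ge n - q \ge q$; therefore $\min_{\alpha \in \mathbb{R}} \operatorname{rank}(\mathbf{A} + \alpha \mathbf{I}) = q \le gc(G)$.
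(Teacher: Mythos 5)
Your proof is correct and uses essentially the same approach as the paper: both rely on the spectral/rank lower bound of Section~\ref{sec:lower}, with the paper arguing via Theorem~\ref{thm:lower} and a case split on $\alpha$ relative to the clique eigenvalues $1$ and $-(q-1)$, while you invoke the equivalent Corollary~\ref{cor:geommult} by computing the full spectrum $\{q-1,\,-1,\,0\}$ with multiplicities $1$, $q-1$, $n-q$. Your alternative rank argument ($\alpha=0$ versus $\alpha\neq 0$) is a slightly cleaner packaging of the paper's own case analysis, and your separate treatment of $q=1$ matches the paper's.
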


\begin{proof}
Without loss of generality, we may assume \( q \ge 2 \), since if \( q = 1 \), the inequality holds trivially.

According to \cite[pg.~8]{brouwer2011spectra}, the eigenvalues of the complete graph \( K_q \) are \( -1 \) and \( q - 1 \). We consider the matrix \( \mathbf{A} + \alpha \mathbf{I} \) in the context of Theorem~\ref{thm:lower}. We analyze three cases:
\begin{itemize}
    \item If \( \alpha = 1 \), then \( \operatorname{rank}(\mathbf{A} + \alpha \mathbf{I}) \ge n - q \ge q \), since the last \( n - q \) rows of \( \mathbf{A} + \alpha \mathbf{I} \) are linearly independent.
    \item If \( \alpha = -(q - 1) \), the same argument applies.
    \item If \( \alpha \notin \{1, -(q - 1)\} \), then the principal \( q \times q \) submatrix corresponding to the clique \( K_q \) is full rank, so \( \operatorname{rank}(\mathbf{A} + \alpha \mathbf{I}) \ge q \).
\end{itemize}

Thus, in all three cases, \( \operatorname{rank}(\mathbf{A} + \alpha \mathbf{I}) \ge q \). By Theorem~\ref{thm:lower}, it follows that:
\[
q \le \min_{\alpha \in \mathbb{R}} \operatorname{rank}(\mathbf{A} + \alpha \mathbf{I}) \le gc(G).
\]
\end{proof}

We next consider perfect matching graphs.

\begin{corollary}
For all \( q \in \mathbb{N} \), we have:
\[
gc(PM_q) \ge q.
\]
\label{cor:pmlower}
\end{corollary}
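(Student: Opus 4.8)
The plan is to invoke Corollary~\ref{cor:geommult}, which reduces the statement to a computation of the eigenvalue multiplicities of the adjacency matrix of \( PM_q \).

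First I would use the fact that \( PM_q \) is the disjoint union of \( q \) copies of \( K_2 \), so after reordering its vertices the adjacency matrix \( \mathbf{A} \) is block diagonal with \( q \) identical \( 2 \times 2 \) blocks \( \left[\begin{smallmatrix} 0 & 1 \\ 1 & 0 \end{smallmatrix}\right] \). Each block has eigenvalues \( +1 \) and \( -1 \), so \( \operatorname{spec}(\mathbf{A}) = \{+1, -1\} \) with \( \operatorname{mult}(+1) = \operatorname{mult}(-1) = q \). Since \( n = |V(PM_q)| = 2q \), Corollary~\ref{cor:geommult} yields
\[
gc(PM_q) \ge n - \max_{\lambda \in \operatorname{spec}(\mathbf{A})} \operatorname{mult}(\lambda) = 2q - q = q.
\]

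I do not expect a genuine obstacle here; this is a direct corollary of the lower-bound machinery already developed. The only point to verify is that \( \pm 1 \) are indeed the only eigenvalues and that each occurs exactly \( q \) times, which is immediate from the block structure. If one prefers to avoid explicitly diagonalizing, an alternative is to note that \( \mathbf{A}^2 = \mathbf{I} \) (each vertex has a unique neighbor), so the minimal polynomial divides \( x^2 - 1 \), forcing every eigenvalue into \( \{+1, -1\} \); combined with \( \operatorname{tr}(\mathbf{A}) = 0 \) this pins both multiplicities to \( q \), and the bound follows as above.
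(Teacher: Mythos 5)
Your proof is correct and uses essentially the same machinery as the paper: the paper applies Theorem~\ref{thm:lower} directly by computing $\operatorname{rank}(\mathbf{A} + \alpha\mathbf{I})$ block-by-block (rank $q$ at $\alpha = \pm 1$, rank $2q$ otherwise), while you apply the equivalent eigenvalue reformulation in Corollary~\ref{cor:geommult} to the same block-diagonal structure. The two computations are interchangeable by the paper's own equivalence $\min_{\alpha}\operatorname{rank}(\mathbf{A}+\alpha\mathbf{I}) = n - \max_{\lambda}\operatorname{mult}(\lambda)$, so there is nothing further to check.
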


\begin{proof}
Let \( \mathbf{A} \) be the adjacency matrix of \( PM_q \). We apply Theorem~\ref{thm:lower} to the matrix \( \mathbf{A} + \alpha \mathbf{I} \), which has the block-diagonal form:
\[
\mathbf{A} + \alpha \mathbf{I} =
\begin{bmatrix}
    \alpha & 1     &        &        &        \\
    1      & \alpha&        &        &        \\
           &       & \ddots &        &        \\
           &       &        & \alpha & 1      \\
           &       &        & 1      & \alpha \\
\end{bmatrix}
\]

There are \( q \) such \( 2 \times 2 \) blocks. When \( \alpha = \pm 1 \), each block has rank 1, so the total rank is \( q \). For any other \( \alpha \in \mathbb{R} \), each block has rank 2, and the total rank is \( 2q \). Thus,
\[
\min_{\alpha \in \mathbb{R}} \operatorname{rank}(\mathbf{A} + \alpha \mathbf{I}) = q \Longrightarrow gc(PM_q) \ge q.
\]
\end{proof}

Next, we consider path graphs.

\begin{corollary}
For all \( n \in \mathbb{N} \), we have:
\[
gc(P_n) \ge n - 1.
\]
\label{cor:pathlower}
\end{corollary}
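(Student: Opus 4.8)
The plan is to invoke Theorem~\ref{thm:lower} and reduce the claim to showing that $\operatorname{rank}(\mathbf{A} + \alpha \mathbf{I}) \ge n - 1$ for \emph{every} $\alpha \in \mathbb{R}$, where $\mathbf{A}$ is the adjacency matrix of $P_n$. Since Theorem~\ref{thm:lower} gives $\min_{\alpha \in \mathbb{R}} \operatorname{rank}(\mathbf{A}+\alpha\mathbf{I}) \le gc(P_n)$, the uniform rank bound immediately yields $gc(P_n) \ge n-1$. (The cases $n = 1, 2$ are trivial, so one may assume $n \ge 3$.)

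To establish the uniform rank bound, observe that $\mathbf{A} + \alpha \mathbf{I}$ is the $n \times n$ tridiagonal matrix with $\alpha$ on the main diagonal and $1$ on both the sub- and super-diagonals. I would delete its first row and its last column; the resulting $(n-1) \times (n-1)$ matrix $M$ has $M_{i,i} = 1$ for all $i$ (these entries come from the original subdiagonal) and $M_{i,j} = 0$ whenever $j < i$, so $M$ is upper triangular with all diagonal entries equal to $1$. Hence $\det M = 1 \ne 0$, giving $\operatorname{rank}(\mathbf{A} + \alpha \mathbf{I}) \ge \operatorname{rank}(M) = n-1$ for all $\alpha$. An equivalent route is to note, via Corollary~\ref{cor:geommult}, that a tridiagonal matrix with nonzero off-diagonal entries has only simple eigenvalues (equivalently, the spectrum $2\cos\!\big(\tfrac{k\pi}{n+1}\big)$, $k \in [n]$, of $P_n$ consists of $n$ distinct values), so $\max_{\lambda}\operatorname{mult}(\lambda) = 1$ and $gc(P_n) \ge n - 1$.

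Combining the two steps gives $gc(P_n) \ge \min_{\alpha \in \mathbb{R}} \operatorname{rank}(\mathbf{A}+\alpha\mathbf{I}) \ge n-1$, as desired. There is no substantial obstacle here; the only point needing care is that the rank estimate must hold simultaneously for all $\alpha$ rather than at a single value, which is precisely what the ``delete the first row and the last column'' submatrix — whose upper-triangular structure is independent of $\alpha$ — delivers. This parallels the treatment of $PM_q$ in Corollary~\ref{cor:pmlower}, with the difference that for the path the off-diagonal pattern forces the rank of $\mathbf{A}+\alpha\mathbf{I}$ to drop by at most one, rather than by half.
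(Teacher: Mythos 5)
Your proposal is correct. The paper's own proof is the one-line spectral argument you mention only as an ``equivalent route'': it cites the known spectrum $2\cos\bigl(\tfrac{k\pi}{n+1}\bigr)$, $k \in [n]$, of $P_n$, observes that these $n$ values are distinct so $\max_{\lambda}\operatorname{mult}(\lambda)=1$, and applies Corollary~\ref{cor:geommult}. Your primary argument is genuinely different and entirely self-contained: deleting the first row and last column of the tridiagonal matrix $\mathbf{A}+\alpha\mathbf{I}$ leaves an $(n-1)\times(n-1)$ upper-triangular matrix whose diagonal consists of the subdiagonal $1$'s, so its determinant is $1$ independently of $\alpha$, giving $\operatorname{rank}(\mathbf{A}+\alpha\mathbf{I})\ge n-1$ uniformly in $\alpha$ and hence the bound via Theorem~\ref{thm:lower}. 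What your route buys is that it needs no external citation for the eigenvalues and generalizes verbatim to any tridiagonal matrix with nonzero off-diagonal entries (indeed it is essentially the standard proof that such Jacobi matrices have simple spectrum); what the paper's route buys is brevity, since the spectrum of $P_n$ is a textbook fact. Both arguments are correct and both reduce to the same quantity $\min_{\alpha}\operatorname{rank}(\mathbf{A}+\alpha\mathbf{I})$, so the difference is only in how that quantity is bounded below.
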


\begin{proof}
According to \cite[pg.~9]{brouwer2011spectra}, the eigenvalues of the adjacency matrix \( \mathbf{A} \) of \( P_n \) are
\[
2 \cos\left(\frac{k\pi}{n + 1}\right), \quad \text{for } k \in [n],
\]
all of which are distinct. Thus, the maximum multiplicity of any eigenvalue is 1. Applying Corollary~\ref{cor:geommult} yields:
\[
gc(P_n) \ge n - 1.
\]
\end{proof}

The result in Corollary~\ref{cor:pathlower} is significant: it shows that for every positive integer \( n \), the path graph \( P_n \) satisfies \( gc(P_n) \ge n - 1 \). This confirms the \emph{order-optimality} of the Union of Stars construction for general unweighted graphs proposed in~\cite{rajakumar2022generating}, which was previously only conjectured.

We conclude with a lower bound for cycles.

\begin{corollary}
For all \( n \in \mathbb{N} \), we have:
\[
gc(C_n) \ge n - 2.
\]
\end{corollary}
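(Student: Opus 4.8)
The plan is to apply Corollary~\ref{cor:geommult}, so the task reduces to bounding the maximum eigenvalue multiplicity of the adjacency matrix of $C_n$. The eigenvalues of $C_n$ are the classical values $2\cos(2\pi k/n)$ for $k \in \{0,1,\dots,n-1\}$, which follow from the circulant structure of the adjacency matrix (the eigenvectors are the Fourier modes). Unlike the path graph, these eigenvalues are \emph{not} all distinct: the values for $k$ and $n-k$ coincide, so every eigenvalue other than those at $k=0$ and (when $n$ is even) $k=n/2$ appears with multiplicity exactly $2$. Hence $\max_{\lambda}\operatorname{mult}(\lambda) = 2$ for all $n \ge 3$ (one checks the small cases $n=1,2$ separately, or simply notes the statement is vacuous/trivial there since $C_n$ is only defined as a simple graph for $n \ge 3$).

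The key steps, in order: first, recall or cite the spectrum of $C_n$ from the same reference \cite{brouwer2011spectra} used for $P_n$; second, observe that the multiplicity of each eigenvalue is at most $2$, with equality for the ``interior'' frequencies $0 < k < n/2$; third, conclude $\max_{\lambda \in \operatorname{spec}(\mathbf{A})}\operatorname{mult}(\lambda) = 2$; fourth, plug into Corollary~\ref{cor:geommult} to get $gc(C_n) \ge n - 2$. I would also handle the edge cases: for $n=3$, $C_3 = K_3$ has spectrum $\{2, -1, -1\}$, so the max multiplicity is $2$ and the bound gives $gc(C_3) \ge 1$, consistent with Theorem~\ref{thm:gc=1}; for $n=4$, $C_4 = K_{2,2}$ has spectrum $\{2,0,0,-2\}$, again max multiplicity $2$, giving $gc(C_4)\ge 2$, consistent with Theorem~\ref{thm:gc=2}.

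There is essentially no serious obstacle here; the result is a direct corollary of the eigenvalue computation. The only mild subtlety is making sure the multiplicity-$2$ claim is genuinely tight and not accidentally $3$ or more for some special $n$ — this cannot happen because $2\cos(2\pi k/n) = 2\cos(2\pi k'/n)$ with $k,k' \in \{0,\dots,n-1\}$ forces $k' = k$ or $k' = n-k$, a purely elementary fact about the cosine function on a discrete set of arguments. One could alternatively phrase the whole argument via Theorem~\ref{thm:lower} directly, noting that $\mathbf{A} + \alpha\mathbf{I}$ is a circulant matrix whose rank equals the number of nonzero Fourier coefficients, which is at least $n-2$ for every choice of $\alpha$; but the spectral phrasing through Corollary~\ref{cor:geommult} is cleanest and parallels the treatment of $P_n$.
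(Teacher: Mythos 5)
Your proof is correct, but it takes a different route from the paper. The paper dispatches this corollary in one line: $P_{n-1}$ is an induced subgraph of $C_n$, so Theorem~\ref{thm:induced} together with the already-established bound $gc(P_{n-1}) \ge n-2$ (Corollary~\ref{cor:pathlower}) gives the result immediately. You instead apply the spectral bound (Corollary~\ref{cor:geommult}) directly to the cycle, using the circulant eigenvalues $2\cos(2\pi k/n)$ and the elementary observation that each eigenvalue has multiplicity at most $2$. Both arguments are sound and yield the same constant. The paper's approach is shorter and reuses machinery already in place, illustrating how the induced-subgraph monotonicity propagates lower bounds between graph families; your approach is self-contained and carries a small bonus: it shows that the spectral lower bound of Corollary~\ref{cor:geommult} evaluates to \emph{exactly} $n-2$ for cycles (the maximum multiplicity is genuinely $2$, not $1$), so no improvement on this bound for $C_n$ can come from the eigenvalue argument alone --- one would need a different technique to close the gap to the upper bound $n+1$. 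Your handling of the small cases $n=3,4$ and the verification that no eigenvalue has multiplicity $3$ or more are both correct.
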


\begin{proof}
Since \( P_{n-1} \) is an induced subgraph of \( C_n \), Theorem~\ref{thm:induced} gives:
\[
gc(C_n) \ge gc(P_{n - 1}) \ge n - 2.
\]
\end{proof}

\subsection{Upper Bound for Specific Graph Families \label{sec:upperspec}}

Solving the Graph Coupling Problem for special families of graphs is not only an interesting problem in its own right but also, as we demonstrate later, useful for improving the upper bound of \( gc \) for general unweighted graphs. Here, we discuss feasible solutions for cliques, unions of vertex-disjoint cliques, perfect matchings, cycles, and paths. Although these solutions are not necessarily optimal, they are close, as the lower bounds on \( gc \) for such graphs in the previous section demonstrate.

We begin by showing that the graph coupling number of a clique on \( q \) vertices is at most \( q + 2 \).

\begin{theorem}
    \( gc(K_q \cup \overline{K_{n - q}}) \le q + 2 \) for all \( q \in [n] \).
    \label{thm:clique}
\end{theorem}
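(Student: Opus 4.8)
The plan is to prove the bound constructively: I will exhibit an explicit feasible pair \((\mathbf{P},\mathbf{W})\) with \(k=q+2\) rows, so that \(\|\mathbf{W}\|_0\le q+2\), using the spin‑biclique reformulation~\eqref{eq:sb-decomp}. Relabel the vertices so that \([q]\) induces the clique and \(\{q+1,\dots,n\}\) are the isolated vertices. As the \(q+2\) rows of \(\mathbf{P}\) I take: for each \(i\in[q]\) the single‑vertex row \(r_{\{i\}}\) (i.e.\ \(-1\) in coordinate \(i\), \(+1\) elsewhere), which by Lemma~\ref{thm:rowsb} encodes the spin biclique \(SB_{\{i\},V\setminus\{i\}}\); the row \(r_{[q]}\), encoding the biclique \(SB_{[q],V\setminus[q]}\) separating the clique from the isolated part (the global sign of this row is immaterial, since it enters only through \(r_{[q]}^\top r_{[q]}\)); and the all‑ones row \(\mathds{1}=r_\emptyset\), encoding the trivial biclique \(K_n\). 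Denote the corresponding diagonal weights by \(w_1,\dots,w_q\) (single‑vertex bicliques), \(w_{q+1}\), and \(w_{q+2}\).

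Next I would pin down the weights. Using the column‑wise dot product identity~\eqref{eq:column-wise-dot}, for an off‑diagonal pair \(\{a,b\}\) the entry \((\mathbf{P}^\top\mathbf{W}\mathbf{P})_{a,b}\) depends only on which of three types \(\{a,b\}\) falls into --- both clique vertices (target value \(1\)), one clique and one isolated (target \(0\)), or both isolated (target \(0\)) --- and in each type it is an explicit linear expression in \(w\). Looking for a solution symmetric under permutations of the clique, I set \(w_1=\cdots=w_q\), which reduces the three conditions to a \(3\times3\) linear system; solving it gives \(w_1=\cdots=w_q=-\tfrac14\), \(w_{q+1}=\tfrac14\), \(w_{q+2}=\tfrac{q-1}{4}\). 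A useful sanity check here is that \(\operatorname{tr}(\mathbf{W})=q(-\tfrac14)+\tfrac14+\tfrac{q-1}{4}=0\), so by Theorem~\ref{thm:hadel} the candidate should in fact satisfy \(\mathbf{P}^\top\mathbf{W}\mathbf{P}=\mathbf{A}\) on the diagonal as well.

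The last step is a direct verification: substitute the weights back into the three entry expressions and confirm they equal \(1,0,0\) respectively, which gives \(\mathbf{A}=\mathbf{P}^\top\mathbf{W}\mathbf{P}\odot\mathbf{J}\) and hence \(gc(K_q\cup\overline{K_{n-q}})\le\|\mathbf{W}\|_0\le q+2\). One should also check the boundary ranges \(n-q\in\{0,1\}\): when there are fewer than two isolated vertices the ``both isolated'' condition (and, if \(n=q\), the ``one isolated'' condition) is vacuous, but the same weights still satisfy whatever conditions remain, so the bound holds for all \(q\in[n]\).

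There is no genuinely hard step; the only creativity is in \emph{choosing} the family of \(q+2\) bicliques. The rank lower bound suggests aiming for \(k\) near \(q\) (the adjacency matrix of \(K_q\cup\overline{K_{n-q}}\) has rank \(q\), forcing \(\|\mathbf{W}\|_0\ge q\) via Theorem~\ref{thm:lower}), and the ``\(+2\)'' is the price of the \(\{\pm1\}\) constraint: one extra biclique decouples the isolated block from the clique, and one more (the all‑ones biclique) rescales the clique edges to the required value \(1\). This overhead is not an artifact of the construction --- already \(K_2\cup\overline{K_2}\) has \(gc=4=q+2\) by Theorems~\ref{thm:gc=2} and~\ref{thm:gc!=3} --- so the honest route is to guess the construction and verify it, rather than to try to argue the bound from rank alone.
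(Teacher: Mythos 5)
Your construction is exactly the one in the paper: the $q$ single-vertex spin bicliques $SB_{\{i\},V\setminus\{i\}}$ with weight $-\tfrac14$, the biclique $SB_{[q],V\setminus[q]}$ with weight $\tfrac14$, and $K_n$ with weight $\tfrac{q-1}{4}$, giving $q+2$ rows. The proposal is correct and takes essentially the same approach (with the verification spelled out in slightly more detail than the paper's one-line statement of the decomposition).
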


\begin{proof}
    One can decompose \( G = K_q \cup \overline{K_{n - q}} \) into the following \( q + 2 \) spin bicliques:
    \[
    G = - \sum_{i=1}^{q} \frac{1}{4} SB_{\{i\}, V \setminus \{i\}} + \frac{1}{4} SB_{\{1, 2, \dots, q\}, V \setminus \{1, 2, \dots, q\}} + \frac{q-1}{4} K_n.
    \]
    Therefore, \( gc(G) \le q + 2 \).
\end{proof}

We can strengthen this result by using Theorem~\ref{thm:union}:

\begin{corollary}
    \( gc(K_q \cup \overline{K_{n - q}}) \le \min\{q, n - q\} + 2 \) for all \( q \in [n] \).
\end{corollary}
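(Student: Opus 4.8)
The plan is to complement the bound $gc(K_q \cup \overline{K_{n-q}}) \le q+2$ of Theorem~\ref{thm:clique} with a second bound $gc(K_q \cup \overline{K_{n-q}}) \le (n-q)+2$; the corollary is then immediate by taking the smaller of the two. Equivalently, one only has to handle the case $q > n-q$. Write $q' = n-q$, let $A$ be the $q$-vertex clique and $B$ the set of $q'$ isolated vertices, so $A \cup B = V$.

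For the second bound I would mirror the spin-biclique construction in the proof of Theorem~\ref{thm:clique}, except that the single-vertex stars are now centered at the $q'$ vertices of $B$ rather than at the clique vertices. Concretely, posit a decomposition
\[
G \;=\; c\, K_n \;+\; \gamma\, SB_{A,B} \;+\; \beta \sum_{b \in B} SB_{\{b\}, V \setminus \{b\}},
\]
with scalars $c, \gamma, \beta$ to be determined. Viewing this edge-by-edge as in Equation~\eqref{eq:column-wise-dot}, every edge lies in exactly one of three classes — both endpoints in $A$ (target weight $1$), both in $B$ (target weight $0$), or one in each (target weight $0$) — and each class yields one linear equation in $c,\gamma,\beta$. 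Solving the $3\times 3$ system gives $\beta = \gamma = \tfrac14$ and $c = \tfrac{3-q'}{4}$; one then checks this choice satisfies all three equations. The decomposition uses $q'$ stars, the single biclique $SB_{A,B}$, and one scaled copy of $K_n$, hence at most $q'+2$ spin bicliques, so $gc(K_q \cup \overline{K_{q'}}) \le q'+2 = (n-q)+2$. Together with Theorem~\ref{thm:clique} this gives $gc(K_q \cup \overline{K_{n-q}}) \le \min\{q,\,n-q\}+2$.

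I expect the crux to be identifying the right collection of spin bicliques: the key point is that stars on the isolated side, together with exactly one $A$–$B$ biclique and one scaled complete graph, supply precisely three free parameters, matching the three edge-weight constraints, and that the resulting linear system turns out to be consistent. The remaining care is with the degenerate small cases $q' \in \{0,1\}$ (where, e.g., $SB_{A,\{b\}}$ and $SB_{\{b\},V\setminus\{b\}}$ name the same spin biclique, and $q'=0$ is the trivial $K_n$) and $q'=3$ (where $c$ vanishes); in each of these the number of distinct spin bicliques used only decreases, so the bound still holds. (One can alternatively write $K_n$ as the edge-disjoint union of $K_q \cup \overline{K_{n-q}}$ and the complete split graph and invoke Theorem~\ref{thm:union}, but that costs an additive constant, so it is the explicit decomposition above that delivers the sharp $\min\{q,n-q\}+2$.)
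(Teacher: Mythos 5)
Your proof is correct. The linear system you set up is right: for an edge inside the clique $A$ the contributions are $c+\gamma+q'\beta=1$, for an edge inside $B$ they are $c+\gamma+(q'-4)\beta=0$, and for a cross edge $c-\gamma+(q'-2)\beta=0$, which indeed forces $\beta=\gamma=\tfrac14$ and $c=\tfrac{3-q'}{4}$, and these values satisfy all three equations. Your handling of the degenerate cases $q'\in\{0,1\}$ (coincident spin bicliques) and $q'=3$ (vanishing $K_n$ coefficient) is also sound.

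Interestingly, you end up with \emph{exactly} the decomposition the paper produces, but by a different route. The paper notes that $K_q\cup K_{n-q}$ has a two-row solution (rows $r_{\{1,\dots,q\}}$ and $\mathds{1}$, via Theorem~\ref{thm:gc=2}), applies the subtraction procedure from the proof of Theorem~\ref{thm:union} to $G_1=\bigl(K_q\cup K_{n-q}\bigr)-\bigl(\overline{K_q}\cup K_{n-q}\bigr)$, and observes that the two rows to be appended already occur in the Theorem~\ref{thm:clique} construction for $\overline{K_q}\cup K_{n-q}$, so merging weights keeps the row count at $(n-q)+2$. Carrying out that subtraction explicitly yields precisely your $\tfrac14\sum_{b\in B}SB_{\{b\},V\setminus\{b\}}+\tfrac14 SB_{A,B}+\tfrac{3-q'}{4}K_n$. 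Your direct construction is more self-contained and elementary (no reliance on Theorem~\ref{thm:union} or on inspecting which rows coincide), at the cost of re-deriving the coefficients from scratch; the paper's version is shorter given its existing machinery but requires the reader to check the row-coincidence claim. Your closing remark is also accurate: invoking Theorem~\ref{thm:union} as a black box would only give $(n-q)+4$, so either the explicit decomposition or the row-merging observation is genuinely needed to reach $(n-q)+2$.
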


\begin{proof}
    We consider two cases:
    \begin{itemize}
        \item If \( q \le n - q \), we are done by Theorem~\ref{thm:clique}.
        \item If \( q > n - q \), let \( G_1 = K_q \cup \overline{K_{n - q}} \), and \( G_2 = \overline{K_q} \cup K_{n - q} \). Then Theorem~\ref{thm:union} gives:
        \[
        |gc(G_1) - gc(G_2)| \le 2.
        \]
        Since \( gc(K_q \cup K_{n - q}) = 2 \) by Theorem~\ref{thm:gc=2} with corresponding \(\mathbf{P}\) having the two rows: \(r_{\{1, 2, \dots, q \}}\) and \(\mathds{1}\). With the procedure introduced in the proof of Theorem~\ref{thm:union}, one can construct a feasible pair \( (\mathbf{P}_1, \mathbf{W}_1) \) for \( G_1 \) from \( (\mathbf{P}_2, \mathbf{W}_2) \) for \( G_2 \) by simply adding the rows \( r_{\{1, 2, \dots, q\}} \) and \( \mathds{1}_n \) to the rows of \( \mathbf{P}_2 \), and appending the corresponding weights to \( \mathbf{W}_2 \). Since \(\mathbf{P}_2\), constructed using Theorem~\ref{thm:clique}, already includes these rows, we conclude that there is a feasible solution for \( G_1 \) with at most \( n - q + 2 \) rows.
    \end{itemize}
\end{proof}

Next, we give a construction for the union of vertex-disjoint cliques.

\begin{theorem}
    Let \(S_1, S_2, \dots, S_q\) be a partition of the vertices of the graph \(G\) where each set of vertices \(S_i\) form a clique. i.e. \(G = K_{S_1} \cup K_{S_2} \cup \dots \cup K_{S_q}\) then: \[ gc(G) \le q + 1 \].
    \label{thm:undiscli}
\end{theorem}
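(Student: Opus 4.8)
The plan is to exhibit an explicit feasible pair $(\mathbf{P},\mathbf{W})$ with exactly $q+1$ rows, generalizing the decomposition behind Theorem~\ref{thm:clique}. Concretely, I would take the $q$ spin bicliques associated with the cuts $\delta(S_1),\dots,\delta(S_q)$ together with one copy of $K_n$: let the rows of $\mathbf{P}$ be $r_{S_1},\dots,r_{S_q}$ and $\mathds{1}$, with corresponding weights $w_1=\dots=w_q=\tfrac14$ and $w_{q+1}=1-\tfrac q4$. Since weights may be any reals (negative or zero), this is a legitimate candidate, and in particular $\operatorname{tr}(\mathbf{W})=1$.

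The verification runs through the column-wise dot-product formulation in Equation~\eqref{eq:column-wise-dot}: I must check $(\mathbf{P}_u\odot\mathbf{P}_v)^\top w=\mathbf{A}_{uv}$ for every pair $u<v$. For a vertex $v\in S_a$, its column $\mathbf{P}_v$ is $-1$ in coordinate $a$, $+1$ in the coordinates of the other cliques, and $+1$ in the last coordinate. Two sub-cases then arise. If $u,v$ lie in the same clique $S_a$, every coordinate of $\mathbf{P}_u\odot\mathbf{P}_v$ equals $+1$, so the dot product is $\sum_i w_i=\operatorname{tr}(\mathbf{W})=1=\mathbf{A}_{uv}$. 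If $u\in S_a$, $v\in S_b$ with $a\ne b$, exactly coordinates $a$ and $b$ of $\mathbf{P}_u\odot\mathbf{P}_v$ are $-1$ and all others are $+1$, so the dot product equals $\operatorname{tr}(\mathbf{W})-2w_a-2w_b=1-\tfrac12-\tfrac12=0=\mathbf{A}_{uv}$. This establishes feasibility, and since $\mathbf{P}$ has $q+1$ rows we conclude $gc(G)\le q+1$; the degenerate case $q=1$ (where $G=K_n$) is covered by Theorem~\ref{thm:gc=1} or directly by the same construction.

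There is no serious obstacle here: the argument is sign bookkeeping on the vectors $r_{S_i}$, and the only ``discovery'' step is solving the small linear system imposed by the two sub-cases---namely $\operatorname{tr}(\mathbf{W})=1$ from the within-clique edges and $w_a+w_b=\tfrac12$ for all $a\ne b$ from the absent cross-clique edges---whose uniform solution $w_i=\tfrac14$ (for $i\le q$) one simply guesses and verifies. A minor point worth flagging is that when $q=4$ the weight $w_{q+1}$ vanishes, so the bound is slack in that case; and one can observe that applying the construction to a single clique plus isolated vertices recovers the $n-q+2$ branch of the corollary following Theorem~\ref{thm:clique}, a useful consistency check though not needed for the proof.
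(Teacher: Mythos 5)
Your construction is exactly the paper's: the rows $r_{S_i}$ with weight $\tfrac14$ are the spin bicliques $\tfrac14\,SB_{S_i,V\setminus S_i}$ and the all-ones row with weight $1-\tfrac q4$ is $(1-\tfrac q4)K_V$, so this is the same decomposition $G=(1-\tfrac q4)K_V+\sum_i\tfrac14 SB_{S_i,V\setminus S_i}$ stated in the paper, and your column-wise verification is correct. The only difference is that you spell out the sign bookkeeping the paper leaves implicit (and note the harmless $q=4$ degeneracy).
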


\begin{proof}
    Let \( G = K_{S_1} \cup K_{S_2} \cup \dots \cup K_{S_q} \), and let \( G_i = SB_{S_i, V \setminus S_i} \). Then:
    \[
    G = \left(1 - \frac{q}{4}\right) K_V + \sum_{i = 1}^{q} \frac{1}{4} G_i \quad \Longrightarrow \quad gc(G) \le q + 1.
    \]
\end{proof}

A perfect matching graph is a special case of a union of vertex-disjoint cliques, where all cliques have size 2.

\begin{corollary}
    \( gc(PM_q) \le q + 1 \) for all \( q \in \mathbb{N} \).
    \label{cor:pmupper}
\end{corollary}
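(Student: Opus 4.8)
The statement to prove is $gc(PM_q) \le q + 1$ for all $q \in \mathbb{N}$.

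The plan is to recognize that this is an immediate specialization of Theorem~\ref{thm:undiscli}. A perfect matching $PM_q$ on $2q$ vertices is precisely a disjoint union of $q$ cliques, each of size $2$: writing $V = S_1 \cup S_2 \cup \dots \cup S_q$ with $|S_i| = 2$, the induced subgraph on each $S_i$ is $K_2$ (a single edge), and there are no edges between distinct parts. Hence $PM_q = K_{S_1} \cup K_{S_2} \cup \dots \cup K_{S_q}$ is a partition of the vertex set into $q$ cliques, which is exactly the hypothesis of Theorem~\ref{thm:undiscli}.

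First I would state this identification explicitly, then apply Theorem~\ref{thm:undiscli} with this partition to conclude $gc(PM_q) \le q + 1$. For completeness it is worth unwinding the construction from that theorem's proof in this special case: with $G_i = SB_{S_i, V \setminus S_i}$ one has
\[
PM_q = \left(1 - \frac{q}{4}\right) K_V + \sum_{i=1}^{q} \frac{1}{4} G_i,
\]
a weighted sum of $q + 1$ spin bicliques (the $q$ cut bicliques $G_i$ plus the complete graph $K_V$ with coefficient $1 - q/4$), which certifies the bound. One can optionally remark that when $q = 4$ the coefficient of $K_V$ vanishes and the bound improves to $q$, though that refinement is not needed for the statement.

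There is essentially no obstacle here: the only thing to check is that the decomposition is valid, i.e.\ that each edge inside some $S_i$ receives total weight $1$ while each non-edge (a pair straddling two parts, or—vacuously—a pair inside a part that is a non-edge, of which there are none since $|S_i|=2$) receives total weight $0$. This is the routine verification already carried out in the proof of Theorem~\ref{thm:undiscli}, specialized to clique sizes $2$, and requires no new ideas. The main "content" of the corollary is simply the observation that $PM_q$ fits the disjoint-union-of-cliques template.
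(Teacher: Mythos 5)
Your proposal is correct and matches the paper's proof exactly: the paper also derives this as an immediate consequence of Theorem~\ref{thm:undiscli} by viewing \(PM_q\) as a disjoint union of \(q\) cliques of size~2. The extra unwinding of the decomposition is harmless but not needed.
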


\begin{proof}
    Immediate from Theorem~\ref{thm:undiscli}.
\end{proof}

Next, we study cycles. An even cycle can be decomposed into two perfect matchings by taking alternating edges. For odd cycles, a more detailed construction is needed. The following corollary summarizes our results.

\begin{corollary}
    \( gc(C_n) \le n + 1 \).
    \label{cor:cycleupper}
\end{corollary}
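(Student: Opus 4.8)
The plan is to split into the even and odd cases, handling even cycles via the perfect-matching decomposition already available to us. For $n$ even, write $C_n = M_1 \cup M_2$ where $M_1$ and $M_2$ are the two perfect matchings obtained by alternating edges around the cycle; each is (a copy of) $PM_{n/2}$. Applying Theorem~\ref{thm:union} together with Corollary~\ref{cor:pmupper} gives $gc(C_n) \le gc(M_1) + gc(M_2) \le (n/2+1) + (n/2+1) = n+2$, which is one more than we want; so instead I would apply Theorem~\ref{thm:undiscli} directly to $M_1$ (a union of $n/2$ disjoint edges, i.e.\ $K_2$'s) to get $gc(M_1)\le n/2+1$, and note that $M_1$ and $M_2$ \emph{share} the spin biclique $K_V$ and (up to scaling) the same global-coupling layer structure, so that the union can reuse one layer. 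Concretely, the decomposition of Theorem~\ref{thm:undiscli} for $M_1$ uses the layers $\{SB_{e,V\setminus e} : e \in M_1\}$ plus one copy of $K_V$; doing the same for $M_2$ adds $\{SB_{e,V\setminus e}: e\in M_2\}$ but needs \emph{no} new $K_V$ layer if we fold its coefficient into the existing one. This yields $gc(C_n) \le n/2 + n/2 + 1 = n+1$ for even $n$.

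For $n$ odd the cycle is not $2$-edge-colorable into matchings, so a direct construction is needed, as the corollary's preamble warns. Here I would peel off one edge: $C_n = P_n \cup \{v_n v_1\}$ is a path plus a single edge. A single edge $\{v_n,v_1\}$ is itself a $K_2 \cup \overline{K_{n-2}}$, with $gc = 2$ by Theorem~\ref{thm:gc=2}, realized by $\mathbf{P}$ with rows $r_{\{1\}}$ (say) and $\mathds{1}_n$ — wait, more carefully one edge is $K_{1,1}\cup\overline{K_{n-2}}$, handled by the clique construction with $q=2$. So by Theorem~\ref{thm:union}, $gc(C_n) \le gc(P_n) + 2$. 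This is only useful once we have a good bound on $gc(P_n)$; presumably the paper establishes $gc(P_n)\le n-1$ (matching the lower bound of Corollary~\ref{cor:pathlower}) immediately after this corollary, in which case $gc(C_n)\le (n-1)+2 = n+1$. Alternatively, for odd $n$ one can write $C_n$ as an even cycle $C_{n-1}$ on $\{v_1,\dots,v_{n-1}\}$ with the edge $v_{n-1}v_1$ removed, plus the two edges $v_{n-1}v_n$ and $v_n v_1$; but the cleaner route is the path-plus-edge decomposition.

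The main obstacle is the even case: the naive union bound overshoots by one, so the argument must genuinely exploit that the $K_V$ (all-ones) layer is common to the two matching decompositions and need only be counted once — this is exactly the kind of layer-sharing that Remark~\ref{cor:colclass} and the structure of Theorem~\ref{thm:undiscli} make transparent. Equivalently, and perhaps most cleanly, one observes that a single even cycle's adjacency matrix $\mathbf{A}$ has $\mathbf{A} + \alpha\mathbf{I}$ decomposing as $(1-\frac{n/2+ n/2}{4})K_V + \sum_{e\in C_n}\frac14 SB_{e,V\setminus e}$ after collecting: there are $n$ edges, hence $n$ spin-biclique layers $SB_{e,V\setminus e}$, plus one $K_V$ layer, giving $n+1$ total. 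One should double-check that the coefficient of $K_V$ is nonzero (it is $1 - n/4$, which vanishes only at $n=4$, where $C_4 = K_{2,2}$ already has $gc = 2 \le 5$, so the bound holds a fortiori), and that when an edge set is not a single clique-partition the formula of Theorem~\ref{thm:undiscli} still applies — it does, since $C_n$ is \emph{not} a disjoint union of cliques, so one must instead verify the spin-biclique identity $\sum_{\{i,j\}\in E(C_n)} \tfrac14 SB_{\{i,j\},V\setminus\{i,j\}} + cK_V = \mathbf{A}$ by direct entry-wise computation, using that $(SB_{\{i,j\},V\setminus\{i,j\}})_{k\ell} = -1$ iff exactly one of $k,\ell$ lies in $\{i,j\}$. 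That computation is routine once set up, and I would present it as the proof body for both parities simultaneously.
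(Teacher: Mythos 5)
Your proposal converges on essentially the same decomposition as the paper: the identity $C_n = cK_V + \tfrac14\sum_{e\in E(C_n)} SB_{e,V\setminus e}$, giving $n+1$ layers, which is exactly what the paper's proof assembles (for even $n$ from two applications of Theorem~\ref{thm:undiscli} with a shared $K_V$ layer, and for odd $n$ from the pieces $G_1$, $G_2$, $G_3$ whose stray single-vertex spin bicliques cancel). Your even-case argument and your ``present the entry-wise verification for both parities simultaneously'' fallback are both sound, and the latter is arguably a cleaner unified presentation. Two caveats. First, your primary route for odd $n$ --- $gc(C_n)\le gc(P_n)+2$ --- is unavailable: the paper proves $gc(P_n)\le n+2$ \emph{from} this corollary, not before it, and the bound $gc(P_n)\le n-1$ you ``presume'' is established nowhere (the paper only shows $n-1\le gc(P_n)\le n+2$), so that direction is circular as the paper is structured. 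Second, the collected coefficient of $K_V$ is $2-\tfrac{n}{4}$, not $1-\tfrac{n}{4}$ (each of the two matchings contributes its own $1-\tfrac{q}{4}$ with $q=n/2$); with your constant the entry-wise check would fail, though since you leave $c$ generic in the verification you would recover the correct value. Note also that a vanishing $K_V$ coefficient (which happens at $n=8$, not $n=4$) is not an obstacle for an upper bound --- it simply drops the count to $n$.
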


\begin{proof}
    We consider two cases:
    \begin{enumerate}
        \item \( n \) even: \\
        The edges of an even cycle can be partitioned into two perfect matchings using odd- and even-indexed edges. Since each perfect matching can be constructed using \( \frac{n}{2} + 1 \) spin bicliques (one of which is \( K_n \)), the entire cycle can be constructed using \( n + 1 \) spin bicliques. Note that we can combine the \( K_n \) terms by summing their weights.
        \item \( n \) odd: \\
        Enumerate the vertices and edges of the cycle as \( v_1, \dots, v_n \) and \( e_1, \dots, e_n \). Let \( G_1 \) consist of edges \( e_1, e_3, \dots, e_{n-2} \), \( G_2 \) of \( e_2, e_4, \dots, e_{n-1} \), and \( G_3 \) of the single edge \( e_n \). Then \( C_n = G_1 + G_2 + G_3 \) is an edge partition.

        \( G_1 \) and \( G_2 \) are each of the form described in Theorem~\ref{thm:undiscli}, so we use that construction:
        \[
        G_1 = \left(1 - \frac{n + 1}{8} \right) K_V + \frac{1}{4} SB_{\{v_n\}, V \setminus \{v_n\}} + \frac{1}{4} \sum_{i = 1}^{\frac{n - 1}{2}} SB_{\{v_{2i - 1}, v_{2i}\}, V \setminus \{v_{2i - 1}, v_{2i}\}},
        \]
        \[
        G_2 = \left(1 - \frac{n + 1}{8} \right) K_V + \frac{1}{4} SB_{\{v_1\}, V \setminus \{v_1\}} + \frac{1}{4} \sum_{i = 1}^{\frac{n - 1}{2}} SB_{\{v_{2i}, v_{2i + 1}\}, V \setminus \{v_{2i}, v_{2i + 1}\}}.
        \]
        Using Theorem~\ref{thm:clique}, we have:
        \[
        G_3 = \frac{1}{4} K_V + \frac{1}{4} SB_{\{v_1, v_n\}, V \setminus \{v_1, v_n\}} - \frac{1}{4} SB_{\{v_1\}, V \setminus \{v_1\}} - \frac{1}{4} SB_{\{v_n\}, V \setminus \{v_n\}}.
        \]
        Putting everything together:
        \[
        C_n = G_1 + G_2 + G_3 = \left(2 - \frac{n}{4} \right) K_V + \frac{1}{4} \sum_{i=1}^{n - 1} SB_{\{v_i, v_{i+1}\}, V \setminus \{v_i, v_{i+1}\}} + \frac{1}{4} SB_{\{v_1, v_n\}, V \setminus \{v_1, v_n\}}.
        \]
        Therefore, \( gc(C_n) \le n + 1 \).
    \end{enumerate}
\end{proof}

The path graph \( P_n \) is an induced subgraph of the cycle \( C_{n+1} \), so we can use Theorem~\ref{thm:induced} to obtain the following upper bound.

\begin{corollary}
    \( gc(P_n) \le n + 2 \).
\end{corollary}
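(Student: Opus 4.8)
The plan is to obtain the bound for free from two results already in hand: the monotonicity of $gc$ under vertex restriction (Theorem~\ref{thm:induced}) and the cycle upper bound $gc(C_m) \le m+1$ (Corollary~\ref{cor:cycleupper}). The only thing to check is that $P_n$ arises as an induced subgraph of $C_{n+1}$.

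First I would fix the standard labelling of $C_{n+1}$ with vertices $v_1, \dots, v_{n+1}$ and edges $\{v_i, v_{i+1}\}$ for $i \in [n]$ together with the wrap-around edge $\{v_{n+1}, v_1\}$. Deleting the single vertex $v_{n+1}$ removes exactly the two edges $\{v_n, v_{n+1}\}$ and $\{v_{n+1}, v_1\}$, so the subgraph induced on $S = \{v_1, \dots, v_n\}$ retains precisely the edges $\{v_i, v_{i+1}\}$ for $i \in [n-1]$; that is, $C_{n+1}[S] = P_n$. Then Theorem~\ref{thm:induced} gives $gc(P_n) \le gc(C_{n+1})$, and Corollary~\ref{cor:cycleupper} gives $gc(C_{n+1}) \le (n+1)+1 = n+2$, so chaining the two inequalities yields $gc(P_n) \le n+2$, as claimed.

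There is no real obstacle here; the work has all been done in proving Theorem~\ref{thm:induced} and Corollary~\ref{cor:cycleupper}, and the identification $C_{n+1}[S] = P_n$ is immediate from the definitions of the path and cycle graphs given in Section~\ref{sec:prelim}. (If one wanted a self-contained construction instead, one could imitate the odd/even-cycle decomposition used for $C_n$ directly on $P_n$ by splitting its edges into two matchings plus possibly one leftover edge, but this is strictly more work and produces no better a bound, so I would not pursue it.)
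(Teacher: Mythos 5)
Your proposal is correct and is essentially identical to the paper's own proof: both delete one vertex from $C_{n+1}$ to realize $P_n$ as an induced subgraph, then chain Theorem~\ref{thm:induced} with Corollary~\ref{cor:cycleupper}. Your explicit verification that $C_{n+1}[S] = P_n$ is a minor, harmless elaboration.
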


\begin{proof}
    \[
    gc(P_n) \le gc(C_{n+1}) \le n + 2.
    \]
    The first inequality holds because \( P_n \) is an induced subgraph of \( C_{n+1} \) by Theorem~\ref{thm:induced}, and the second follows from Corollary~\ref{cor:cycleupper}.
\end{proof}

Combining the upper and lower bounds derived for these specific graph families, we obtain:

\begin{itemize}
    \item \( q \le gc(K_q \cup \overline{K_{n - q}}) \le q + 2 \) for all \( q \le \frac{n}{2} \). Hence, the possible values for \( gc \) of such a clique on q vertices are \( q, q + 1, \) or \( q + 2 \).
    \item \( q \le gc(PM_q) \le q + 1 \), so the coupling number for perfect matchings is either \( q \) or \( q + 1 \).
    \item \( n - 1 \le gc(P_n) \le n + 2 \), giving four possible values for the coupling number of a path.
    \item \( n - 2 \le gc(C_n) \le n + 1 \), again yielding four possibilities for the coupling number of a cycle.
\end{itemize}

In particular, the fact that \( gc(PM_q) \in \{q, q + 1\} \) motivates further investigation of perfect matching graphs. In the following section, we explore which values of \( q \) lead to \( gc = q \) and which yield \( gc = q + 1 \).

\subsection{Connection to the Hadamard Conjecture \label{sec:hadamard}}

So far, we have established that \( q \le gc(PM_q) \le q + 1 \) from Corollary~\ref{cor:pmupper} and Theorem~\ref{cor:pmlower}. This narrows the graph coupling number of perfect matching graphs to just two possibilities. We argue that whenever a Hadamard matrix of order \( q \) exists, then \( gc(PM_q) = q \); otherwise, \( gc(PM_q) = q + 1 \). The existence of Hadamard matrices for all such orders, however, remains an open problem. We begin by formally defining Hadamard matrices.

\begin{Def}
An order-\( n \) \textbf{Hadamard matrix} is an \( n \times n \) matrix with entries in \( \{-1, +1\} \), whose rows are pairwise orthogonal. Examples include:
\[
\mathbf{H}_1 = 
\begin{bmatrix}
    1
\end{bmatrix}, \quad
\mathbf{H}_2 = 
\begin{bmatrix}
    1 & 1 \\
    1 & -1
\end{bmatrix}, \quad
\mathbf{H}_4 = 
\begin{bmatrix}
    1 & 1 & 1 & 1 \\
    1 & -1 & 1 & -1 \\
    1 & 1 & -1 & -1 \\
    1 & -1 & -1 & 1
\end{bmatrix}.
\]
\end{Def}

Equivalently, a matrix \( \mathbf{H} \in \{-1, +1\}^{n \times n} \) is Hadamard if and only if \( \mathbf{H} \mathbf{H}^\top = n \mathbf{I} \). The orthogonality of the columns is also a necessary and sufficient condition for \( \mathbf{H} \) to be Hadamard.

From the definition, it is clear that no Hadamard matrix of odd order exists: the dot product of any two vectors \( \mathbf{v}_1, \mathbf{v}_2 \in \{-1, +1\}^{2k+1} \) cannot be zero. Less obvious, but still provable, is that no Hadamard matrix of order \( 4k + 2 \) exists for any \( k \in \mathbb{N} \). 

To see this, suppose for contradiction that a Hadamard matrix \( \mathbf{H} \) of order \( 4k + 2 \) exists. Let \( \mathbf{v}_1, \mathbf{v}_2, \mathbf{v}_3 \) be three distinct rows of \( \mathbf{H} \). Since \( \mathbf{v}_1 \cdot \mathbf{v}_2 = 0 \), it must be the case that \( 2k + 1 \) entries differ between them. The same must hold between \( \mathbf{v}_2 \) and \( \mathbf{v}_3 \). Then \( \mathbf{v}_3 \) differs from \( \mathbf{v}_1 \) in at most \( 4k + 2 - 2C \) entries, where \( C \) is the number of common flipped positions. Therefore, the number of differing entries between \( \mathbf{v}_1 \) and \( \mathbf{v}_3 \) is even, contradicting the orthogonality condition, which would require \( 2k + 1 \) (odd) differences. Thus, there cannot be three mutually orthogonal \( \pm 1 \)-vectors in \( \mathbb{R}^{4k + 2} \), let alone \( 4k + 2 \) of them.

Consequently, no Hadamard matrix exists for any \( n \ge 4 \) such that \( n \equiv 1, 2, 3 \pmod{4} \). All existing Hadamard matrices must be of order 1, 2, or a multiple of 4~\cite{hedayat1978hadamard}.

\begin{conjecture}[Hadamard Conjecture]
For every integer \( n \ge 4 \) such that \( n \equiv 0 \pmod{4} \), there exists an order-\( n \) Hadamard matrix.
\end{conjecture}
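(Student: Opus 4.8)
The plan is to pursue a constructive program: rather than establishing existence abstractly, I would build an explicit order-$n$ Hadamard matrix for each $n \equiv 0 \pmod 4$ by combining a small library of base constructions through a multiplicative closure operation. The structural engine is that Hadamard matrices are closed under tensor (Kronecker) products: if $\mathbf{H}$ and $\mathbf{H}'$ are Hadamard of orders $m$ and $m'$, then $(\mathbf{H} \otimes \mathbf{H}')(\mathbf{H} \otimes \mathbf{H}')^\top = (m m')\,\mathbf{I}$, so $\mathbf{H} \otimes \mathbf{H}'$ is Hadamard of order $m m'$. Starting from $\mathbf{H}_2$, this Sylvester construction already yields every order $2^k$ and reduces the problem to producing Hadamard matrices whose orders are multiples of $4$ but not pure powers of two. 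Note that the earlier part of the excerpt already settles the necessary direction (no Hadamard matrix exists unless $n \in \{1,2\}$ or $4 \mid n$); the conjecture is the far harder sufficiency claim, which cannot be reached by counting arguments and must instead be met by explicit construction.

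First I would deploy the Paley constructions to cover a dense arithmetic family of orders. For a prime power $q \equiv 3 \pmod 4$, the bordered Legendre-symbol (Jacobsthal) matrix is Hadamard of order $q+1$; for $q \equiv 1 \pmod 4$ a doubling variant produces order $2(q+1)$. Combined with tensor closure, these handle orders $q+1$ for such prime powers together with all their products by powers of two. The orders missed by Sylvester and Paley I would attack with the Williamson construction, which seeks four symmetric circulant $\{\pm 1\}$ matrices $\mathbf{W}_1,\dots,\mathbf{W}_4$ of order $t$ satisfying $\sum_{i} \mathbf{W}_i \mathbf{W}_i^\top = 4t\,\mathbf{I}$; arranging them in a fixed $4 \times 4$ block pattern yields a Hadamard matrix of order $4t$. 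Further sporadic orders would be patched with supplementary tools: Goethals--Seidel arrays, Turyn-type sequences, and amicable or base-sequence constructions.

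The hard part---and here I must be candid---is that this program, despite covering an enormous density of orders, is not known to be complete. No finite library of constructions together with tensor closure has ever been shown to reach every multiple of $4$, because each construction depends on number-theoretic inputs whose simultaneous availability for all $4t$ is exactly what is in doubt: Paley demands a prime power of prescribed residue near the target, Williamson demands four compatible circulants, and for infinitely many $t$ neither is known to succeed. Verifying the conjecture therefore reduces to closing this coverage gap, and the gap has proven stubborn---the smallest open order has been driven only into the hundreds (to $668$ after the resolution of order $428$). I expect this final closure step to be insurmountable by the techniques above, precisely because eliminating the gap is logically equivalent to the conjecture itself.

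In summary, the realistic outcome of this plan is not a proof but a demonstration that the statement is a genuine, long-standing open problem: the constructive machinery verifies the conjecture for an overwhelming majority of admissible orders yet provably cannot, as presently understood, be certified to cover them all. This is why the present paper states the result as a conjecture rather than a theorem, and why our contribution is instead the \emph{reduction} of the perfect-matching coupling gap to this conjecture rather than its resolution.
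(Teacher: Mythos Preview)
Your assessment is correct and aligns exactly with the paper's treatment: the paper does not prove the Hadamard Conjecture at all, explicitly stating that ``the conjecture remains unresolved'' and that the authors ``do not aim to prove or disprove it here.'' Your honest recognition that the constructive program (Sylvester, Paley, Williamson, Goethals--Seidel, etc.) cannot presently be certified to cover every multiple of $4$, and that the paper's actual contribution is the \emph{equivalence} $gc(PM_q)=q \iff$ a Hadamard matrix of order $q$ exists, matches the paper precisely.
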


The conjecture remains unresolved. Although we do not aim to prove or disprove it here, we demonstrate a compelling equivalence with the Graph Coupling Problem. From Corollary~\ref{cor:pmupper} and Theorem~\ref{cor:pmlower}, we know:
\[
q \le gc(PM_q) \le q + 1.
\]
The following theorem provides an exact characterization.

\begin{theorem}
\[
gc(PM_q) = q \iff \text{there exists a Hadamard matrix of order } q.
\]
\end{theorem}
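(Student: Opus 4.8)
The plan is to apply Theorem~\ref{thm:hadel} to trade the Hadamard‑product constraint for the identity $\mathbf{A}+\operatorname{tr}(\mathbf{W})\mathbf{I}=\mathbf{P}^\top\mathbf{W}\mathbf{P}$, and then to exploit the explicit block structure of $\mathbf{A}+\alpha\mathbf{I}$ for a perfect matching that was already worked out in the proof of Corollary~\ref{cor:pmlower}. Throughout, label the vertices so that the matched pairs are $\{1,2\},\{3,4\},\dots,\{2q-1,2q\}$, let $e_i\in\mathbb{R}^{2q}$ be the indicator of the $i$-th pair, and set $E=[\,e_1\ \cdots\ e_q\,]$; then $E^\top E=2\mathbf{I}_q$ and $\mathbf{A}+\mathbf{I}=EE^\top$ is block‑diagonal with $2\times2$ all‑ones blocks.

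For the implication ``a Hadamard matrix of order $q$ exists $\Rightarrow gc(PM_q)=q$'': given such an $\mathbf{H}\in\{\pm1\}^{q\times q}$, define $\mathbf{P}\in\{\pm1\}^{q\times 2q}$ by duplicating columns, $\mathbf{P}_{\ell,2i-1}=\mathbf{P}_{\ell,2i}=\mathbf{H}_{\ell,i}$, and set $\mathbf{W}=\tfrac1q\mathbf{I}_q$. A one‑line block computation shows the $(i,j)$ block of $\mathbf{P}^\top\mathbf{W}\mathbf{P}$ equals $\tfrac1q(\mathbf{H}^\top\mathbf{H})_{ij}=\delta_{ij}$ times the all‑ones $2\times2$ matrix, so $\mathbf{P}^\top\mathbf{W}\mathbf{P}=\mathbf{A}+\mathbf{I}=\mathbf{A}+\operatorname{tr}(\mathbf{W})\mathbf{I}$. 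By Theorem~\ref{thm:hadel}, $(\mathbf{P},\mathbf{W})$ is feasible with $q$ layers, so $gc(PM_q)\le q$, and Corollary~\ref{cor:pmlower} forces equality. (Equivalently one can present the $q$ rows of $\mathbf{P}$ as $q$ weighted spin bicliques plus a multiple of $K_{2q}$, but the matrix identity is the most economical route.)

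For the converse ``$gc(PM_q)=q\Rightarrow$ a Hadamard matrix of order $q$ exists'': let $(\mathbf{P},\mathbf{W})$ be optimal; since the optimal value is $q=\|\mathbf{W}\|_0$, deleting the zero‑weight rows yields $\mathbf{P}\in\{\pm1\}^{q\times 2q}$ and a diagonal $\mathbf{W}\in\mathbb{R}^{q\times q}$ with no zero on the diagonal, still feasible. Write $\alpha=\operatorname{tr}(\mathbf{W})$; Theorem~\ref{thm:hadel} gives $\mathbf{A}+\alpha\mathbf{I}=\mathbf{P}^\top\mathbf{W}\mathbf{P}$, of rank at most $q$. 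By the block analysis in the proof of Corollary~\ref{cor:pmlower}, $\operatorname{rank}(\mathbf{A}+\alpha\mathbf{I})=2q$ for every $\alpha\notin\{1,-1\}$, so necessarily $\alpha=\pm1$. Take $\alpha=1$ (the case $\alpha=-1$ is entirely symmetric, with $e_i$ replaced by the vector equal to $+1$ on the first vertex of pair $i$ and $-1$ on the second, plus an overall sign on $\mathbf{W}$). Then $\operatorname{col}(\mathbf{A}+\mathbf{I})=\operatorname{col}(E)$ is the $q$-dimensional space of vectors constant on each matched pair. Since $\operatorname{col}(\mathbf{P}^\top\mathbf{W}\mathbf{P})\subseteq\operatorname{span}\{p_\ell^\top:\ell\in[q]\}$ has dimension at most $q$ and equals $\operatorname{col}(\mathbf{A}+\mathbf{I})$ of dimension $q$, the two spans coincide; in particular each row $p_\ell$ of $\mathbf{P}$ is a $\pm1$ vector constant on each matched pair, so $p_\ell^\top=Ec_\ell$ for some $c_\ell\in\{\pm1\}^q$. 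Stacking the $c_\ell$ into $\mathbf{C}\in\{\pm1\}^{q\times q}$ gives $EE^\top=E(\mathbf{C}^\top\mathbf{W}\mathbf{C})E^\top$; multiplying on the left by $E^\top$ and on the right by $E$ and using $E^\top E=2\mathbf{I}_q$ yields $\mathbf{C}^\top\mathbf{W}\mathbf{C}=\mathbf{I}_q$. Hence $\mathbf{C}$ is invertible and $\mathbf{W}=(\mathbf{C}\mathbf{C}^\top)^{-1}$; since $\mathbf{W}$ is diagonal so is $\mathbf{C}\mathbf{C}^\top$, and as its diagonal entries are all $q$ we get $\mathbf{C}\mathbf{C}^\top=q\mathbf{I}_q$, i.e.\ $\mathbf{C}$ is an order-$q$ Hadamard matrix.

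I expect the heart of the argument to be the rigidity step in the converse: deducing from the rank bound that every row of $\mathbf{P}$ must be constant on the matched pairs, and that this property passes cleanly through the factorization to the Hadamard identity $\mathbf{C}\mathbf{C}^\top=q\mathbf{I}_q$. The remaining ingredients---reducing to exactly $q$ nonzero weights, excluding $\alpha\neq\pm1$ via the known block ranks, and the elementary facts $E^\top E=2\mathbf{I}_q$ and $EME^\top=ENE^\top\Rightarrow M=N$ for full‑column‑rank $E$---are routine; the only extra care needed is to check that the $\alpha=-1$ branch is a genuine mirror image of $\alpha=1$.
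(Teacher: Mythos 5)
Your proposal is correct, and the forward direction (duplicating the columns of a Hadamard matrix and taking $\mathbf{W}=\tfrac1q\mathbf{I}$) is identical to the paper's. The converse also follows the same overall architecture as the paper --- use the rank bound from Theorem~\ref{thm:hadel} together with the block analysis of $\mathbf{A}+\alpha\mathbf{I}$ to force $\operatorname{tr}(\mathbf{W})=\pm1$, then extract a $q\times q$ matrix in $\{\pm1\}^{q\times q}$ satisfying a congruence $\mathbf{C}^\top\mathbf{W}\mathbf{C}=\pm\mathbf{I}$ with the diagonal $\mathbf{W}$, and conclude orthogonality of its rows --- but you reach that congruence by a genuinely different route. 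The paper simply restricts $\mathbf{P}$ to the odd-indexed columns (one vertex per matched pair): the corresponding principal submatrix of $\mathbf{A}+\operatorname{tr}(\mathbf{W})\mathbf{I}$ is $\pm\mathbf{I}_q$, so $\mathbf{H}^\top\mathbf{W}\mathbf{H}=\pm\mathbf{I}$ falls out immediately with no structural claim about $\mathbf{P}$. You instead prove a rigidity lemma --- $\operatorname{col}(\mathbf{P}^\top)=\operatorname{col}(\mathbf{A}+\mathbf{I})=\operatorname{col}(E)$ forces every row of $\mathbf{P}$ to be constant (or, in the $\alpha=-1$ branch, anti-constant) on the matched pairs --- and then factor $\mathbf{P}=\mathbf{C}E^\top$ and cancel $E$ using $E^\top E=2\mathbf{I}$. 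Both are sound (your $\mathbf{C}$ coincides with the paper's $\mathbf{H}$ once rigidity is known); the paper's submatrix trick is shorter and sidesteps the rigidity step entirely, while your argument yields strictly more information, namely a complete description of the optimal $\mathbf{P}$ matrices for $PM_q$. One small point in your favor over the paper's write-up: you explicitly delete zero-weight rows so that $\mathbf{W}$ is invertible, a hypothesis the paper uses implicitly when writing $\mathbf{W}^{-1}$.
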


\begin{proof}
Let \( \mathbf{A} \) be the adjacency matrix of \( PM_q \).

\noindent\textbf{(\( \Leftarrow \))}: Suppose there exists a Hadamard matrix \( \mathbf{H} \) of order \( q \). Construct a matrix \( \mathbf{P} \in \{-1, +1\}^{q \times 2q} \) by duplicating each column of \( \mathbf{H} \) in place. That is, columns 1 and 2 of \( \mathbf{P} \) equal the first column of \( \mathbf{H} \); columns 3 and 4 equal the second column, and so on. Then:
\[
\mathbf{P}^\top \left(\frac{1}{q} \mathbf{I}\right) \mathbf{P} = \mathbf{A} + \mathbf{I},
\]
so \( (\mathbf{P}, \mathbf{W} = \frac{1}{q} \mathbf{I}) \) is a feasible solution. Combining with Theorem~\ref{cor:pmlower}, we conclude \( gc(PM_q) = q \).

\noindent\textbf{(\( \Rightarrow \))}: Suppose \( gc(PM_q) = q \), so there exists a feasible pair \( (\mathbf{P}, \mathbf{W}) \) with \( \mathbf{P} \in \mathbb{R}^{q \times 2q} \), \( \mathbf{W} \in \mathbb{R}^{q \times q} \) diagonal, such that:
\[
\mathbf{P}^\top \mathbf{W} \mathbf{P} = \mathbf{A} + \operatorname{tr}(\mathbf{W}) \mathbf{I}.
\]
Since the left-hand side has rank at most \( q \), it must be that \( \operatorname{tr}(\mathbf{W}) \in \{-1, 1\} \). Extract the odd-numbered columns of \( \mathbf{P} \) to form matrix \( \mathbf{H} \in \mathbb{R}^{q \times q} \). Then:
\begin{itemize}
    \item If \( \operatorname{tr}(\mathbf{W}) = 1 \), then \( \mathbf{H}^\top \mathbf{W} \mathbf{H} = \mathbf{I} \), so \( \mathbf{W}^{-1} = \mathbf{H} \mathbf{H}^\top \).
    \item If \( \operatorname{tr}(\mathbf{W}) = -1 \), then \( \mathbf{H}^\top \mathbf{W} \mathbf{H} = -\mathbf{I} \), and similarly \( \mathbf{W}^{-1} = -\mathbf{H} \mathbf{H}^\top \).
\end{itemize}

In both cases, since \( \mathbf{W} \) is diagonal, \( \mathbf{H} \mathbf{H}^\top \) must also be diagonal, implying that the rows of \( \mathbf{H} \) are orthogonal. Thus, \( \mathbf{H} \) is a Hadamard matrix of order \( q \).
\end{proof}

\subsection{Upper Bound for \texorpdfstring{$gc$}{gc} of General Unweighted Graphs \label{sec:generupper}}

In Section~\ref{sec:upperspec}, we discussed feasible solutions to the Graph Coupling Problem for specific families of graphs. We now consider the more general case and describe how to construct a feasible pair \( (\mathbf{P}, \mathbf{W}) \) for any unweighted graph.

Rajakumar et al.~\cite{rajakumar2022generating} proposed an algorithm that decomposes any unweighted graph into a union of at most \( n - 1 \) edge-disjoint stars, where \( n \) is the number of vertices. Their construction stacks the rows required for each star in the matrix \( \mathbf{P} \), resulting in an upper bound of \( 3n - 2 \) for \( gc(G) \), since each star requires three rows, plus a common row of all ones. This method is referred to as the \emph{Union of Stars} algorithm.

We propose a new construction that improves on this method by processing two stars at a time—a structure we refer to as a \emph{double-star}. To motivate our approach, consider the graph in Figure~\ref{fig:twostars}, called the \emph{prototype graph}, which is the simplest instance of a double-star with centers at vertices~1 and~2.

\begin{lemma}
    The graph coupling number of the prototype graph is \( gc(G) = 6 \), and the following pair \( (\mathbf{P}, \mathbf{W}) \) is optimal:
    \[
    \mathbf{P} =
    \begin{bmatrix}
        1 & 1 & 1 & 1 & 1 & 1 \\
        1 & 1 & 1 & 1 & -1 & -1 \\
        1 & 1 & 1 & -1 & -1 & 1 \\
        1 & 1 & -1 & -1 & -1 & -1 \\
        1 & -1 & 1 & -1 & -1 & 1 \\
        1 & -1 & -1 & -1 & 1 & 1 \\
    \end{bmatrix}, \quad
    \mathbf{W} =
    \begin{bmatrix}
        \frac{1}{4} & 0 & 0 & 0 & 0 & 0 \\
        0 & \frac{1}{4} & 0 & 0 & 0 & 0 \\
        0 & 0 & -\frac{1}{4} & 0 & 0 & 0 \\
        0 & 0 & 0 & -\frac{1}{4} & 0 & 0 \\
        0 & 0 & 0 & 0 & \frac{1}{4} & 0 \\
        0 & 0 & 0 & 0 & 0 & -\frac{1}{4} \\
    \end{bmatrix}.
    \]
    \label{lem:twostargc6}
\end{lemma}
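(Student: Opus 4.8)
The plan is to establish the two inequalities $gc(G)\le 6$ and $gc(G)\ge 6$ separately; since the displayed $\mathbf{W}$ has all six diagonal entries equal to $\pm\tfrac14\neq 0$, i.e. $\|\mathbf{W}\|_0=6$, combining the two bounds shows that the exhibited pair is optimal. The upper bound is a direct verification of feasibility: interpreting each row $p_i$ of $\mathbf{P}$ as the spin biclique it encodes (Lemma~\ref{thm:rowsb}) and using the expansion~\eqref{eq:sb-decomp}, one checks that $\sum_{i=1}^{6}\mathbf{W}_{ii}\,p_i^\top p_i\odot\mathbf{J}$ is exactly the adjacency matrix $\mathbf{A}$ of the prototype graph — equivalently, that the column dot-product identity~\eqref{eq:column-wise-dot} holds for each of the $\binom{6}{2}=15$ pairs of columns. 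This is a routine computation, so $gc(G)\le 6$.

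For the lower bound I would first apply the spectral estimate of Corollary~\ref{cor:geommult}: computing the characteristic polynomial of $\mathbf{A}$ yields $\min_{\alpha}\operatorname{rank}(\mathbf{A}+\alpha\mathbf{I})=6-\mu$, where $\mu$ is the largest algebraic multiplicity of an eigenvalue of $\mathbf{A}$, so $gc(G)\ge 6-\mu$. It then remains to exclude each value $k$ with $6-\mu\le k\le 5$. Fix such a $k$ and suppose, for contradiction, that $(\mathbf{P},\mathbf{W})$ is feasible with $\mathbf{P}\in\{\pm1\}^{k\times 6}$ and $\mathbf{W}=\operatorname{diag}(w)$ having no zero entry. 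By Theorem~\ref{thm:hadel}, $\mathbf{A}+\operatorname{tr}(\mathbf{W})\mathbf{I}=\mathbf{P}^\top\mathbf{W}\mathbf{P}$, a matrix of rank at most $\operatorname{rank}(\mathbf{P})\le k<6$; hence $-\operatorname{tr}(\mathbf{W})$ must be an eigenvalue $\lambda$ of $\mathbf{A}$, and matching $\operatorname{rank}(\mathbf{A}-\lambda\mathbf{I})=\operatorname{rank}(\mathbf{P}^\top\mathbf{W}\mathbf{P})$ forces $\ker\mathbf{P}\subseteq\ker(\mathbf{A}-\lambda\mathbf{I})$, with equality — and $\mathbf{P}v=0$ for every $v$ in the $\lambda$-eigenspace — whenever $k$ equals the rank of $\mathbf{A}-\lambda\mathbf{I}$. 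In that situation the rows of $\mathbf{P}$ are $\pm1$-vectors orthogonal to the $\lambda$-eigenspace and spanning its orthogonal complement.

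This turns the remaining task into a finite search. For each eigenvalue $\lambda$ of $\mathbf{A}$ one enumerates the $\pm1$-vectors orthogonal to the $\lambda$-eigenspace; if too few of them are linearly independent to span the required complement, no admissible $\mathbf{P}$ exists for that $\lambda$; otherwise, once a spanning set of rows has been chosen, the weights $w$ solving $\mathbf{A}-\lambda\mathbf{I}=\mathbf{P}^\top\mathbf{W}\mathbf{P}$ are uniquely determined if they exist at all, and one checks that the resulting linear system is inconsistent or that some $w_p$ vanishes — contradicting the nonvanishing of $\mathbf{W}$. Ruling out every admissible $(\lambda,k)$ this way gives $gc(G)\ge 6$, hence $gc(G)=6$.

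I expect this last enumeration to be the main obstacle: a priori the number of $\pm1$-vectors orthogonal to a given eigenspace, and the number of spanning subsets of them, could be large, and several $(\lambda,k)$ pairs may need treatment. What keeps it manageable is the chain of reductions above — the objective $\|\mathbf{W}\|_0$ equals $\operatorname{rank}(\mathbf{W})$, the rank deficiency forces $\operatorname{tr}(\mathbf{W})$ to be the negative of an eigenvalue, the kernel condition pins down the admissible rows, and fixing a basis of rows determines $\mathbf{W}$ uniquely — so that for the prototype graph the search collapses to a handful of explicit cases, checkable by hand (or, if preferred, verified through the CMIPGC formulation of Subsection~\ref{sec:comp}).
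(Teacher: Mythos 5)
Your upper-bound half matches the paper exactly: both verify feasibility of the displayed pair directly, which is routine. For the lower bound you diverge genuinely from the paper, which simply runs an exhaustive search over all \(\mathbf{P}\in\{\pm1\}^{5\times 6}\) (solving the resulting linear system in \(\mathbf{W}\) for each) and reports that nothing feasible is found. Your spectral/kernel pruning is more illuminating and does collapse much of that search: the prototype graph is \(P_5\) plus an isolated vertex, its spectrum is \(\{\pm\sqrt{3},\pm1,0,0\}\), so \(\mu=2\) and Corollary~\ref{cor:geommult} only gives \(gc\ge 4\), leaving \(k=4\) and \(k=5\) to exclude. The case \(k=4\) forces \(\operatorname{tr}(\mathbf{W})=0\) and rows of \(\mathbf{P}\) orthogonal to \(\ker\mathbf{A}\ni e_6\), which no \(\pm1\)-vector can be, and \(k=5\) with \(\operatorname{tr}(\mathbf{W})=\mp\sqrt{3}\) dies by a dimension count on the \(\pm1\)-vectors orthogonal to the corresponding eigenvector. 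Those reductions are correct and attractive.

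However, there is a concrete gap in the remaining cases. First, the \(k=5\), \(\lambda=0\) branch is not covered by your enumeration scheme: there \(k=5>\operatorname{rank}(\mathbf{A})=4\), so your own caveat applies and the rows of \(\mathbf{P}\) are \emph{not} forced to be orthogonal to the \(0\)-eigenspace --- only \(\ker\mathbf{P}\subseteq\ker\mathbf{A}\) holds, with \(\dim\ker\mathbf{P}\in\{1,2\}\). Your described search (``enumerate the \(\pm1\)-vectors orthogonal to the \(\lambda\)-eigenspace, then solve for \(w\)'') would therefore miss candidate matrices \(\mathbf{P}\) whose rows are orthogonal only to some one-dimensional subspace of \(\ker\mathbf{A}\); ruling these out requires a separate argument (e.g., parametrizing the kernel vector \(\alpha u+\beta e_6\) and showing the induced constraints on \(p_3-p_4+p_5\) and \(p_6\), together with the weight system, are inconsistent). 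Second, for \(k=5\) with \(\lambda=\pm1\) there are enough admissible \(\pm1\)-rows (twelve up to sign) to span the required complement, so the exclusion rests entirely on checking that every choice of five independent rows yields an inconsistent or degenerate weight system --- a computation you defer rather than perform. Since the entire content of the lemma beyond feasibility is this impossibility claim, deferring it leaves the proof incomplete; either carry out the residual cases explicitly or fall back, as the paper does, on a fully specified exhaustive search (Algorithm~\ref{alg:gcbruteforce}) that provably covers all of \(\{\pm1\}^{5\times 6}\).
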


\begin{proof}
    Substituting \( \mathbf{P} \) and \( \mathbf{W} \) into the problem formulation verifies that the pair is feasible. To show that six is the minimum number of rows, we perform an exhaustive search over all possible matrices \( \mathbf{P} \in \{\pm 1\}^{5 \times 6} \). Once \( \mathbf{P} \) is fixed, the condition \( \mathbf{P}^\top \mathbf{W} \mathbf{P} = \mathbf{A} + \operatorname{tr}(\mathbf{W}) \mathbf{I} \) becomes a linear system in the entries of \( \mathbf{W} \), which can be solved efficiently. No feasible solutions are found with fewer than six rows, confirming the optimality of the proposed pair. See Appendix~\ref{sec:appen-bruteforce} for the exhaustive search algorithm. 
\end{proof}

\begin{figure}[h]
    \centering
    \includegraphics[width=0.3\linewidth]{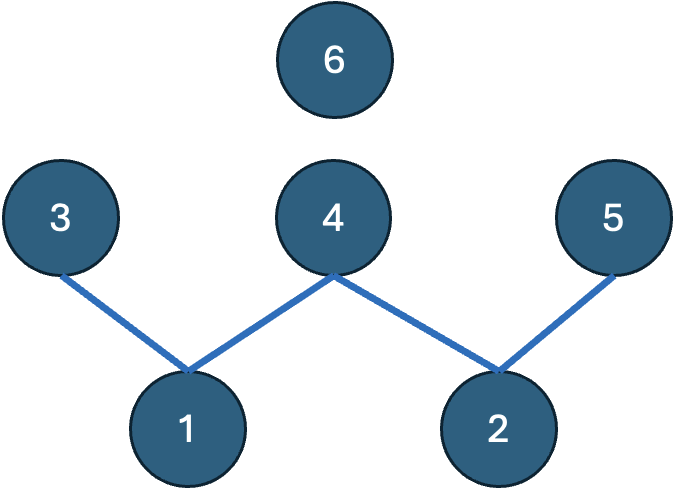}
    \caption{The prototype graph: two stars with centers at vertices~1 and~2, with \( gc = 6 \).}
    \label{fig:twostars}
\end{figure}

\begin{theorem}
Let \( G = (V, E) \) be a general double-star graph on vertex set \( V = \{v_1, v_2\} \cup V_3 \cup V_4 \cup V_5 \cup V_6 \), where the sets \( V_3, V_4, V_5, V_6 \) are pairwise disjoint and disjoint from \( \{v_1, v_2\} \). Let the edge set be
\[
E = \big\{ \{v_1, v\} : v \in V_3 \big\} \cup \big\{ \{x, y\} : x \in \{v_1, v_2\}, y \in V_4 \big\} \cup \big\{ \{v_2, v\} : v \in V_5 \big\}.
\]
Then \( gc(G) \le 6 \).
\label{thm:twostareliminate}
\end{theorem}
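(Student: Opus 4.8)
The plan is to lift the optimal solution for the prototype graph (Lemma~\ref{lem:twostargc6}) to an arbitrary double-star by a ``blow-up'' of the four leaf-classes, exploiting the uniformity principle of Remark~\ref{cor:colclass}. Let $(\mathbf{P}_0,\mathbf{W})$ be the $6\times 6$ optimal pair of Lemma~\ref{lem:twostargc6}, whose columns I index $1,\dots,6$, and observe that in the prototype vertex $1$ is a star center connected to $\{3,4\}$, vertex $2$ is a star center connected to $\{4,5\}$, and vertex $6$ is isolated --- so vertices $3,4,5,6$ naturally play the roles of $V_3,V_4,V_5,V_6$. I would then define $\mathbf{P}'\in\{\pm1\}^{6\times n}$ by giving $v_1$ the first column of $\mathbf{P}_0$, $v_2$ the second, and every vertex of $V_i$ the $i$-th column of $\mathbf{P}_0$ for $i\in\{3,4,5,6\}$, while keeping the same diagonal $\mathbf{W}$ (still only six nonzeros). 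It then suffices to check that $(\mathbf{P}',\mathbf{W})$ is feasible for $G$, which immediately gives $gc(G)\le 6$.

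For the verification I would invoke the column-wise reformulation, Equation~\eqref{eq:column-wise-dot}: the $(x,y)$ entry of $\mathbf{P}'^\top\mathbf{W}\mathbf{P}'$ equals $w^\top(\mathbf{P}'_x\odot\mathbf{P}'_y)$, which by construction depends only on the classes of $x$ and $y$. If $x,y$ lie in different classes with prototype labels $a\ne b$, this is exactly $(\mathbf{P}_0^\top\mathbf{W}\mathbf{P}_0)_{a,b}$, an off-diagonal entry equal to the prototype's adjacency entry $(\mathbf{A}_0)_{a,b}$; if $x\ne y$ lie in the same class, it equals $w^\top\mathds{1}=\operatorname{tr}(\mathbf{W})$, which a one-line computation shows is $\tfrac14+\tfrac14-\tfrac14-\tfrac14+\tfrac14-\tfrac14 = 0$. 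Hadamard-multiplying by $\mathbf{J}$ erases the diagonal, so $G$ ends up with no edges inside any $V_i$ and with an edge between classes $a$ and $b$ precisely when the prototype has one --- which, after checking the prototype's class-adjacency pattern, is exactly the edge set $E$ of the double-star. That yields $\mathbf{A}'=\mathbf{P}'^\top\mathbf{W}\mathbf{P}'\odot\mathbf{J}$.

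The only genuinely load-bearing computation --- and the step where a slip would be fatal --- is tabulating the $\binom{6}{2}$ inner products $w^\top(\mathbf{P}_{0,a}\odot\mathbf{P}_{0,b})$ to confirm that the prototype's adjacency among $\{1,\dots,6\}$ is precisely $1\!\sim\!3$, $1\!\sim\!4$, $2\!\sim\!4$, $2\!\sim\!5$ and nothing else, so that it matches the prescribed double-star structure under $1\mapsto v_1$, $2\mapsto v_2$, $i\mapsto V_i$. Everything else --- the block structure of $\mathbf{P}'$, the identity $w^\top\mathds{1}=\operatorname{tr}(\mathbf{W})$, and the reductions via Equation~\eqref{eq:column-wise-dot} and Remark~\ref{cor:colclass} --- is routine. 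I would also remark in passing that an empty $V_i$ causes no trouble: the corresponding column of $\mathbf{P}_0$ is simply unused, and the bound $gc(G)\le 6$ still holds (it may even be improvable in such degenerate cases, but we need only the inequality).
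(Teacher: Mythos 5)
Your proposal is correct and follows essentially the same route as the paper: blow up the optimal prototype solution of Lemma~\ref{lem:twostargc6} by assigning each vertex of $V_i$ the $i$-th column of $\mathbf{P}_0$, keep $\mathbf{W}$, and verify feasibility via the column-class uniformity of Remark~\ref{cor:colclass} together with $\operatorname{tr}(\mathbf{W})=0$ for intra-class pairs. The ``load-bearing'' adjacency tabulation you flag is already guaranteed by the feasibility assertion of Lemma~\ref{lem:twostargc6}, so nothing further is needed.
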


\begin{proof}
In Lemma~\ref{lem:twostargc6}, we established a feasible solution \( (\mathbf{P}, \mathbf{W}) \) with six rows for the prototype graph, one of which is the all-ones row.

To generalize this to \( G \), assign to each vertex in \( V_i \) the same column of \( \mathbf{P} \) used for vertex~\( i \) in the prototype, for \( i \in \{3, 4, 5, 6\} \). Assign \( \mathbf{P}_1 \) and \( \mathbf{P}_2 \) to \( v_1 \) and \( v_2 \), respectively. Reuse the same matrix \( \mathbf{W} \). This defines a feasible pair for \( G \).

By Corollary~\ref{cor:colclass}, vertices in the same column class behave identically with respect to other classes, and intra-class edge weights equal \( \operatorname{tr}(\mathbf{W}) = 0 \), matching the edge structure of \( G \). Therefore, the construction is feasible and \( gc(G) \le 6 \).
\end{proof}

Theorem~\ref{thm:twostareliminate} enables us to eliminate each double-star using only 5 additional rows, since the all-ones row is shared across all such structures. This reduces the row cost per star by approximately \( \frac{1}{2} \) compared to the Union of Stars algorithm, which uses 3 rows per star.

To apply this construction, we repeatedly find pairs \( v_1, v_2 \) that are not adjacent and build the corresponding double-star subgraph and eliminate its edges from the remaining graph. We proceed as long as the graph contains a non-edge. The process terminates when the residual graph becomes a clique, which can be handled efficiently using Theorem~\ref{thm:clique}, requiring only \( q + 2 \) rows for a clique on \( q \) vertices.

\begin{algorithm}[H]
\caption{Union of Double-Stars Construction}
\begin{algorithmic}[1]
\State \textbf{Input:} A simple, undirected graph \( G = (V, E) \) with adjacency matrix \( \mathbf{A} \).
\State Initialize \( S \gets V \); initialize empty matrices \( \mathbf{P} \), \( \mathbf{W} \).
\While{\( S \ne \emptyset \) and \( \exists v_1, v_2 \in S \) with \( \mathbf{A}_{v_1, v_2} = 0 \)}
    \State Construct the double-star subgraph \( G' \) centered at \( v_1, v_2 \), with feasible pair \( (\mathbf{P}', \mathbf{W}') \).
    \State Append \( \mathbf{P}' \) to \( \mathbf{P} \); extend \( \mathbf{W} \) with \( \mathbf{W}' \) along the diagonal.
    \State Remove edges of \( G' \): \( G \gets G - E(G') \).
    \State Update \( S \gets S \setminus \{v_1, v_2\} \).
\EndWhile
\If{\( S \ne \emptyset \)} \Comment{Remaining graph is a clique}
    \State Construct clique solution \( (\mathbf{P}', \mathbf{W}') \) from Theorem~\ref{thm:clique}.
    \State Append \( \mathbf{P}' \) to \( \mathbf{P} \); extend \( \mathbf{W} \) with \( \mathbf{W}' \) along the diagonal.
\EndIf
\State \textbf{Output:} A feasible pair \( (\mathbf{P}, \mathbf{W}) \) for \( G \).
\end{algorithmic}
\end{algorithm}

\begin{theorem}
    The Union of Double-Stars algorithm yields a feasible solution to the Graph Coupling Problem with objective value at most \( 2.5n + 2 \).
\end{theorem}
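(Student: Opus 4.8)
The plan is to reason directly about the Union of Double-Stars algorithm: first establish that its output is a feasible pair $(\mathbf{P},\mathbf{W})$, then count the rows of $\mathbf{P}$ it produces and bound their number by $2.5n+2$, which bounds $\|\mathbf{W}\|_0$ from above.

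For feasibility, I would first note that the main loop terminates, since each iteration deletes the two centers $v_1,v_2$ from $S$ and therefore runs at most $\lfloor n/2\rfloor$ times. The substantive point is that the double-stars $G'_1,\dots,G'_t$ built in the loop, together with the graph $G''$ handled in the final \texttt{if}-block, form an edge partition of $G$. To see this I would maintain the invariant that every vertex outside the current $S$ is isolated in the current residual graph: when $v_1,v_2\in S$ are chosen non-adjacent in the residual graph, the sets $V_3$ (neighbors of $v_1$ only), $V_4$ (common neighbors), $V_5$ (neighbors of $v_2$ only) and $V_6$ (everything else) make $E(G')$ equal to exactly the set of residual edges incident to $v_1$ or $v_2$, so after deleting $E(G')$ both centers are isolated; hence they never reappear as centers and carry no further edges, and the invariant is preserved. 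Because the loop exits only when $G[S]$ is a clique, the invariant forces the residual graph at that point to be exactly $K_{|S|}\cup\overline{K_{n-|S|}}$, which is the form to which Theorem~\ref{thm:clique} applies. By Theorem~\ref{thm:twostareliminate}, each $G'_i$ — viewed as a spanning subgraph of $G$ with the four vertex classes — admits a feasible pair with at most $6$ rows, one of them the all-ones row; by Theorem~\ref{thm:clique}, $G''$ on $q:=|S|$ vertices admits a feasible pair with at most $q+2$ rows, again including the all-ones row. Stacking these pairs as in the upper-bound direction of Theorem~\ref{thm:union} (applied repeatedly over the edge partition) and then merging all copies of the all-ones row into one, with weight the sum of the merged weights — valid since a weighted sum of copies of $\mathbf{1}\mathbf{1}^\top\odot\mathbf{J}$ is again one such term — produces the algorithm's output.

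For the count, let $t$ be the number of loop iterations, so $q=n-2t\ge 0$ and hence $t\le n/2$. The rows of $\mathbf{P}$ are: one shared all-ones row; at most $5$ additional rows per double-star; and, when $q>0$, at most $q+1$ additional rows from the clique step. Thus
\[
\|\mathbf{W}\|_0 \;\le\; 1 + 5t + (q+1) \;=\; 3t + n + 2 \;\le\; 3\cdot\frac{n}{2} + n + 2 \;=\; 2.5n + 2,
\]
where the term $q+1$ is taken to be $0$ when $S=\emptyset$ (in which case the bound reads $1+5t=2.5n+1$). This gives the claimed bound.

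The main obstacle is not the arithmetic, which is short, but the bookkeeping in the feasibility argument: one must check carefully that deleting a double-star truly isolates its two centers — so that vertex deletion from $S$ is consistent with the residual graph, each edge of $G$ is removed exactly once, and the loop's termination leaves a genuine $K_q\cup\overline{K_{n-q}}$ rather than a clique surrounded by vertices that still carry stray edges. Making the invariant above precise and showing it is maintained across iterations is the crux; everything else is a direct appeal to Theorems~\ref{thm:twostareliminate}, \ref{thm:clique}, and \ref{thm:union}.
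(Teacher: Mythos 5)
Your proposal is correct and follows essentially the same route as the paper: an edge partition into double-stars plus a final clique, feasibility via Theorems~\ref{thm:twostareliminate} and~\ref{thm:clique} with a single shared all-ones row, and the count $1+5t+(n-2t+1)=n+3t+2\le 2.5n+2$. The only difference is that you spell out the isolation invariant behind the edge-partition claim, which the paper treats as immediate.
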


\begin{proof}
Feasibility follows immediately because the proposed double-stars and the final clique partition the edges of the original graph.

For the row count, suppose the loop runs \( x \le \frac{n}{2} \) times, eliminating \( 2x \) vertices. This leaves \( n - 2x \) vertices forming a clique.

Each double-star uses 5 rows, with one shared all-ones row, giving \( 5x + 1 \) rows in total from the loop. The clique construction requires at most \( (n - 2x) + 1 \) additional rows (excluding the all-ones row). Thus, the total is:
\[
5x + 1 + (n - 2x + 1) = n + 3x + 2 \le n + 1.5n + 2 = 2.5n + 2.
\]
\end{proof}

This improves the theoretical upper bound from \( 3n - 2 \) (Union of Stars) to \( 2.5n + 2 \), a savings of roughly \( \frac{1}{6} \).

We also evaluate both algorithms empirically. We generate 34 Erdős–Rényi graphs with edge probabilities sampled uniformly from \( [0, 1] \), with sizes ranging from 4 to 20 vertices (two graphs per size). We discard two edgeless graphs, leaving 32 graphs for evaluation.

In both methods, subgraphs are selected greedily to maximize edge elimination per iteration. Redundant rows are removed from the final matrix \( \mathbf{P} \). Figure~\ref{fig:union-stars-double} shows that the Union of Double-Stars consistently outperforms the Union of Stars in practice.

\begin{figure}[H]
    \centering
    \subfloat[Scatter Plot]{\includegraphics[width=0.45\textwidth]{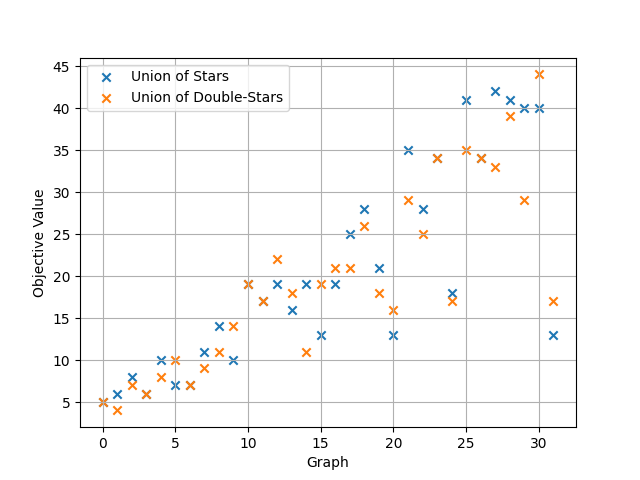}\label{fig:compare-scatter}}
    \hspace{0.05\textwidth}
    \subfloat[Box Plot]{\includegraphics[width=0.45\textwidth]{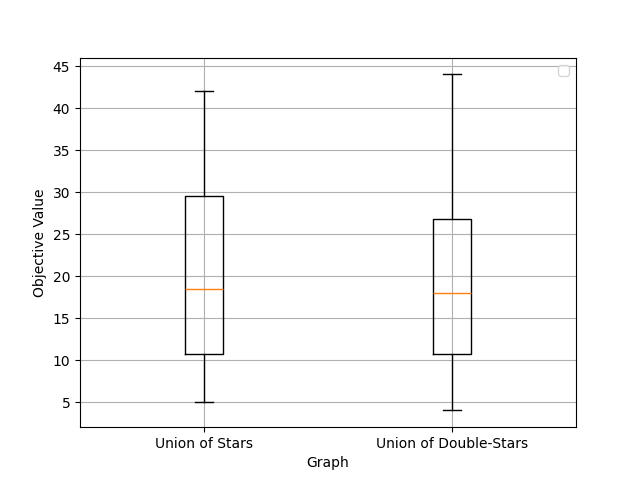}\label{fig:compare-box}}
    \caption{Comparison of Union of Stars and Union of Double-Stars on 32 Erdős–Rényi graphs.}
    \label{fig:union-stars-double}
\end{figure}

\subsection{CMIPGC \label{sec:comp}}

Finally, we propose a compact mixed-integer program for the Graph Coupling problem, which we refer to as \emph{CMIPGC}. The feasibility constraints are trilinear, as can be seen from the relation \( A_{i, j} = \mathbf{w}^\top (\mathbf{P}_i \odot \mathbf{P}_j) \) for all \( i < j \) from Equation \eqref{eq:column-wise-dot}. To formulate a mixed-integer linear program, we must linearize these constraints. The first step is to convert the \(-1\) and \(+1\) entries of \( \mathbf{P} \) into binary values in \( \{0, 1\} \). The following lemma establishes an equivalent feasibility condition in terms of the binary matrix \( \mathbf{P}' \).

\begin{lemma}
Let \( \mathbf{P} \in \{-1, 1\}^{k \times n} \), and let \( \mathbf{W} \) be a diagonal matrix. Define the binary matrix \( \mathbf{P}' = \frac{1}{2}(\mathbf{P} + \mathbf{1}) \), where \(-1\)'s are mapped to 0 and \(1\)'s stay the same. Assume that the first column of \( \mathbf{P} \), denoted \( \mathbf{P}_1 \), is \( \mathds{1} \). Then the first column of \( \mathbf{P}' \), denoted \( \mathbf{P}'_1 \), is also \( \mathds{1} \). Under this change of variables, the following equivalence holds:
\[
\mathbf{A} + \operatorname{tr}(\mathbf{W}) \mathbf{I} = \mathbf{P}^\top \mathbf{W} \mathbf{P}
\iff 
\mathbf{A} + \operatorname{tr}(\mathbf{W})
\begin{bmatrix}
4 & 2 & 2 & \dots & 2 \\
2 & 2 & 1 & \dots & 1 \\
2 & 1 & 2 & \dots & 1 \\
\vdots & \vdots & \vdots & \ddots & \vdots \\
2 & 1 & 1 & \dots & 2
\end{bmatrix}
+ 
\begin{bmatrix}
a_1 \\
a_1 \\
\vdots \\
a_1
\end{bmatrix}
+ 
\begin{bmatrix}
a_1^\top & a_1^\top & \dots & a_1^\top
\end{bmatrix}
= 4 \mathbf{P}'^\top \mathbf{W} \mathbf{P}'.
\]
\label{lem:binaryform}
\end{lemma}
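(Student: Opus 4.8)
The plan is to substitute the change of variables $\mathbf{P} = 2\mathbf{P}' - \mathbf{1}$ directly into the left-hand identity $\mathbf{A} + \operatorname{tr}(\mathbf{W})\mathbf{I} = \mathbf{P}^\top \mathbf{W}\mathbf{P}$ and expand the right-hand side term by term. Writing $\mathbf{P} = 2\mathbf{P}' - \mathbf{1}$, we get
\[
\mathbf{P}^\top \mathbf{W}\mathbf{P} = 4\,\mathbf{P}'^\top \mathbf{W}\mathbf{P}' - 2\,\mathbf{P}'^\top \mathbf{W}\mathbf{1} - 2\,\mathbf{1}^\top \mathbf{W}\mathbf{P}' + \mathbf{1}^\top \mathbf{W}\mathbf{1}.
\]
So the identity becomes $4\mathbf{P}'^\top\mathbf{W}\mathbf{P}' = \mathbf{A} + \operatorname{tr}(\mathbf{W})\mathbf{I} + 2\mathbf{P}'^\top\mathbf{W}\mathbf{1} + 2\mathbf{1}^\top\mathbf{W}\mathbf{P}' - \mathbf{1}^\top\mathbf{W}\mathbf{1}$, and the entire task is to recognize each of those correction terms as one of the matrices displayed on the right-hand side of the claimed equivalence.

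The key step is to identify the three ``mystery'' objects. First, $\mathbf{1}^\top\mathbf{W}\mathbf{1}$ is the all-ones $n\times n$ matrix scaled by $\sum_p W_{pp} = \operatorname{tr}(\mathbf{W})$, i.e. $\operatorname{tr}(\mathbf{W})\,\mathbf{1}$. Second, $\mathbf{P}'^\top\mathbf{W}\mathbf{1}$ is a rank-one-type matrix whose $(i,j)$ entry is $\sum_p \mathbf{P}'_{pi} W_{pp}$, which is independent of $j$; since the paper writes $a_1$ for the first row of $\mathbf{A}$ — wait, here one must be careful: the intended meaning of $a_1$ in the statement is the column vector $\mathbf{P}'^\top \mathbf{W}\mathds{1}$ up to scaling, and I would verify that the normalization chosen in the statement (the vector repeated across all columns, i.e. $[a_1\ a_1\ \cdots\ a_1]$, plus its transpose) exactly accounts for $2\mathbf{P}'^\top\mathbf{W}\mathbf{1} + 2\mathbf{1}^\top\mathbf{W}\mathbf{P}'$. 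Third, and this is where the hypothesis $\mathbf{P}_1 = \mathds{1}$ is used, combining $\operatorname{tr}(\mathbf{W})\mathbf{I}$ with the $-\operatorname{tr}(\mathbf{W})\mathbf{1}$ term from $\mathbf{1}^\top\mathbf{W}\mathbf{1}$ and the diagonal contributions of the $a_1$ terms should produce exactly the fixed integer matrix $\operatorname{tr}(\mathbf{W})\begin{psmallmatrix}4&2&\cdots\\2&2&\cdots\\\vdots&&\ddots\end{psmallmatrix}$ — here I would check that because column $1$ of $\mathbf{P}'$ is all ones, the first coordinate of $a_1$ equals $\operatorname{tr}(\mathbf{W})$, which is what forces the $4$ in the top-left corner and the $2$'s in the first row/column, while the $2$'s on the remaining diagonal come from $\operatorname{tr}(\mathbf{W})\mathbf{I}$ plus a $+\operatorname{tr}(\mathbf{W})$ on each diagonal entry and the off-diagonal $1$'s come from $-\operatorname{tr}(\mathbf{W})\mathbf{1} + \operatorname{tr}(\mathbf{W})\mathbf{1}$ bookkeeping.

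The main obstacle is purely bookkeeping: disentangling exactly how the three separate contributions (the scalar $\operatorname{tr}(\mathbf{W})\mathbf{I}$, the all-ones $\operatorname{tr}(\mathbf{W})\mathbf{1}$, and the two ``outer-sum'' terms built from $a_1 = \mathbf{P}'^\top\mathbf{W}\mathds{1}$) overlap on the diagonal and on the first row and column, and confirming that the stated fixed matrix absorbs precisely the diagonal double-counting. I would handle this by checking the $(i,j)$ entry in four cases — $(1,1)$, $(1,j)$ with $j>1$, $(i,i)$ with $i>1$, and $(i,j)$ with $1<i<j$ — since in each case both sides reduce to an explicit scalar expression in $\operatorname{tr}(\mathbf{W})$, $(a_1)_i$, $(a_1)_j$ and $\mathbf{A}_{i,j}$, and the equivalence then follows entrywise in each direction. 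The reverse implication is automatic once the forward expansion is an identity, since every manipulation above is reversible.
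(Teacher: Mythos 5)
Your overall strategy---substitute \( \mathbf{P} = 2\mathbf{P}' - \mathbf{1} \), expand, and match the correction terms entrywise---is exactly the paper's, but your proposal stalls at the one step that carries the actual content of the lemma, and it stalls because you misidentify \( a_1 \). In the paper's notation \( a_1 \) is the first \emph{row of} \( \mathbf{A} \), not ``the column vector \( \mathbf{P}'^\top\mathbf{W}\mathds{1} \) up to scaling.'' The entire point of the hypothesis \( \mathbf{P}_1 = \mathds{1} \) is that it lets you read the unknown row-sum vector off the identity being assumed: the first row of \( \mathbf{A} + \operatorname{tr}(\mathbf{W})\mathbf{I} = \mathbf{P}^\top\mathbf{W}\mathbf{P} \) is \( \mathds{1}^\top\mathbf{W}\mathbf{P} = a_1 + \operatorname{tr}(\mathbf{W})e_1^\top \), whence \( \mathbf{P}'^\top\mathbf{W}\mathds{1} = \frac{1}{2}\bigl(a_1^\top + \operatorname{tr}(\mathbf{W})e_1 + \operatorname{tr}(\mathbf{W})\mathds{1}\bigr) \). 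Without this observation your plan cannot be carried out: a priori \( \mathbf{P}'^\top\mathbf{W}\mathds{1} \) is an arbitrary vector determined by \( \mathbf{P}' \) and \( \mathbf{W} \), and nothing forces it to be expressible through \( \mathbf{A} \); it is the assumed equation that does the forcing. Relatedly, your claim that ``the first coordinate of \( a_1 \) equals \( \operatorname{tr}(\mathbf{W}) \)'' is false: \( (a_1)_1 = \mathbf{A}_{1,1} = 0 \). The quantity equal to \( \operatorname{tr}(\mathbf{W}) \) is \( (\mathbf{P}^\top\mathbf{W}\mathds{1})_1 = \mathds{1}^\top\mathbf{W}\mathds{1} \), and the discrepancy is precisely the \( \operatorname{tr}(\mathbf{W})e_1 \) term above, which---together with \( \mathbf{1}^\top\mathbf{W}\mathbf{1} = \operatorname{tr}(\mathbf{W})\mathbf{1} \) and the \( \operatorname{tr}(\mathbf{W})\mathbf{I} \) already present---assembles into the fixed matrix \( \mathbf{I} + e_1\mathds{1}^\top + \mathds{1}e_1^\top + \mathbf{1} \), whose entries are the \( 4 \)'s, \( 2 \)'s, and \( 1 \)'s in the statement.

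A further caution about your closing sentence: the reverse implication is not literally ``automatic by reversibility,'' because the replacement of the row-sum vector by an expression in \( a_1 \) was itself justified by the left-hand identity. For \( (\Leftarrow) \) you must first recover \( \mathbf{P}'^\top\mathbf{W}\mathds{1} \) from the \emph{right-hand} identity (e.g.\ by reading off its first column, using \( \mathbf{P}'_1 = \mathds{1} \) and \( \mathbf{A}_{1,1} = 0 \)) and only then undo the algebra. Your proposed four-case entrywise check \( (1,1) \), \( (1,j) \), \( (i,i) \), \( (i,j) \) is a perfectly sound way to organize the bookkeeping once the identification of \( a_1 \) is corrected, but as written the central identification is wrong, so the verification you defer would not go through.
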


\begin{proof}
Substitute \( \mathbf{P} = 2\mathbf{P}' - \mathbf{1} \) into the feasibility condition:
\begin{equation}
\begin{aligned}
\mathbf{A} + tr(\mathbf{W}) \mathbf{I} = \mathbf{P}^\top \mathbf{W} \mathbf{P} \iff 
\mathbf{A} + \operatorname{tr}(\mathbf{W}) \mathbf{I} 
&= (2\mathbf{P}' - \mathbf{1})^\top \mathbf{W} (2\mathbf{P}' - \mathbf{1}) \\
&= 4 \mathbf{P}'^\top \mathbf{W} \mathbf{P}' 
- 2 \cdot \mathbf{1}^\top \mathbf{W} \mathbf{P}' 
- 2 \cdot \mathbf{P}'^\top \mathbf{W} \mathbf{1} 
+ \mathbf{1}^\top \mathbf{W} \mathbf{1}.
\end{aligned}
\label{eq:binequiva}
\end{equation}

Next, we compute \( \mathbf{1}^\top \mathbf{W} \mathbf{P}' \) and its transpose for the right-hand side of Equation~\eqref{eq:binequiva}. Assume \( \mathbf{A} + \operatorname{tr}(\mathbf{W}) \mathbf{I} = \mathbf{P}^\top \mathbf{W} \mathbf{P} \). Then the first row of this identity is:
\[
a_1 + \operatorname{tr}(\mathbf{W}) e_1^\top = \mathds{1}^\top \mathbf{W} \mathbf{P},
\]
where \( \mathds{1} \) is the all-ones vector. Therefore,
\begin{equation}
\mathbf{1}^\top \mathbf{W} \mathbf{P}' 
= \frac{1}{2} \left( \mathbf{1}^\top \mathbf{W} \mathbf{P} + \operatorname{tr}(\mathbf{W}) \mathbf{1} \right)
= \frac{1}{2}
\left(
\begin{bmatrix}
a_1 + \operatorname{tr}(\mathbf{W}) e_1^\top \\
\vdots \\
a_1 + \operatorname{tr}(\mathbf{W}) e_1^\top
\end{bmatrix}
+ \operatorname{tr}(\mathbf{W}) \mathbf{1}
\right).
\label{eq:rowsumbin}
\end{equation}

Substituting Equation~\eqref{eq:rowsumbin} into Equation~\eqref{eq:binequiva} yields:
\begin{equation}
\begin{aligned}
\mathbf{A} + tr(\mathbf{W}) \mathbf{I} = \mathbf{P}^\top \mathbf{W} \mathbf{P} \iff \mathbf{A} + \operatorname{tr}(\mathbf{W}) \mathbf{I} 
&= 4 \mathbf{P}'^\top \mathbf{W} \mathbf{P}' \\
&\quad -
\left(
\begin{bmatrix}
a_1 + \operatorname{tr}(\mathbf{W}) e_1^\top \\
\vdots \\
a_1 + \operatorname{tr}(\mathbf{W}) e_1^\top
\end{bmatrix}
+ \operatorname{tr}(\mathbf{W}) \mathbf{1}
\right) \\
&\quad -
\left(
\begin{bmatrix}
a_1 + \operatorname{tr}(\mathbf{W}) e_1^\top \\
\vdots \\
a_1 + \operatorname{tr}(\mathbf{W}) e_1^\top
\end{bmatrix}^\top
+ \operatorname{tr}(\mathbf{W}) \mathbf{1}
\right) \\
&\quad + \operatorname{tr}(\mathbf{W}) \mathbf{1},
\end{aligned}
\label{eq:booleaneq}
\end{equation}
which simplifies to the desired result.
\end{proof}

Note that the assumption \( \mathbf{P}_1 = \mathds{1} \) is made without loss of generality. If a pair \( (\mathbf{P}, \mathbf{W}) \) satisfies the feasibility condition, then flipping the sign of any row of \( \mathbf{P} \) yields another feasible solution. This is because the spin bicliques \( SB_{V \setminus S, S} \) and \( SB_{S, V \setminus S} \) are identical. Therefore, by flipping the sign of each row of \( \mathbf{P} \) whose first entry is \(-1\), we obtain a matrix whose first column is equal to \( \mathds{1} \).

The feasibility condition from Lemma~\ref{lem:binaryform} remains trilinear when expressed in terms of the binary matrix \( \mathbf{P}' \). Specifically, the \((i,j)\)-th entry of \( \mathbf{P}'^\top \mathbf{W} \mathbf{P}' \) is given by a sum of trilinear terms:
\[
(\mathbf{P}'^\top \mathbf{W} \mathbf{P}')_{i,j} = \sum_{r=1}^k \mathbf{W}_{r,r} \mathbf{P}'_{r,i} \mathbf{P}'_{r,j} = \sum_{r=1}^k t_{i,j,r},
\]
where each term is defined as \( t_{i,j,r} = \mathbf{W}_{r,r} \mathbf{P}'_{r,i} \mathbf{P}'_{r,j} \).

To linearize the product \( \mathbf{P}'_{r,i} \mathbf{P}'_{r,j} \) of two binary variables, we introduce a new binary variable \( z_{i,j,r} \) and set \( z_{i,j,r} = \mathbf{P}'_{r,i} \mathbf{P}'_{r,j} \). This product is enforced through the constraints~\ref{eq:and1}--\ref{eq:and3}, as described by Padberg~\cite{padberg1989boolean}.

The resulting program now contains bilinear terms of the form \( t_{i,j,r} = \mathbf{W}_{r,r} z_{i,j,r} \), which we linearize using the standard big-\( M \) technique. Constraints~\ref{eq:binreal1}--\ref{eq:binreal2} ensure that \( t_{i,j,r} \) equals 0 when \( z_{i,j,r} = 0 \) and equals \( \mathbf{W}_{r,r} \) when \( z_{i,j,r} = 1 \).

Once the \( t_{i,j,r} \) variables are linearized, we enforce the feasibility condition via constraints~\ref{eq:main1}--\ref{eq:main4}, which correspond respectively to the \( (1,1) \) entry, the entries in the first row and column, the diagonal entries, and all remaining off-diagonal entries.

To ensure that the first column of \( \mathbf{P}' \) is fixed to \( \mathds{1} \), we impose constraint~\ref{eq:firstcol1}. To remove symmetry from the formulation, we apply lexicographic ordering to the rows of \( \mathbf{P}' \) using constraints~\ref{eq:rownumcalc}--\ref{eq:rownumlex}. This eliminates redundancy due to row permutations in \( \mathbf{P} \), which would otherwise yield equivalent solutions when paired with a corresponding reordering of the diagonal of \( \mathbf{W} \).

To express the objective \( \|\mathbf{W}\|_0 \), we introduce binary variables \( b_r \in \{0,1\} \), where \( \mathbf{W}_{r,r} \neq 0 \) implies \( b_r = 1 \). Constraint~\ref{eq:bw} enforces this encoding. The overall objective is then to minimize \( \sum_{r=1}^{k} b_r \), consistent with the formulation of Rajakumar et al.~\cite{rajakumar2022generating}.

To improve computational performance, we incorporate several valid inequalities and cuts:
\begin{itemize}
    \item \textbf{Row sum cut}: Because the first column of \( \mathbf{P}' \) is \( \mathds{1} \), only the first row may contain exactly one entry equal to 1. All other rows must contain at least two ones. This is enforced by constraint~\ref{const:more2}.
    
    \item \textbf{Padberg inequalities}~\cite{padberg1989boolean}: For any three binary variables \( a, b, c \), the inequality \( a + b + c - ab - ac - bc \leq 1 \) is valid. Replacing \( a, b, c \) with appropriate \( \mathbf{P}'_{r,i}, \mathbf{P}'_{r,j}, \mathbf{P}'_{r,k} \) terms yields the Padberg cut in constraint~\ref{eq:padberg1}. Although higher-order versions of these inequalities exist, they typically yield diminishing returns relative to their computational cost.
    
    \item \textbf{Spectral lower bound}: Corollary~\ref{cor:geommult} provides a spectral lower bound on \( \|\mathbf{W}\|_0 \). This is enforced via constraint~\ref{eq:lbcut}.
\end{itemize}

We also warm-start the solver using the Union of Stars construction, in line with best practices outlined in~\cite{klotz2013practical}. This significantly improves the performance of Gurobi’s built-in heuristics. In our formulation, the value of \( R \) corresponds to the number of rows in the warm-start solution. Although the Union of Double-Stars construction yields tighter upper bounds in theory, it performs worse in practice with Gurobi.

Lastly, we discuss the choice of the big-\( M \) constant. Rajakumar et al.~\cite[Theorem~8]{rajakumar2022generating} prove that \( |\mathbf{W}_{r,r}| \leq (3n - 2)^{(3n - 1)/2} \) suffices. However, in all our experiments, setting \( M = 10 \) produced good results with no indication of excluding feasible solutions.

As a post-processing step, we observed that constraint~\ref{eq:main1} is tautologically satisfied and may be removed. The same holds for the first row and column constraints~\ref{eq:main2}.

\newpage 
\begin{align}
    \min\quad & \sum_{r = 1}^{R} b_r \tag{1} \\
    \text{s.t.}\quad
    & -M b_r \le \mathbf{W}_{r,r} \le M b_r && \forall r \in [R] \tag{2} \label{eq:bw} \\
    & z_{i,j,r} \le \mathbf{P}'_{r,i} && \forall i,j \in [n], i \le j,\ r \in [R] \tag{3} \label{eq:and1} \\
    & z_{i,j,r} \le \mathbf{P}'_{r,j} && \forall i,j \in [n], i \le j,\ r \in [R] \tag{4} \label{eq:and2} \\
    & z_{i,j,r} \ge \mathbf{P}'_{r,i} + \mathbf{P}'_{r,j} - 1 && \forall i,j \in [n], i \le j,\ r \in [R] \tag{5} \label{eq:and3} \\
    & -M z_{i,j,r} \le t_{i,j,r} \le M z_{i,j,r} && \forall i,j \in [n], i \le j,\ r \in [R] \tag{6} \label{eq:binreal1} \\
    & \mathbf{W}_{r,r} - M(1 - z_{i,j,r}) \le t_{i,j,r} \le \mathbf{W}_{r,r} + M(1 - z_{i,j,r}) && \forall i,j \in [n], i \le j,\ r \in [R] \tag{7} \label{eq:binreal2} \\
    & \mathsf{tr} = \sum_{r = 1}^{R} \mathbf{W}_{r,r} \tag{8} \\
    & 4 \sum_{r = 1}^{R} t_{1,1,r} = 4\mathsf{tr} \tag{9} \label{eq:main1} \\
    & 4 \sum_{r = 1}^{R} t_{1,j,r} = 2\mathbf{A}_{1,j} + 2\mathsf{tr} && \forall j \in \{2, \dots, n\} \tag{10} \label{eq:main2} \\
    & 4 \sum_{r = 1}^{R} t_{i,i,r} = \mathbf{A}_{i,i} + 2\mathsf{tr} + 2\mathbf{A}_{1,i} && \forall i \in \{2, \dots, n\} \tag{11} \label{eq:main3} \\
    & 4 \sum_{r = 1}^{R} t_{i,j,r} = \mathbf{A}_{i,j} + \mathsf{tr} + \mathbf{A}_{1,i} + \mathbf{A}_{1,j} && \forall i,j \in \{2, \dots, n\},\ i < j \tag{12} \label{eq:main4} \\
    & \mathbf{P}'_{r,1} = 1 && \forall r \in [R] \tag{13} \label{eq:firstcol1} \\
    & \mathsf{rn}_r = \sum_{i = 1}^{n} 2^{i-1} \cdot \mathbf{P}'_{r,i} && \forall r \in [R] \tag{14} \label{eq:rownumcalc} \\
    & \mathsf{rn}_r \le \mathsf{rn}_{r+1} - 2 && \forall r \in [R-1] \tag{15} \label{eq:rownumlex} \\
    & \sum_{i = 1}^{n} \mathbf{P}'_{r,i} \ge 2 && \forall r \in \{2, \dots, R\} \tag{16} \label{const:more2} \\
    & \mathbf{P}'_{r,i} + \mathbf{P}'_{r,j} + \mathbf{P}'_{r,k} - z_{i,j,r} - z_{i,k,r} - z_{j,k,r} \le 1 && \forall i,j,k \in [n], i < j < k,\ r \in [R] \tag{17} \label{eq:padberg1} \\
    & \sum_{r = 1}^{R} b_r \ge n - \max_{\lambda \in \text{spec}(\mathbf{A})} \operatorname{mult}(\lambda) \tag{18} \label{eq:lbcut} \\
    & \mathbf{P}'_{r,i},\ z_{i,j,r},\ b_r \in \{0, 1\} && \forall r \in [R],\ i, j \in [n],\ i \le j \tag{19}
\end{align}

\section{Results \label{sec:exp}}

All experiments were conducted on a Dell Precision 3660 Tower equipped with an Intel Core i7-12700 processor (12 cores) and 32~GB of RAM. We used Gurobi 11.0.3 as the MIP solver. Both the proposed CMIPGC model and the baseline formulation of Rajakumar et al.~\cite{rajakumar2022generating} were run with a time limit of 3600 seconds. For CMIPGC, we disabled flow-cover cuts and MIR cuts, and set the heuristic intensity parameter to 0.005 to prioritize model-driven exploration.

We evaluated the models on 34 Erdős–Rényi graphs with vertex counts ranging from 4 to 20 (two graphs per vertex count). These instances are consistent with those used in Figure~\ref{fig:union-stars-double}, where we compare Union of Stars and Double-Star constructions. Graphs 14-1 and 18-2 were excluded from the results due to being empty.

The comparison of CMIPGC and the baseline model is shown in Table~\ref{tab:MIP-comp}. The baseline formulation has a size of $\mathcal{O}(2^n)$ in both variables and constraints, while CMIPGC grows polynomially with $\mathcal{O}(n^3)$. As expected, the baseline model performs well on small instances, but becomes intractable as $n$ increases. In contrast, CMIPGC scales more gracefully and consistently outperforms the baseline on larger graphs with more vertices and edges.

The full implementation and dataset used in our experiments are publicly available at our \href{https://github.com/SaberDinp/GraphCoupling.git}{GitHub repository}.

\begin{table}
    \centering
    \scriptsize
    \begin{tabular}{lcc|cccc|cccc}
        \hline
        Graph & \#nodes & \#edges & \multicolumn{4}{c}{Baseline} & \multicolumn{4}{c}{CMIPGC} \\
        \cline{4-7} \cline{8-11}
        & & & Primal bound & Dual bound & Gap(\%) & Time(s) & Primal bound & Dual bound & Gap(\%) & Time(s) \\
        \hline
4-1 & 4 & 3 & 5.0 & 5.0 & 0.0 & \textbf{0.0} & 5.0 & 5.0 & 0.0 & 0.05 \\
4-2 & 4 & 2 & 2.0 & 2.0 & 0.0 & \textbf{0.0} & 2.0 & 2.0 & 0.0 & 0.01 \\
5-1 & 5 & 6 & 4.0 & 4.0 & 0.0 & \textbf{0.01} & 4.0 & 4.0 & 0.0 & 0.45 \\
5-2 & 5 & 10 & 1.0 & 1.0 & 0.0 & \textbf{0.0} & 1.0 & 1.0 & 0.0 & 0.01 \\
6-1 & 6 & 8 & 7.0 & 7.0 & 0.0 & \textbf{0.77} & 7.0 & 7.0 & 0.0 & 1516.89 \\
6-2 & 6 & 6 & 6.0 & 6.0 & 0.0 & \textbf{0.17} & 6.0 & 6.0 & 0.0 & 107.2 \\
7-1 & 7 & 2 & 6.0 & 6.0 & 0.0 & \textbf{0.52} & 6.0 & 6.0 & 0.0 & 94.78 \\
7-2 & 7 & 17 & 6.0 & 6.0 & 0.0 & \textbf{0.95} & 6.0 & 6.0 & 0.0 & 236.93 \\
8-1 & 8 & 19 & 9.0 & 7.0 & 22.2 & 3600 & 9.0 & 7.0 & 22.2 & 3600 \\
8-2 & 8 & 8 & 7.0 & \textbf{7.0} & \textbf{0.0} & \textbf{92.04} & 7.0 & 6.0 & 14.28 & 3600 \\
9-1 & 9 & 21 & \textbf{10.0} & 3.0 & 70 & 3600 & 12.0 & \textbf{8.0} & \textbf{33.33} & 3600 \\
9-2 & 9 & 21 & \textbf{10.0} & 3.0 & 70 & 3600 & 11.0 & \textbf{8.0} & \textbf{27.27} & 3600 \\
10-1 & 10 & 22 & \textbf{11.0} & 2.0 & 81.81 & 3600 & 13.0 & \textbf{9.0} & \textbf{30.76} & 3600 \\
10-2 & 10 & 11 & \textbf{10.0} & 3.0 & 70 & 3600 & 12.0 & \textbf{8.0} & \textbf{33.33} & 3600 \\
11-1 & 11 & 51 & \textbf{8.0} & 3.0 & 62.5 & 3600 & 9.0 & \textbf{7.0} & \textbf{22.22} & 3600 \\
11-2 & 11 & 11 & \textbf{10.0} & 3.0 & 70 & 3600 & 11.0 & \textbf{8.0} & \textbf{27.27} & 3600 \\
12-1 & 12 & 16 & 28.0 & 2.0 & 92.85 & 3600 & \textbf{19.0} & \textbf{11.0} & \textbf{42.10} & 3600 \\
12-2 & 12 & 31 & 50.0 & 2.0 & 96 & 3600 & \textbf{17.0} & \textbf{11.0} & \textbf{35.29} & 3600 \\
13-1 & 13 & 56 & 49.0 & 1.0 & 97.95 & 3600 & \textbf{20.0} & \textbf{12.0} & \textbf{40} & 3600 \\
13-2 & 13 & 73 & \textbf{10.0} & 2.0 & 80 & 3600 & 11.0 & \textbf{7.0} & \textbf{36.36} & 3600 \\
14-2 & 14 & 6 & 15.0 & 2.0 & 86.66 & 3600 & \textbf{14.0} & \textbf{8.0} & \textbf{42.85} & 3600 \\
15-1 & 15 & 82 & 92.0 & 1.0 & 98.91 & 3600 & \textbf{37.0} & \textbf{14.0} & \textbf{62.16} & 3600 \\
15-2 & 15 & 31 & 89.0 & 1.0 & 98.87 & 3600 & \textbf{37.0} & \textbf{14.0} & \textbf{62.16} & 3600 \\
16-1 & 16 & 49 & 111.0 & 1.0 & 99.09 & 3600 & \textbf{27.0} & \textbf{15.0} & \textbf{44.44} & 3600 \\
16-2 & 16 & 118 & \textbf{4.0} & 1.0 & 75 & 3600 & 5.0 & \textbf{3.0} & \textbf{40} & 3600 \\
17-1 & 17 & 94 & 121.0 & 1.0 & 99.17 & 3600 & \textbf{41.0} & \textbf{16.0} & \textbf{60.97} & 3600 \\
17-2 & 17 & 51 & 128.0 & 1.0 & 99.21 & 3600 & \textbf{41.0} & \textbf{16.0} & \textbf{60.97} & 3600 \\
18-1 & 18 & 132 & 108.0 & 1.0 & 99.07 & 3600 & \textbf{42.0} & \textbf{15.0} & \textbf{64.28} & 3600 \\
19-1 & 19 & 150 & 143.0 & 1.0 & 99.30 & 3600 & \textbf{21.0} & \textbf{15.0} & \textbf{28.57} & 3600 \\
19-2 & 19 & 151 & 108.0 & 1.0 & 99.07 & 3600 & \textbf{32.0} & \textbf{13.0} & \textbf{59.37} & 3600 \\
20-1 & 20 & 57 & inf & 0.0 & inf & 3600 & \textbf{48.0} & \textbf{19.0} & \textbf{60.41} & 3600 \\
20-2 & 20 & 7 & inf & 1.0 & inf & 3600 & \textbf{11.0} & \textbf{8.0} & \textbf{27.27} & 3600 \\
        \hline
    \end{tabular}
    \caption{Comparison of CMIPGC with baseline MIP for 32 Erdős–Rényi graphs.}
    \label{tab:MIP-comp}
\end{table}

\section{Conclusion \label{sec:conc}}

In this work, we focused on the Graph Coupling problem for unweighted graphs. We improved the combinatorial construction of Rajakumar et al.~\cite{rajakumar2022generating}, reducing the upper bound on the graph coupling number from \(3n - 2\) to \(2.5n + 2\) for any graph with \(n\) vertices. Furthermore, we established the order-optimality of both constructions by identifying a family of graphs—namely, path graphs—for which the graph coupling number grows linearly with the number of vertices.

Beyond these theoretical results, we derived tighter bounds for specific graph families, which may provide insight into future constructions applicable to general weighted graphs. We also introduced a new compact mixed integer programming formulation (CMIPGC) with polynomial size in the input graph. This formulation outperforms the exponential-size baseline MIP of Rajakumar et al.~\cite{rajakumar2022generating} on larger graphs. Our approach leverages a strong linear-algebraic lower bound as a cutting plane and uses the combinatorial construction as a warm start, leading to substantial gains in Gurobi’s heuristic performance.

This line of research opens several avenues for future exploration:
\begin{itemize}
    \item Empirically, the all-ones row appears in some optimal matrix \( \mathbf{P} \) for all tested unweighted graphs. Proving that such a row is always present would imply symmetry in the graph coupling number, i.e., \( gc(G) = gc(\overline{G}) \).
    \item The diagonal entries of optimal matrices \( \mathbf{W} \) consistently fall within a small set of rational values such as \( \pm 1 \), \( \pm \tfrac{3}{4} \), \( \pm \tfrac{1}{2} \), \( \pm \tfrac{1}{4} \), and 0, suggesting that a much smaller upper bound for the big-\(M\) constant could suffice in practice.

    \item Extending the graph coupling framework to weighted graphs remains a promising direction.
    \item Both MIP formulations exhibit slow convergence in terms of improving bounds, likely due to weak LP relaxations. Strengthening these relaxations could lead to significantly faster solution times.
    \item Finally, the computational structure of the problem—minimizing the \(\ell_0\) norm under linear constraints—resembles compressive sensing. Investigating the computational complexity of the Graph Coupling problem and determining whether it is NP-hard is an important open question.
\end{itemize}

\section{Acknowledgment}
This work is partially funded by Office of Naval Research grant ONR-N000142412648.
\appendix
\section{Proofs of Theorems \ref{thm:gc=2} and \ref{thm:gc!=3} \label{sec:appen-proofs}}

\begin{proof}[\textbf{Proof of Theorem \ref{thm:gc=2}}]
\noindent
\textbf{($\Longleftarrow$)}\quad
Suppose \( G = K_a \cup K_b \) for some \( (a, b) \neq (1, 1) \). Then we can construct \( G \) using two spin bicliques: \( SB_{a, b} \) with weight \( \frac{1}{2} \), and \( SB_{a+b, 0} \) with weight \( \frac{1}{2} \). Thus, \( gc(G) \le 2 \). Since at least one of \( a \) or \( b \) is greater than 1, the graph is neither complete nor empty. By Theorem~\ref{thm:gc=1}, it follows that \( gc(G) = 2 \).

Similarly, if \( G = K_{a, b} \) for some \( (a, b) \neq (1, 1) \), we can use \( SB_{a, b} \) with weight \( \frac{-1}{2} \) and \( SB_{a+b, 0} \) with weight \( \frac{1}{2} \) to construct \( G \). Again, since \( gc(G) > 1 \), we conclude \( gc(G) = 2 \).

\smallskip
\noindent
\textbf{($\Longrightarrow$)}\quad
Assume \( gc(G) = 2 \). Then \( G \) can be expressed as a weighted sum of two spin bicliques with weights \( w_1 \) and \( w_2 \). Each spin biclique partitions the vertex set into two subsets; hence, two such bicliques partition the vertex set into four distinct regions.

Let the first biclique with weight \( w_1 \) induce the partition \( \{ \text{reg}_1 \cup \text{reg}_3, \text{reg}_2 \cup \text{reg}_4 \} \), and the second with weight \( w_2 \) induce \( \{ \text{reg}_1 \cup \text{reg}_2, \text{reg}_3 \cup \text{reg}_4 \} \). The edge weights between regions are illustrated in Figure~\ref{fig:gc=2}, where the weight between vertices within the same region is \( w_1 + w_2 \), by Corollary~\ref{cor:colclass}.

\begin{figure}[H]
    \centering
    \includegraphics[width=0.5\linewidth]{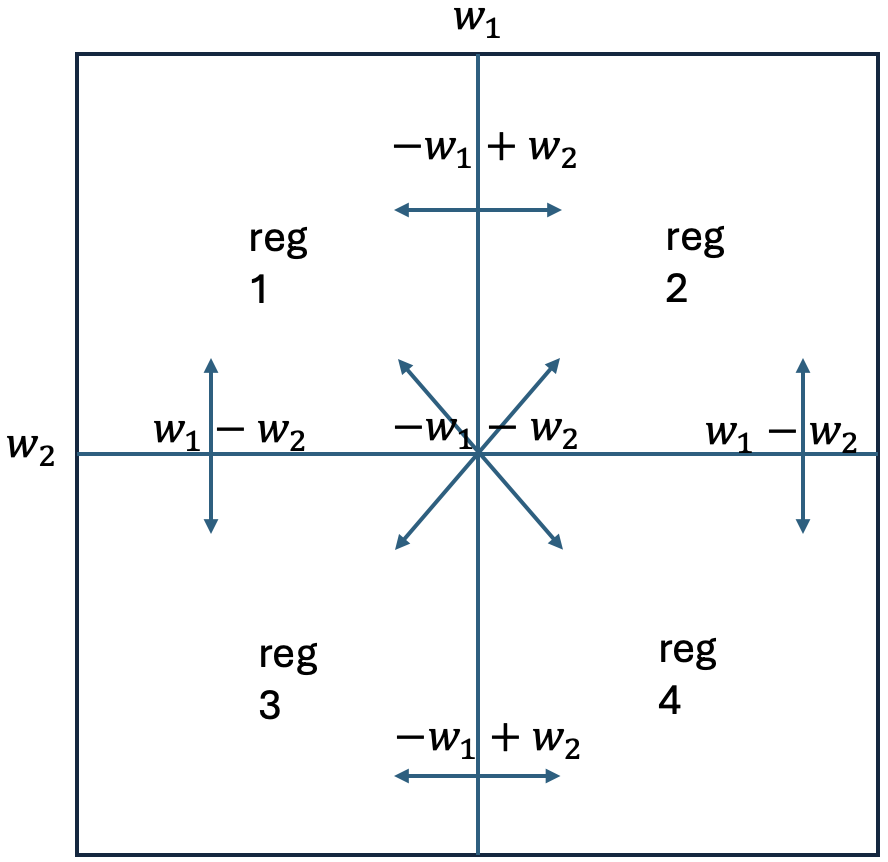}
    \caption{Each SB partitions the vertex set into two parts; thus two SBs induce four distinct regions. Edge weights between regions are shown in terms of \( w_1 \) and \( w_2 \). The weight of edges within a region is \( w_1 + w_2 \) as given in Corollary~\ref{cor:colclass}.}
    \label{fig:gc=2}
\end{figure}

We consider all possible non-empty region configurations:

\begin{itemize}
    \item \textbf{Single non-empty region:} If only one region is non-empty, the two SBs are identical and can be merged into one, contradicting \( gc(G) = 2 \).
    
    \item \textbf{Two non-empty regions:} If the two regions are opposite (e.g., reg\(_1\) and reg\(_4\)), the SBs are again identical. The only non-trivial case is when two adjacent regions are non-empty—without loss of generality, assume reg\(_1\) and reg\(_2\).
    \begin{itemize}
        \item If both contain only one vertex, the graph is either \( K_2 \) or \( \overline{K_2} \), both of which have \( gc = 1 \).
        \item If at least one region has more than one vertex, possible edge weights are \( w_1 + w_2 \in \{0, 1\} \) and \( -w_1 + w_2 \in \{0, 1\} \), resulting in graphs \( K_n \), \( \overline{K_n} \), \( K_{a,b} \), or \( K_a \cup K_b \). Since the first two have \( gc = 1 \), the valid cases are bicliques or unions of cliques.
    \end{itemize}

    \item \textbf{Three non-empty regions:} Without loss of generality, assume reg\(_1\), reg\(_2\), and reg\(_3\) are non-empty. This implies \( w_1 - w_2 = w_2 - w_1 = 0 \), i.e., \( w_1 = w_2 \).
    \begin{itemize}
        \item If each region contains a single vertex, we get graphs like \( K_2 \cup K_1 \) or \( \overline{K_3} \); the latter has \( gc = 1 \).
        \item If one region has more than one vertex, then \( w_1 = w_2 \) and \( w_1 + w_2 = 0 \), implying \( w_1 = w_2 = 0 \), and the graph is empty—contradicting \( gc(G) = 2 \).
    \end{itemize}

    \item \textbf{All four regions are non-empty:}
    \begin{itemize}
        \item If each region contains one vertex, then \( w_1 = w_2 \) and \( -w_1 - w_2 \in \{0, 1\} \). If \( -w_1 - w_2 = 0 \), then the graph is empty (contradiction). If \( -w_1 - w_2 = 1 \), then \( w_1 = w_2 = -\frac{1}{2} \), resulting in \( K_2 \cup K_2 \).
        \item If at least one region has more than one vertex, then the edge weight conditions imply \( w_1 = w_2 = 0 \), which again corresponds to an empty graph—contradicting \( gc(G) = 2 \).
    \end{itemize}
\end{itemize}

In all valid cases, \( G \) must be either a biclique \( K_{a,b} \) or a disjoint union \( K_a \cup K_b \), with \( a, b \ge 1 \) and \( (a, b) \neq (1, 1) \), completing the proof.
\end{proof}

\begin{proof}[\textbf{Proof of Theorem \ref{thm:gc!=3}}]

Assume for the sake of contradiction that \( gc(G) = 3 \). Then the three optimal spin bicliques induce a 3D hypercube structure, partitioning the vertex set into 8 distinct regions corresponding to the corners of a cube. We analyze all possible cases where between 1 and 8 corners are occupied and demonstrate that each case leads to a contradiction.

\begin{itemize}
    \item \textbf{1 or 2 corners occupied:} In this case, at least one SB is redundant, since two corners lie on a plane of the cube, and two of the SBs must be identical. Thus, the graph can be represented using only two SBs, contradicting \( gc(G) = 3 \).

    \item \textbf{3 corners occupied:} The only graphs with fewer than four vertices are \( K_1, K_2, \overline{K_2}, K_3, \overline{K_3}, K_1 \cup K_2 \), and \( K_{1,2} \), all of which have \( gc \leq 2 \). Thus, any graph requiring three spin bicliques must have at least four vertices. By the pigeonhole principle, if three corners of the cube are occupied by more than three vertices in total, then at least one corner must contain more than one vertex. Up to rotation, there are exactly three distinct configurations of three occupied corners:

    \begin{itemize}
        \item \textbf{Configuration (1, 1, 2)} (Fig.~\ref{fig:3v1-3d}): Corresponds to the linear system~\eqref{eq:system1}.
        \begin{figure}[H]
            \centering
            \includegraphics[width=0.2\textwidth]{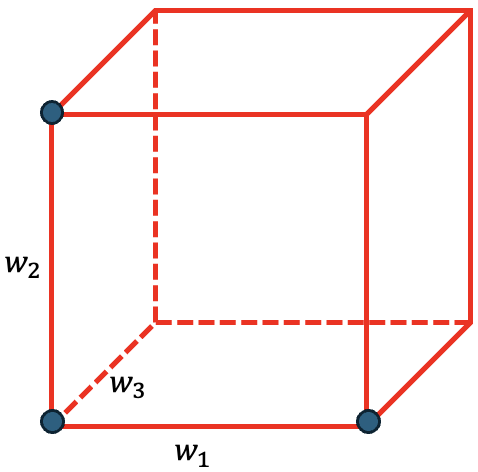}
            \caption{Three occupied corners with Manhattan distances 1, 1, 2.}
            \label{fig:3v1-3d}
        \end{figure}

        \begin{equation}
            \begin{bmatrix}
                1 & 1 & 1 \\
                -1 & 1 & 1 \\
                1 & -1 & 1 \\
                -1 & -1 & 1 \\
            \end{bmatrix}
            \begin{bmatrix}
                w_1 \\
                w_2 \\
                w_3 \\
            \end{bmatrix}
            =
            \begin{bmatrix}
                0 \,|\, 1 \\
                0 \,|\, 1 \\
                0 \,|\, 1 \\
                0 \,|\, 1 \\
            \end{bmatrix}
            \label{eq:system1}
        \end{equation}

        The coefficient matrix is full column rank, and brute force over all 16 RHS combinations shows that no solution \( \mathbf{w} \) exists with all entries nonzero. Thus, this configuration is forbidden.

        \item \textbf{Configuration (1, 2, 3)} (Fig.~\ref{fig:3v2-3d}): Corresponds to system~\eqref{eq:system2}.
        \begin{figure}[H]
            \centering
            \includegraphics[width=0.2\textwidth]{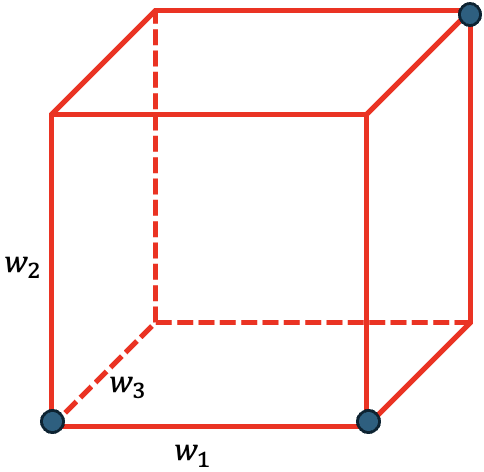}
            \caption{Three occupied corners with Manhattan distances 1, 2, 3.}
            \label{fig:3v2-3d}
        \end{figure}

        \begin{equation}
            \begin{bmatrix}
                1 & 1 & 1 \\
                -1 & 1 & 1 \\
                1 & -1 & -1 \\
                -1 & -1 & -1 \\
            \end{bmatrix}
            \begin{bmatrix}
                w_1 \\
                w_2 \\
                w_3 \\
            \end{bmatrix}
            =
            \begin{bmatrix}
                0 \,|\, 1 \\
                0 \,|\, 1 \\
                0 \,|\, 1 \\
                0 \,|\, 1 \\
            \end{bmatrix}
            \label{eq:system2}
        \end{equation}

        Brute force verifies that no such \( \mathbf{w} \) with all nonzero entries satisfies any RHS other than the trivial all-zeros (which corresponds to the empty graph with \( gc = 1 \)). Thus, this configuration is also forbidden.

        \item \textbf{Configuration (2, 2, 2)} (Fig.~\ref{fig:3v3-3d}): Corresponds to system~\eqref{eq:system3}.
        \begin{figure}[H]
            \centering
            \includegraphics[width=0.2\textwidth]{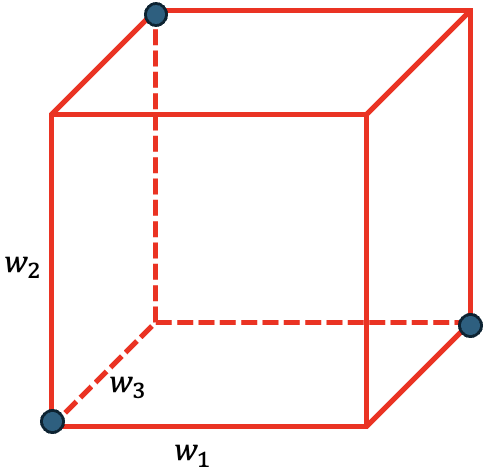}
            \caption{Three occupied corners with pairwise Manhattan distances 2.}
            \label{fig:3v3-3d}
        \end{figure}

        \begin{equation}
            \begin{bmatrix}
                1 & 1 & 1 \\
                -1 & -1 & 1 \\
                -1 & 1 & -1 \\
                1 & -1 & -1 \\
            \end{bmatrix}
            \begin{bmatrix}
                w_1 \\
                w_2 \\
                w_3 \\
            \end{bmatrix}
            =
            \begin{bmatrix}
                0 \,|\, 1 \\
                0 \,|\, 1 \\
                0 \,|\, 1 \\
                0 \,|\, 1 \\
            \end{bmatrix}
            \label{eq:system3}
        \end{equation}

        Again, no feasible solution exists with all \( w_i \neq 0 \) for any nontrivial RHS, so this configuration is forbidden as well.
    \end{itemize}

    \item \textbf{4 corners occupied:} 
    \begin{itemize}
        \item Suppose at least one corner contains more than one vertex. If any dimension exhibits a 3-to-1 or 4-to-0 split among the occupied corners, then the configuration must include one of the previously excluded cases—either (1,1,2) or (2,2,2)—which are known to be forbidden. Therefore, the only admissible configuration is a (2,2) split along each dimension, corresponding to the layouts shown in Fig.~\ref{fig:4v-1} and Fig.~\ref{fig:4v-2}. However, Fig.~\ref{fig:4v-1} leads to a redundant spin biclique, while Fig.~\ref{fig:4v-2} includes the forbidden (2,2,2) structure. Hence, no valid configuration remains in this case.

        \item If each of the four corners contains exactly one vertex, then among the 11 possible graphs on 4 vertices, none has \( gc = 3 \), as verified by the brute-force Algorithm~\ref{alg:gcbruteforce}.
        \begin{figure}[H]
            \centering
            \includegraphics[width=0.2\linewidth]{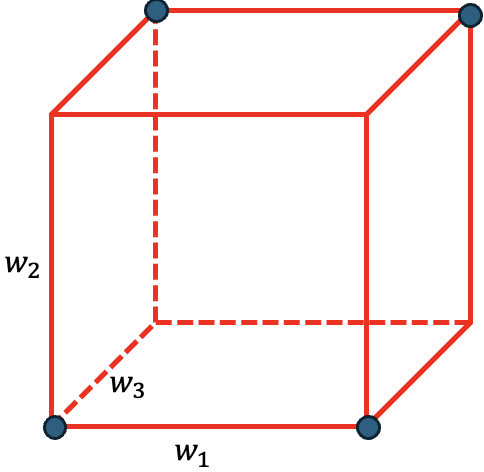}
            \caption{First (2,2)-split configuration in all dimensions. One SB is redundant.}
            \label{fig:4v-1}
        \end{figure}
        \begin{figure}[H]
            \centering
            \includegraphics[width=0.2\linewidth]{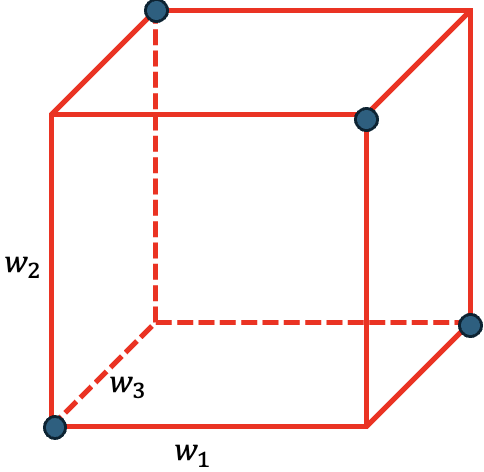}
            \caption{Second (2,2)-split configuration contains forbidden (2,2,2).}
            \label{fig:4v-2}
        \end{figure}
    \end{itemize}

    \item \textbf{5--8 corners occupied, with at least one corner containing multiple vertices:} Any such configuration necessarily contains one of the forbidden structures (1,1,2) or (2,2,2), hence is invalid.

    \item \textbf{5 corners occupied, each with one vertex:} We identify two new forbidden 4-corner configurations—Figs.~\ref{fig:4forbid1} and~\ref{fig:4forbid2}—which admit no solution to the corresponding linear systems for any of the 64 RHSs. Any configuration with 5 singleton-occupied corners is the complement of one of the 3-corner configurations previously discussed, and thus must contain one of these forbidden 4-corner structures.

    \begin{figure}[H]
        \centering
        \includegraphics[width=0.2\linewidth]{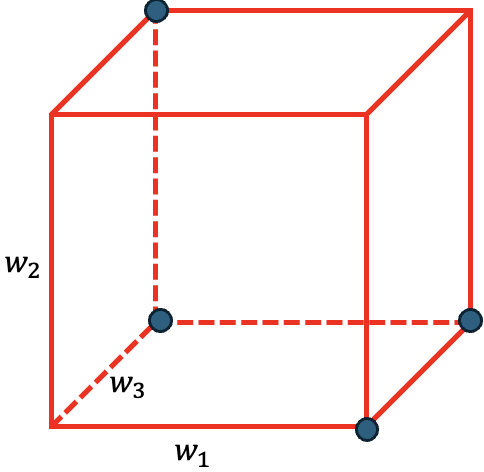}
        \caption{Forbidden structure 1: Four singleton corners with infeasible SB configuration.}
        \label{fig:4forbid1}
    \end{figure}
    \begin{figure}[H]
        \centering
        \includegraphics[width=0.2\linewidth]{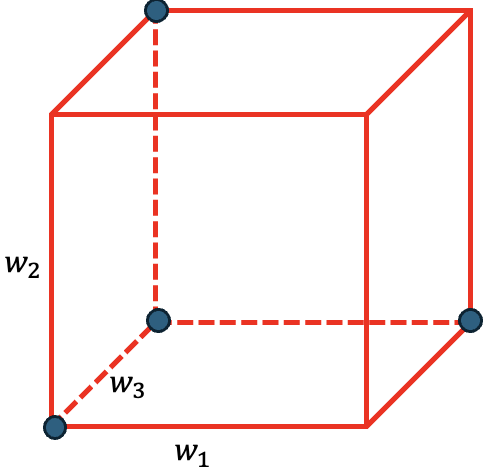}
        \caption{Forbidden structure 2: Another 4-corner configuration that cannot be covered by 3 SBs.}
        \label{fig:4forbid2}
    \end{figure}

    \item \textbf{6–8 singleton-occupied corners:} All such cases necessarily contain forbidden structure~\ref{fig:4forbid1} and are thus invalid.
\end{itemize}

In all possible configurations, we reach a contradiction to the assumption that \( gc(G) = 3 \). Therefore, there is no graph with graph coupling number equal to 3.
\end{proof}

\section{Brute Force Algorithm for Finding \texorpdfstring{$gc$}{gc} \label{sec:appen-bruteforce}}

The following is a straightforward brute-force algorithm for computing the graph coupling number \( gc(G) \). It is practical for small graphs, where full enumeration of possible \( \mathbf{P} \) matrices is computationally feasible. The algorithm incrementally tests increasing values of \( k \), generating all possible \( \{-1, 1\}^{k \times n} \) matrices with the first column fixed as \( \mathds{1} \), and checks whether a feasible weight vector \( \mathbf{w} \) exists such that \( \mathbf{P}^\top \mathbf{W} \mathbf{P} \odot \mathbf{J} = \mathbf{A} \).

\begin{algorithm}[H]
    \caption{Brute-Force Algorithm for Graph Coupling}
    \begin{algorithmic}[1]
        \State \textbf{Input:} Graph \( G = (V, E) \) with adjacency matrix \( \mathbf{A} \in \{0, 1\}^{n \times n} \)
        \State \( k \gets 1 \)
        \While{\( k \le 2.5n + 2 \)} 
            \ForAll{matrices \( \mathbf{P} \in \{-1, 1\}^{k \times n} \) with \( \mathbf{P}_1 = \mathds{1} \), enumerated in lexicographic row order}
                \State Initialize \( \mathbf{B} \in \mathbb{R}^{\binom{n}{2} \times k} \), \( \mathbf{c} \in \mathbb{R}^{\binom{n}{2}} \)
                \State \( r \gets 1 \)
                \For{\( i = 1 \) to \( n \)}
                    \For{\( j = i+1 \) to \( n \)}
                        \State \( \mathbf{B}_r \gets (\mathbf{P}_{i} \odot \mathbf{P}_{j})^\top \)
                        \State \( \mathbf{c}_r \gets \mathbf{A}_{i,j} \)
                        \State \( r \gets r + 1 \)
                    \EndFor
                \EndFor
                \If{the linear system \( \mathbf{B} \mathbf{w} = \mathbf{c} \) has a solution}
                    \State \Return \( (\mathbf{P}, \mathbf{w}) \) as an optimal solution
                \EndIf
            \EndFor
            \State \( k \gets k + 1 \)
        \EndWhile
    \end{algorithmic}
    \label{alg:gcbruteforce}
\end{algorithm}

\bibliographystyle{plain}
\bibliography{apstemplate}

\end{document}